\setlist[itemize]{nosep}
\setlist[enumerate]{nosep}
\tikzset{>=latex}
\definecolor{color1}{HTML}{3f31d6}
\definecolor{color2}{HTML}{d63152}
\definecolor{color3}{HTML}{31d673}
\definecolor{mppetrol}{cmyk}{1,.26,.45,.16}
\definecolor{mpgreen}{cmyk}{.35,0,.99,0}
\definecolor{mpgray}{gray}{.261}
\definecolor{myblue}{cmyk}{0, 0.45, 0.90, 0.17}
\definecolor{mpblue}{HTML}{03a6ff}
\definecolor{mpred}{HTML}{ff03a6}
\tikzstyle{node}=[circle,fill=mpgreen,inner sep=.5ex,minimum width=3ex, outer sep=.5mm]
\def\sx{4}
\def\sy{.75}
\tikzstyle{objectivefunction}=[mppetrol, very thick]
\newcommand{\sctau}[2]{(\textcolor{mpblue}{#1}, \textcolor{mpred}{#2})^{\top}}
\newtheorem{theorem}{Theorem}
\newtheorem{lemma}[theorem]{Lemma}
\newtheorem{claim}[theorem]{Claim}
\newtheorem{corollary}[theorem]{Corollary}
\newtheorem{definition}[theorem]{Definition}
\crefname{claim}{Claim}{Claims}
\renewcommand{\vec}[1]{
\begingroup
\let\alpha\upalpha
\let\beta\upbeta
\let\theta\uptheta
\let\lambda\uplambda
\let\mu\upmu
\let\tau\uptau
\bm{\mathbf{#1}}
\endgroup
}
\newcommand{\mfrac}[2]{#1 / #2}
\newcommand{\N}{\mathbb{N}}
\newcommand{\R}{\mathbb{R}}
\newcommand{\MS}{M}
\newcommand{\FV}{F}
\newcommand{\sinkInflow}{f^-}
\newcommand{\numContributingLinks}[1][\vec{\mu}]{\@ifstar{\eta}{\eta_{s}(#1)}}
\newcommand{\exitTime}{\omega}
\newcommand{\eps}{\varepsilon}
\newcommand{\opt}{\textsc{Opt}}
\newcommand{\arrival}{u}
\newcommand{\s}{d}
\newcommand{\vprior}{\vec{\lambda}^*}
\newcommand{\prior}{\lambda^*}
\newcommand{\alg}{\textsc{Alg}}
\title{
Optimizing Throughput and Makespan of Queuing Systems\\ by Information Design
}
\author{Svenja M.~Griesbach \and Max Klimm \and Philipp Warode \and Theresa Ziemke}
\date{}
\author{
Svenja M.\ Griesbach\thanks{Technische Universität Berlin, Institute for Mathematics, Email: \texttt{griesbach@math.tu-berlin.de}}.\and
Max Klimm\thanks{Technische Universität Berlin, Institute for Mathematics, Email: \texttt{klimm@tu-berlin.de}} \and
Philipp Warode\thanks{Humbold-Universität Berlin, School of Business and Economics,
Email: \texttt{philipp.warode@hu-berlin.de}.} \and 
Theresa Ziemke\thanks{Technische Universität Berlin, Institute for Mathematics and Technische Universität Berlin, Transport Systems Planning and Transport Telematics, Email: \texttt{tziemke@vsp.tu-berlin.de}}
}
\begin{document}

\maketitle

\begin{abstract}
We study the optimal provision of information for two natural performance measures of queuing systems: throughput and makespan. A set of parallel links (queues) is equipped with deterministic capacities and stochastic travel times where the latter depend on a realized scenario.
A continuum of flow particles (users) arrives at the system at a constant rate. 
A system operator knows the realization of the scenario and may (partially) reveal this information via a public signaling scheme to the flow particles.
Upon arrival, the flow particles observe the signal issued by the system operator, form an updated belief about the realized scenario, and decide on a link to use.
Inflow into a link exceeding the link's capacity builds up in a queue that increases the travel time on the link. Dynamic inflow rates are in a Bayesian dynamic equilibrium when the expected travel time along all links with positive inflow is equal at every point in time.

For a given time horizon $T$, the throughput induced by a signaling scheme is the total volume of flow that leaves the links in the interval $[0,T]$.
We show that the public signaling scheme that maximizes the throughput may involve irrational numbers and provide an additive polynomial time approximation scheme (PTAS) that approximates the optimal throughput by an arbitrary additive constant $\eps>0$. The algorithm solves a Langrangian dual of the signaling problem with the Ellipsoid method whose separation oracle is implemented by a cell decomposition technique. We also provide a multiplicative fully polynomial time approximation scheme (FPTAS) that does not rely on strong duality and, thus, allows to compute also the optimal signals. It uses a different cell decomposition technique together with a piece-wise convex under-estimator of the optimal value function.

Finally, we consider the makespan of a Bayesian dynamic equilibrium which is defined as the last point in time when a total given value of flow leaves the system. Using a variational inequality argument, we show that full information revelation is a public signaling scheme that minimizes the makespan.
\end{abstract}

\thispagestyle{empty}
\newpage
\setcounter{page}{1}

\section{Introduction}

Imagine you are managing an airport with several security lanes. Should you inform the passengers about the current state of the system in order to maximize its performance? On the one hand, making passengers aware of less congested lanes may decrease their time in the queue. On the other hand, providing this information may cause inefficiencies since lanes with large delay are not used to capacity.

In this paper, we study this question using the framework of Bayesian persuasion for a dynamic queuing model that has been first studied (in a more basic and deterministic setting) by \citet{Vickrey69}.
Specifically, we consider a set $[m] \coloneqq \{1,\dots,m\}$ of parallel links. The state of each link is stochastic and depends on a realized scenario $s \in [d]$ where $d$ is a constant.
First, each link~$i$ has a scenario-independent \emph{capacity} $\nu_i \in \mathbb{Q}_{>0}$.
Second, each link~$i$ has a \emph{travel time} $\tau_{i,s} \in \mathbb{R}_{\geq 0}$ that depends on the realized scenario~$s \in [d]$.
Initially, the links have no queues.
A steady flow of infinitesimally small users arrives at the links with a constant inflow rate  of $\arrival \in \mathbb{Q}_{>0}$. Upon arrival, each of the flow particles chooses a link leading to an inflow rate of $f_i(\theta)$ into each link~$i$ for all times $\theta$.
When at some time~$\theta$ the inflow rate $f_i(\theta)$ into a link~$i$ exceeds its capacity $\nu_i$, a queue builds up at link~$i$ at a rate of $f_i(\theta) - \nu_i$.
When at some time~$\theta$ the inflow rate into a link~$i$ is less than the capacity~$\nu_i$ and the link has a queue, the queue depletes at a rate of $\nu_i - f_i(\theta)$. 
This queuing dynamics leads to  uniquely defined queue lengths $z_i(\theta)$ for link~$i$ at time~$\theta$.
The total delay of a flow particle that arrives at the system at time $\theta$ and chooses link~$i$ when the realized scenario is $s$ is given by $\tau_{i,s} + z_i(\theta) / \nu_i$.

 The flow particles, however, do not know the realized scenario~$s$ and, instead, form only stochastic beliefs about the state of the system. Formally, a belief is a vector $\smash{\vec \mu = (\mu_1,\dots,\mu_d)^\top \in [0,1]^d}$ such that $\sum_{s\in[\s]} \mu_s = 1$ and $\mu_s$ is the (anticipated) probability that state $s$ is realized. In the following, we denote by $\Delta$ the set of all such beliefs.
All flow particles have initially a true prior belief $\smash{\vec \lambda^* = (\lambda_1^*,\dots,\lambda_d^*)^\top \in \Delta}$, e.g., from previous observations of the system.
If the flow particles do not receive any further information on the state of the system, each of them chooses a link that minimizes their expected delay in the system (where the expectation is taken according to the prior belief $\vec \lambda^*$), i.e., when arriving at time $\theta$ they choose a link~$i$ that minimizes $\sum_{s\in[\s]}
 \lambda^*_s \tau_{i,s} + z_i(\theta)/\nu_i$.
We say that a vector of inflow functions $\vec f = (f_{i}(\cdot))_{i \in [m]}$ is a \emph{dynamic equilibrium} (with respect to the belief~$\vec \lambda^*$) if this property of the particles' behavior is satisfied for almost all $\theta$.

The operator of the system, however, knows the realized scenario and, hence, the travel times of the links and can determine in how far this information should be shared with the flow particles.
To this end, the system operator commits to a \emph{public signaling scheme} $\varphi$.
Such a signaling scheme consists of a finite set of signals $\Xi$, as well as probabilities $(\varphi_{s,\xi})_{s \in [d], \xi \in \Xi}$ where $\varphi_{s,\xi}$ is the probability that scenario~$s$ is realized and signal~$\xi$ is issued. Since the prior belief $\lambda^*_s$ represents the true probability that scenario~$s$ is realized, we have the constraint $\sum_{\xi \in \Xi} \varphi_{s,\xi} = \lambda^*_s$ for all $s \in [\s]$.
When arriving at the system, the flow particles observe the issued signal~$\xi$ (but not the realized scenario~$s$) and perform a Bayesian update of their belief.
In particular, after having received signal $\xi$, the new belief is given by $\vec \mu^{\xi} = (\mu^\xi_1,\dots,\mu^\xi_d)^\top \in [0,1]^d$ defined as $\mu^\xi_s =
\mfrac{\varphi_{s,\xi}}{ \sum_{j \in [d]} \varphi_{j,\xi}}$.
After this Bayesian update, the flow particles choose a link~$i$ that minimizes the updated expected delays given by $\sum_{s\in[\s]} \mu^\xi_s \tau_{i,s} + z_i(\theta)/\nu_i$.
It is a standard observation in the field of Bayesian persuasion that there is a one-to-one correspondence between public signaling schemes and convex decompositions of the prior $\vec \lambda^*$ \citep[cf.][]{Dughmi14}. Specifically, every public signaling scheme yields updated beliefs $(\vec \mu^\xi)_{\xi \in \Xi}$ as well as corresponding probabilities $\alpha_{\xi} \coloneqq \sum_{j \in [d]} \varphi_{j,\xi}$ that signal $\xi$ is issued such that $\smash{\vec \lambda^* = \sum_{\xi \in \Xi} \alpha_{\xi} \vec \mu^{\xi}}$. Conversely, for every such (finite) convex decomposition of the prior, there is a corresponding public signaling scheme with a finite set of signals such that the updated beliefs (after receiving one of the signals) correspond exactly to the beliefs in the convex decomposition.

Suppose that for each belief $\vec \mu\in \Delta$ there was a unique dynamic equilibrium $\vec f(\vec \mu)$. Further assume that the system operator is interested in maximizing a certain functional $H(\vec f)$ of the flow vector $\vec f$.
Then, a natural question is to find the optimal public signaling scheme such that the expected value of the functional of the resulting dynamic equilibrium is optimized.
Using the one-to-one correspondence between public signaling schemes and convex decompositions of the prior, this maximization problem can be phrased as
\begin{align}
\label{eq:generic-optimization}
\sup \Biggl\{ \sum_{\xi \in \Xi} \alpha_{\xi} \,H\bigl(\vec f(\vec \mu^\xi)\bigr) \;\Bigg\vert\;  |\Xi| < \infty, \alpha_{\xi} \in [0,1], \vec \mu^{\xi} \in \Delta \text{ for all }\xi \in \Xi \text{ and } \sum_{\xi \in \Xi} \alpha_{\xi} \vec \mu^{\xi} = \vec \lambda^* \Biggr\}.
\end{align}

\paragraph{Our Results and Techniques}

We study the generic optimization problem \eqref{eq:generic-optimization} for two natural objectives of a system operator. The first objective is to maximize the expected \emph{throughput} of the system.
Informally, for a given time horizon $T \in \mathbb{Q}_{>0}$, the throughput is the amount of flow that has left the links up to time~$T$.
We first give an example exhibiting that the signaling scheme maximizing the throughput may involve irrational numbers, even though all input numbers are rational (cf.~Appendix~\ref{sec:irrational-example}).
To avoid the issue of representing irrational numbers with finite precision, we resort to approximating the maximal throughput that can be achieved by public signaling schemes.
To this end, we first provide an additive polynomial-time approximation scheme (PTAS), i.e., for any $\eps > 0$, we provide an algorithm that runs in polynomial time and computes a value $p$ such that $p \in [\opt-\eps,\opt]$, where $\opt$ is the maximal throughput that can be achieved by public signals (\Cref{thm:ptas-additive}).
We stress that, unlike other PTAS for signaling \citep[e.g.][]{Bhaskar16,Cheng15}, we do \emph{not} require that the functional is normalized, i.e., that $||H||_{\infty} =1$.

To prove the result, we consider a Lagrangian dual of the primal signaling problem and show that strong duality holds.
The proof of strong duality is non-trivial since the objective of the primal is non-convex and non-concave such that standard constraint qualifications such as Slater's cannot be applied. Duality has been used before in the context of signaling by \citet{Bhaskar16}, but they use standard linear programming duality for an approximate version of the primal problem such that they can only show a (small) bound on the duality gap.
Our dual signaling problem has a finite number of variables but an uncountable number of constraints.
Yet, we are able to show that we can solve the separation problem for the dual signaling problem in polynomial time. 
To this end, we show that the separation problem for the dual can be reduced to finding the global maximum of a piecewise quadratic function whose quadratic parts have a polytopal domain. 
Using a cell decomposition technique together with the reverse search algorithm by \citet{Avis96}  allows to compute the global maximum exactly, thus, solving the separation problem.
Finally, we use the Ellipsoid method and the equivalence of optimization and separation to obtain the result.

While the \emph{additive} PTAS yields a compelling approximation of the optimal throughput achievable by signaling, it does not allow to compute the corresponding signals.
The underlying reason for this fact is that the approximately optimal \emph{dual} solution obtained by the Ellipsoid method does not seem to provide any useful information on how approximately optimal \emph{primal} solutions may look like. 
To close this gap, we propose a fully polynomial-time approximation scheme (FPTAS) with a \emph{multiplicative} approximation guarantee that allows to compute the corresponding signals (\Cref{thm:PTAS-throughput}). 
For the multiplicative FPTAS, again a main issue is that the objective is a non-convex and non-concave function on the space of beliefs $\Delta$. We propose a non-uniform discretization of $\Delta$ that leads to a piece-wise convex under-estimator of the objective. By controlling the approximation error of the under-estimator, we are able to compute signals such that the expected throughput $\alg$ achieved by the signals satisfies $\alg \geq (1-\eps)\opt$. 

The second objective that we study is the expected \emph{makespan}. For a given time horizon $T \in \mathbb{Q}_{>0}$, the makespan is the latest point in time a flow particle that departed in the time interval $[0,T]$ leaves the system.
From a mathematical point of view, makespan minimization is challenging since the makespan of the dynamic equilibrium $\vec f(\vec \mu)$ as a function of the belief $\vec \mu$ is non-continuous (see for instance the example given in Appendix~\ref{sec:makespan-example}).
In order to still be able to analyze optimal signaling for the minimization of the makespan, we show a general property for dynamic equilibria.
Suppose we are given a system with a vector of deterministic travel times $\vec \tau = (\tau_i)_{i \in [m]}$. Further let $\vec \tau' = (\tau'_i)_{i \in [m]}$ be a vector of (potentially different) deterministic travel times and let $\vec f(\vec\tau')$ be a dynamic equilibrium where particles act as if the travel times were $\vec\tau'$.
Then, we show that the makespan of the dynamic equilibrium $\vec f(\vec \tau')$ is minimized when $\vec\tau' = \vec\tau$ (up to constant shifts). This implies in particular that full information revelation is always optimal for makespan minimization, i.e., it is optimal to set $\Xi = [d]$ and have $\varphi_{s,\xi} = \lambda^*_s$ whenever $s=\xi$ and $\varphi_{s,\xi} = 0$ otherwise (\Cref{thm:makespan:full-information}).

For illustration, we provide examples for optimal signaling schemes for throughput and makespan optimization in Appendix~\ref{sec:throughput-example} and Appendix~\ref{sec:makespan-example}, respectively.

\paragraph{Related Work}

Optimal signaling for systems with congestion effects has been studied primarily in the static equilibrium model of Wardrop \cite[e.g.][]{Acemoglu18,Das17,NachbarX21,Massicot19,Vasserman15,Wu21}.
For the Wardrop model with affine costs, 
\citet{Bhaskar16} showed that it is $\mathsf{NP}$-hard to compute a public signaling scheme that approximates the total travel time better than a factor of $4/3-\eps$ for all $\eps >0$.
\citet{Griesbach22} proved that optimal information revelation is always optimal if and only if the underlying network is series-parallel and provided an algorithm computing the optimal public signal for parallel links when the number of scenarios and commodities is constant.
For atomic congestion games, \citet{Castiglioni21} studied information design, but considered a different model where players commit to following the signal before they receive it. \cite{Zhou22} computed public and private signals in singleton games with a constant number of resources.

The dynamic flow model that we use here dates back to \citet{Vickrey69}. It has been studied in more detail, e.g., by \citet{Koch11} and \citet{cominetti2015existence}.
\citeauthor{Koch11} also showed that the price of anarchy with respect to the throughput objective, i.e., the worst-case ratio of the throughput of an arbitrary dynamic flow and that of a dynamic equilibrium, is unbounded on general networks.
For the makespan objective, \citet{Bhaskar15} showed that the price of anarchy is $e/(e-1)$ when one is allowed to reduce the capacity of the links arbitrarily (but still compares to the optimal flow for the original capacities).
\citet{Correa22} showed that this bound on the price of anarchy also holds when the inflow rate at the source can be reduced. For parallel link networks (as considered in this work), they showed that the price of anarchy is $4/3$.
For both objectives, bounds on the price of anarchy are relevant for information design since price of anarchy bounds (if they exist) yield an approximation guarantee for the signaling scheme of full information revelation. 
The long-term behavior of dynamic equilibria in this model has been explored by \citet{Cominetti22,Olver21}.
Further variants of the model with multiple commodities, more complicated queuing behavior, or further side constraints have also been studied \citep[e.g.][]{Sering19,Graf20,Israel20,Sering18}, but they do not have any implications for mechanism or information design.
\citet{Oosterwijk022} study a dynamic model on parallel paths where players need to meet a certain time deadline and try to minimize the costs of the used links. They obtain tight bounds on the price of anarchy both for the makespan and the throughput objective.

\section{Preliminaries}

For an integer $m \in \mathbb{N}$, let $[m] \coloneqq \{1,\dots,m\}$.
For $x \in \mathbb{R}$, we denote by $[x]^+ \coloneqq \max\{x,0\}$ the positive part and by $[x]^- \coloneqq \min\{x,0\}$ the negative part of $x$.
We denote vectors and matrices with bold face and assume that all vectors are column vectors.
Further, $\vec e_i$ denotes the $i$-th unit vector, $\vec 1$ the all-ones vector, and $\vec 0$ the all-zeros vector (of the appropriate dimension).
We first introduce the dynamic equilibrium model with deterministic travel times and then turn to the dynamic equilibrium model with stochastic travel times.

\paragraph{Dynamic Equilibrium with Deterministic Travel Times}
Consider a set $[m]$ of links where each link $i \in [m]$ has a \emph{capacity} $\nu_i \in  \mathbb{Q}_{>0}$ and a \emph{travel time} $\tau_i \in \mathbb{Q}_{\geq 0}$.
There is continuum of flow particles arriving at the links with a constant rate of $\arrival \in \mathbb{Q}_{>0}$.
A \emph{flow} is a family of measurable functions $\vec f = (f_i)_{i \in [m]}$ with $f_i \colon \R_{\geq 0} \to \R_{\geq 0}$ satisfying
$\sum_{i \in [m]} f_i (\theta) = \arrival$ for almost all $\theta \geq 0$.
The value $f_i (\theta)$ describes the inflow into link~$i$ at time~$\theta \geq 0$.
Each link operates with the following queuing dynamics: if the inflow into a link $i \in [m]$ is higher than its capacity $\nu_i$, a queue builds up. Particles in the queue are processed with rate $\nu_i$. After passing the queue, it takes an additional amount of time~$\tau_i$ to traverse the link.
We denote by $z_i (\theta)$ the length of the queue at any given time $\theta \geq 0$. The queue dynamics is described via the differential equation
    \begin{equation}\label{eq:queue:dynamics}
        z_i' (\theta) =
        \begin{cases}
            f_i (\theta) - \nu_i &\text{if } z_i(\theta) > 0, \\
            [f_i (\theta) - \nu_i]^+ &\text{if } z_i(\theta) = 0.
        \end{cases}
    \end{equation}
A flow particle that enters link~$i$ at time~$\theta$ waits for time $z_i(\theta) / \nu_i$ in the queue and then experiences a travel time of $\tau_i$. Therefore, the exit time of a flow particle entering link~$i \in [m]$ at time $\theta \in \mathbb{R}_{\geq 0}$ is
$T_i (\theta) \coloneqq \theta + \mfrac{z_i(\theta)}{\nu_i} + \tau_i$.
A flow is a \emph{dynamic equilibrium} if almost all particles have no incentive to use another link, i.e., if
$T_i(\theta) = \min\nolimits_{j \in [m]}  T_{j} (\theta)$ for all $i \in [m]$ with $f_i(\theta) > 0$ for almost all $\theta \geq 0$.
We refer to the set of links with minimal exit time at a given point in time~$\theta$ as the \emph{support}~$S(\theta)$ of the flow and we denote it by
$S(\theta) = \bigl\{ i \in [m] \;\big\vert\; T_i(\theta) = \min_{j \in [m]} T_{j} (\theta) \bigr\}$.
We remark that in general the dynamic equilibrium may not be unique. However, the exit times as well as the supports are~\citep[cf.][]{cominetti2015existence,Olver21}.
For a given time horizon $T > 0$, the \emph{throughput} of a flow $\vec f$ is defined as
\begin{align}
\label{eq:throughput}
\FV_T(\vec f) \coloneqq
    \int_0^T \sum_{i \in [m]} \sinkInflow_i (\theta) \,\mathrm{d}\theta,
\end{align}
where $\sinkInflow_i (\theta)$ is the outflow of link~$i$ at time $\theta$ that can be computed as
\[
    f_i^- \big(\theta + \tau_i\big) =
    \begin{cases}
        \min \{f_i (\theta), \nu_i \} &\text{if } z_i(\theta) = 0, \\
        \nu_i &\text{if } z_i (\theta) > 0.
    \end{cases}
\]
For a given time horizon $T > 0$, the \emph{makespan} of a flow $\vec f$ is defined as 
\begin{align}
\label{eq:makespan}
M_T(\vec f) \coloneqq \sup \bigl\{ T_i(\theta) \;\big\vert\; \theta \in [0,T], i \in [m] \text{ with } f_i(\theta) > 0\bigr\}.
\end{align}

\paragraph{Bayesian Dynamic Equilibrium with Stochastic Travel Times}
Let $d \in \mathbb{N}$ with $d \geq 1$ and consider a set $[d]$ of scenarios. We assume that the travel time  $\tau_{i,s} \in \mathbb{Q}_{\geq 0}$ of each link $i \in [m]$ depends on the scenario~$s \in [d]$, and write $\vec \tau_i = (\tau_{i,1},\dots,\tau_{i,d})^\top$.
The capacities $\nu_i \in \mathbb{Q}_{>0}$ are independent of the scenario.
A vector $\vec{\mu} = (\mu_1, \dotsc, \mu_d)^\top \in [0,1]^d$ such that $\sum_{j \in [d]} \mu_{j} =1$ is called a \emph{belief}, and we denote by $\Delta$ the set of all beliefs. Every belief essentially induces a system with deterministic travel times by replacing the deterministic travel time $\tau_i$ with its expectation $\vec \mu^\top \vec \tau_i$.
With a slight overload of notation, we use the same notation for deterministic and stochastic travel times.
In particular, the expected exit time of a flow particle entering link $i \in [m]$ at time $\theta \in \mathbb{R}_{\geq 0}$ is $T_i(\theta) \coloneqq \theta + z_i(\theta)/\nu_i + \vec \mu^\top \vec \tau_i$. We call a flow a \emph{Bayesian dynamic equilibrium} if $T_i(\theta) = \min_{j \in [m]} T_j(\theta)$.
For a fixed time horizon $T > 0$, the throughput of a flow $\vec f$ in scenario~$s \in [d]$ is defined as $\smash{\FV_{T,s}(\vec f) \coloneqq \int_{0}^T \sum_{i \in [m]} f_{i,s}^-(\theta)\,\mathrm{d}\theta}$ where the outflow of link~$i$ at time~$\theta$ in scenario~$s$ is
\begin{align*}
f_{i,s}^-(\theta + \tau_{i,s}) =
\begin{cases}
\min\{f_i,(\theta), \nu_i\} & \text{ if $z_i(\theta) = 0$},\\
\nu_i & \text{ if $z_i(\theta) > 0$}.
\end{cases}
\end{align*}
The \emph{expected throughput} of a flow $\vec f$ (according to belief $\vec \mu$) is then given by $F_T(\vec f) \coloneqq \sum_{s \in [d]} \mu_s F_{T,s}(\vec f)$.
For a time horizon $T >0$ and a scenario~$s$, the makespan of a flow $\vec f$ in scenario~$s$ is defined as $M_{T,s}(\vec f) \coloneqq \sup \bigl\{ T_i(\theta) \;\vert\; \theta \in [0,T], i \in [m]\}$ and the expected makespan is then $M_T(\vec f) \coloneqq \sum_{s \in d} \mu_s M_{T,s}(\vec f)$.

\paragraph{Information Design}
We assume that the flow particles have a prior belief $\vec \lambda^*$.
As shown by \citet{Dughmi14}, there is a one-to-one corresponence between public signaling schemes and convex decompositions of the prior, and that---due to Caratheodory's Theorem---the convex decomposition requires at most $d$ beliefs.
For a fixed time horizon $T$, let us define $\FV \colon \Delta \to \mathbb{R}_{\geq 0}$ as the expected throughput of a Bayesian dynamic equilibrium for belief $\vec \mu$.
While the equilibrium for a given belief is not unique, the expected exit times are. Therefore, the expected throughput is unique as well and, hence, $\FV(\vec\mu)$ is well-defined. 
Then, in order to compute the public signaling scheme that maximizes the expected throughput, we are interested in solving
$$\sup \Biggl\{ \sum_{j \in [d]} \alpha_j\FV(\vec\mu^j) \;\Bigg\vert\; \vec \mu^1,\dots\vec \mu^d \in \Delta, \vec\alpha \in [0,1]^d \text{ with }  \vec 1^\top \vec \alpha =1 \text{ and } \sum_{j \in [d]} \alpha_j \vec \mu^j = \vprior \Biggr\}.$$
Similarly, for a fixed time horizon $T$, let us further define $\MS \colon \Delta \to \mathbb{R}_{\geq 0}$ as the expected makespan of a Bayesian dynamic equilibrium for belief $\vec \mu$.\footnote{Note, that if the Bayesian dynamic equilibrium for a belief~$\vec\mu$ is not unique, the makespan may also not be unique. In this case, we define $\MS (\vec\mu)$ as the worst case makespan of all equilibria.} Then, in order to compute the public signaling scheme that minimizes the expected makespan, we are interested in solving
\begin{align*}
\inf \Biggl\{ \sum_{j \in [d]} \alpha_j\MS(\vec\mu^j) \;\Bigg\vert\; \vec \mu^1,\dots\vec \mu^d \in \Delta, \vec\alpha \in [0,1]^d \text{ with }  \vec 1^\top \vec \alpha =1 \text{ and } \sum_{j \in [d]} \alpha_j \vec \mu^j = \vprior \Biggr\}.
\end{align*}

\section{Structural Results for Deterministic Travel Times}
\label{sec:structure-deterministic}
In this section, we construct a flow $\vec f$ and show that it is a dynamic equilibrium for deterministic travel times.
    In general, the dynamic equilibrium may not be unique. However, the exit times $T_i$ and, thus, also the supports~$S(\theta)$ are the same for any dynamic equilibrium \citep[cf.][]{cominetti2015existence,Olver21}.
Hence, it is sufficient to only consider the flow $\vec f$ to compute the throughput and makespan.

    We assume for the remainder of this section that the links are ordered by their travel times, i.e., that $\tau_1 \leq \dots \leq \tau_m$. We denote by $\bar{\nu} (i) \coloneqq \sum_{j \in [i]} \nu_j$ the sum of the first~$i$ capacities. In particular, we set $\bar{\nu} (0) \coloneqq 0$.
    We define points in time~$\theta^*_i$ for $i \in [m]$ recursively as $        \theta^*_1 = 0$ and
    \begin{equation} \label{eq:equilibrium:breakpoints}
        \theta_{i+1}^* =
        \begin{cases}
            \theta_{i}^* + \frac{\bar{\nu}(i)}{\arrival - \bar{\nu}(i)} (\tau_{i+1} - \tau_i) & \text{if } \bar{\nu}(i) < \arrival, \\
            \infty & \text{if } \bar{\nu}(i) \geq \arrival,
        \end{cases}
        \qquad \text{for } i = 1, \dotsc, m-1
        .
    \end{equation}
As we will show, the times~$\theta^*_i$ are the points in time when the link~$i$ is first used in the dynamic equilibrium.
Let $k \coloneqq \max \bigl\{j \;\big\vert\; \bar{\nu} (j) < \arrival \bigr\}$ be the maximum index of a link such that the total capacities up to that link are strictly less than the inflow.
    For every $i \in [m]$, we define an inflow function $f_i \colon \R_{\geq 0} \to \R_{\geq 0}$ by
    \begin{equation}\label{eq:equilibrium:flow}
        f_i (\theta) \coloneqq
        \begin{cases}
            \frac{\arrival}{\bar{\nu}(j)} \cdot \nu_i
                &\text{if } i \leq k \text{ and } \theta \in [\theta^*_j, \theta^*_{j+1}), j=i, \dotsc, k \\
            \nu_i
                &\text{if } i \leq k \text{ and } \theta \geq \theta^*_{k+1}, \\
            \arrival - \bar{\nu}_k
                &\text{if } i = k + 1 \text{ and } \theta \geq \theta^*_{k+1}, \\
            0 &
            \text{otherwise,}
        \end{cases}
    \end{equation}
    where $\theta^*_{m+1} = \infty$ for the case when $k = m$.
    Note that the inflow defined in~\eqref{eq:equilibrium:flow} is completely determined by the values~$\theta^*_i$, which, in turn, solely depend on the travel times~$\tau_i$.
    \Cref{lem:equilibrium} which is stated and proven in Appendix~\ref{app:structure-deterministic} shows that the inflow defined in \eqref{eq:equilibrium:flow} is indeed a dynamic equilibrium and gives further properties of the queues.

    \newcommand{\nLinksEQ}[1][\theta]{\ell(#1)}

\section{Structural Results for Stochastic Travel Times}
\label{sec:structure-stochastic}

In this section, we obtain structural results for the dynamic equilibrium $\vec f$ as a function of the belief $\vec \mu \in \Delta$. 
For a belief $\vec \mu \in \Delta$, let $\pi(\cdot\, ; \vec \mu) : [m] \to [m]$ be a permutation of the links that orders them non-decreasingly by their expected travel times (according to the belief), i.e.,  $\vec \mu^\top \vec \tau_{\pi(1; \vec \mu)} \leq \dots \leq \vec \mu^\top \vec \tau_{\pi(m;\vec \mu)}$. Analogously to the deterministic case \eqref{eq:equilibrium:breakpoints}, the entry times $\theta^*_{\pi(i; \vec \mu)}$ in dependence of $\vec{\mu}$ can be computed as $\theta_{\pi(1; \vec \mu)}^* (\vec{\mu}) = 0$ and
\begin{equation} \label{eq:equilibrium:breakpoints:mu}
        \theta_{\pi(i+1; \vec \mu)}^* (\vec{\mu}) =
        \begin{cases}
            \theta_{\pi(i; \vec \mu)}^* + \dfrac{\bar{\nu}(i; \vec \mu)}{\arrival - \bar{\nu}(i; \vec \mu)} \,\vec \mu^\top \bigl(\vec \tau_{\pi(i+1; \vec \mu)} - \vec \tau_{\pi(i; \vec \mu)}\bigr) & \text{if } \bar{\nu}(i; \vec \mu) < \arrival, \\
            \infty & \text{if } \bar{\nu}(i; \vec \mu) \geq \arrival,
        \end{cases}
    \end{equation}
for  all $i = 1, \dotsc, m-1$, where $\bar{\nu}(i;\vec \mu) = \sum_{j \in [i]} \nu_{\pi(j; \vec \mu)}$ is the total capacity of the $i$ links used first, depending on the belief~$\vec \mu$.

We are interested in partitioning the set of beliefs $\Delta$ into subsets such that the ordering of the links by expected travel time is fixed within each subset. A naive bound on the number of these sets is $m!$, but this bound is not sufficient for us since our algorithms will iterate over these sets. In order to obtain a better bound on the number of these sets, we will resort to the theory of \emph{hyperplane arrangements}.
To this end, for every pair of links $i,j \in [m]$ with $i < j$, we define by
    $        H_{i,j} \coloneqq \bigl\{ \vec{\mu} \in \R^d \;\big\vert\; \vec \mu^\top \vec{\tau}_i = \vec{\mu}^\top\vec{\tau}_j \bigr\}$
    the (possibly empty) hyperplane containing all $\vec{\mu}$ such that the expected travel times on links~$i$ and~$j$ are the same.
    Then, $\mathcal{H} \coloneqq \bigl\{ H_{i,j} \;\big\vert\; i,j \in [m] \text{ with } i < j\bigr\}$ is an arrangement of $|\mathcal{H}| = \mfrac{m(m-1)}{2}$ linear hyperplanes in $\mathbb{R}^d$ (where we allow that one or more of the hyperplanes are empty).
        The hyperplanes of the arrangement $\mathcal{H}$ partition $\Delta$ into a number of open regions whose closures are called the \emph{$(d-1)$-cells} of the arrangement, i.e., the $(d-1)$-cells are the closures of the maximal connected subsets of $\smash{\Delta \setminus \bigcup_{i,j \in [m], i < j} H_{i,j}}$. Every $(d-1)$-cell is a polyhedron in $\Delta$. For $k \in \{0,\dots,d-1\}$, a $k$-cell of the arrangement is a $k$-dimensional face of one of its $(d-1)$-cells.
        The following theorem of \cite{Buck43} bounds the number of $k$-cells of a hyperplane arrangement.

    \begin{theorem}[\cite{Buck43}]
    \label{thm:buck}
    For any hyperplane arrangement of $n$ hyperplanes in $\mathbb{R}^{d-1}$ and any $k \in \{0,\dots,d-1\}$, the number of $k$-cells is not larger than
    $\binom{n}{d-1-k} \sum_{i=0}^{k} \binom{n-(d-1)+k}{i}$.
    \end{theorem}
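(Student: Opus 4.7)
This is the classical result of Buck, and I would reconstruct it by exploiting the combinatorial structure suggested by the formula: the factor $\binom{n}{d-1-k}$ will index choices of $d-1-k$ hyperplanes from the arrangement, and the sum $\sum_{i=0}^{k} \binom{n-(d-1)+k}{i}$ will bound the number of top-dimensional regions of the arrangement induced on the intersection of the chosen hyperplanes.

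First, I would observe that the affine hull of any $k$-cell is a $k$-dimensional affine subspace of $\mathbb{R}^{d-1}$. Because each hyperplane of the arrangement has codimension one, this affine hull must lie in the intersection of at least $d-1-k$ hyperplanes of $\mathcal{H}$. I would therefore assign to each $k$-cell (at least) one $(d-1-k)$-tuple of hyperplanes of $\mathcal{H}$ whose intersection contains the cell; there are $\binom{n}{d-1-k}$ such tuples in total. Fixing one such tuple with intersection $L$ of dimension at least $k$, I would restrict the remaining $n-(d-1-k) = n-(d-1)+k$ hyperplanes to $L$ and note that the $k$-cells of the original arrangement whose affine hull equals $L$ are in bijection with the top-dimensional regions of this sub-arrangement inside $L$.

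To count these regions I would invoke the well-known bound that an arrangement of $m$ hyperplanes in an affine space of dimension at most $k$ has at most $\sum_{i=0}^{k} \binom{m}{i}$ top-dimensional regions. This bound is proved by a standard double induction on $(m,k)$ that uses the Pascal identity $\binom{m-1}{i}+\binom{m-1}{i-1}=\binom{m}{i}$ together with the fact that adding one hyperplane to an arrangement of $m-1$ hyperplanes in dimension $k$ increases the region count by exactly the number of regions of the arrangement induced on the new hyperplane (of dimension $k-1$). Setting $m = n-(d-1)+k$ and summing over all $\binom{n}{d-1-k}$ choices of tuples yields the claimed bound.

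The main subtlety will be handling degeneracies: a $k$-cell whose affine hull is the intersection of strictly more than $d-1-k$ hyperplanes is associated with several tuples and is therefore counted multiple times by the outer sum. Because the statement is an upper bound this over-counting is harmless, but I would need to verify that no $k$-cell escapes the enumeration, which follows from the codimension observation above. A secondary point worth checking is that the region bound used in the last step remains valid without a general-position assumption, which holds because degeneracies can only merge regions and never create new ones.
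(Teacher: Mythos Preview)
The paper does not prove this statement: it is quoted as a classical result of Buck and used as a black box, so there is no ``paper's own proof'' to compare against. Your reconstruction is the standard argument and is correct. The two pieces --- associating each $k$-cell with a $(d-1-k)$-tuple of hyperplanes containing its affine hull, and bounding the number of full-dimensional regions of the induced arrangement inside that intersection by $\sum_{i=0}^{k}\binom{m}{i}$ with $m=n-(d-1)+k$ --- are exactly how Buck's bound is usually derived, and your handling of degeneracies (over-counting is harmless for an upper bound; fewer effective hyperplanes on $L$ only decreases the region count) is the right way to dispose of the non-generic case.
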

    Within each $(d-1)$-cell $P$ of $\mathcal{H}$, the ordering of the links $\pi(\cdot; \vec \mu)$ with respect to their expected travel times is constant and, therefore, every function $\vec{\mu} \mapsto \theta_{i}^* (\vec{\mu})$ is affine within $P$ by formula~\eqref{eq:equilibrium:breakpoints:mu}.
    From this construction, we obtain the following immediate result.
    \begin{lemma}
    \label{lem:theta-piece-wise-linear}
        For every link $i \in [m]$, the function
                    $\theta_i^* \colon \Delta \to \mathbb{R}_{\geq 0}$ is a continuous, piecewise linear function on~$\R^d$. In particular, the function is affine on every $(d-1)$-cell of $\mathcal{H}$.
    \end{lemma}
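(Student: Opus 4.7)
The plan is to fix a $(d-1)$-cell $P$ of the arrangement $\mathcal{H}$ and prove by induction on the position $k \in [m]$ that the map $\vec\mu \mapsto \theta^*_{\pi(k;\vec\mu)}(\vec\mu)$ is affine on $P$. The key observation is that within $P$ the ordering $\pi(\cdot;\vec\mu)$ is constant: any change in the relative order of two links $i,j$ would require crossing the hyperplane $H_{i,j} \in \mathcal{H}$, hence leaving $P$. Consequently, for each $k$ the quantity $\bar\nu(k;\vec\mu)$ and the travel-time vectors $\vec\tau_{\pi(k;\vec\mu)}, \vec\tau_{\pi(k+1;\vec\mu)}$ do not depend on $\vec\mu$ inside $P$. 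The base case $\theta^*_{\pi(1;\vec\mu)}\equiv 0$ is immediate. For the inductive step, recursion \eqref{eq:equilibrium:breakpoints:mu} writes $\theta^*_{\pi(k+1;\vec\mu)}(\vec\mu)$ as the sum of $\theta^*_{\pi(k;\vec\mu)}(\vec\mu)$ (affine by the hypothesis) and $\tfrac{\bar\nu(k;\vec\mu)}{\arrival - \bar\nu(k;\vec\mu)}\,\vec\mu^\top\bigl(\vec\tau_{\pi(k+1;\vec\mu)} - \vec\tau_{\pi(k;\vec\mu)}\bigr)$, whose scalar prefactor is constant on $P$ and whose $\vec\mu$-dependence is linear; hence the result is affine on $P$. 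Because each link $i$ occupies a fixed position of the ordering throughout $P$, the function $\theta^*_i$ is itself affine on $P$, which yields the second sentence of the lemma. As $\Delta$ decomposes into finitely many $(d-1)$-cells, this establishes piecewise linearity on $\Delta$.

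Continuity is the more delicate part. I would first verify it across the common facet of any two adjacent $(d-1)$-cells $P_1, P_2$, which is contained in some hyperplane $H_{i,j}$; the orderings on $P_1$ and $P_2$ differ only by transposing the (necessarily consecutive) links $i$ and $j$ at some positions $k, k+1$. On the shared facet, $\vec\mu^\top\vec\tau_i = \vec\mu^\top\vec\tau_j$, so applying the recursion on either side gives $\theta^*_{\pi(k+1;\vec\mu)}(\vec\mu) = \theta^*_{\pi(k;\vec\mu)}(\vec\mu)$, and the cumulative set $\{\pi(1;\vec\mu),\dots,\pi(k+1;\vec\mu)\}$ together with its total capacity $\bar\nu(k+1;\vec\mu)$ agrees across the transposition. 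Hence all subsequent recursion steps yield identical values on the facet, so every $\theta^*_i$ is continuous across each such facet; since $\Delta$ is connected and covered by the closed $(d-1)$-cells, this gives continuity on all of $\Delta$.

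The main obstacle I anticipate is handling lower-dimensional faces where several hyperplanes of $\mathcal{H}$ meet and more than two expected travel times tie simultaneously, so that several cells can share a common face. The strategy is to observe that on such a face all tied links share the same value of $\vec\mu^\top\vec\tau$, so any permutation of them in the recursion leaves the affine expression for $\theta^*_{\pi(k;\vec\mu)}$ unchanged on that face; pointwise agreement across every facet therefore propagates consistently to all lower-dimensional strata. A secondary point of care is the case $\bar\nu(k;\vec\mu)\ge \arrival$, for which the recursion yields $\infty$; since this condition depends only on the fixed ordering on $P$, it holds identically on $P$, and the function is constant (hence trivially affine) on that cell, while continuity at its boundary follows because on any adjacent cell whose ordering differs by a transposition at a later position the same condition holds.
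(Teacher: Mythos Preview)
Your proposal is correct and follows the same approach as the paper: both rest on the observation that the ordering $\pi(\cdot;\vec\mu)$ is constant on each $(d-1)$-cell of $\mathcal{H}$, so that the recursion~\eqref{eq:equilibrium:breakpoints:mu} yields affine functions there. The paper states the lemma as an immediate consequence of this remark without further argument, whereas you spell out the induction and the continuity across facets explicitly; your care about the $\bar\nu(k;\vec\mu)\ge u$ case already goes beyond what the paper addresses.
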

    
For a link~$i \in [m]$, $\theta_i^*(\vec \mu)$ is the first point in time at which flow enters link~$i$.
Given belief~$\vec{\mu}$ and $s \in [d]$, we define for every link~$i \in [m]$ its \emph{first exit time} as
    $\exitTime_{i, s} (\vec{\mu}) \coloneqq \theta^*_{i} (\vec{\mu}) + \tau_{i, s}$.
We obtain as an immediate corollary of  \Cref{lem:theta-piece-wise-linear} 
the following result.

\begin{corollary}
For every link~$i \in [m]$ and every scenario $s \in [d]$, the function $\exitTime_{i,s} \colon \Delta \to \mathbb{R}_{\geq 0}$ is a continuous, piecewise linear function on $\mathbb{R}^d$. In particular, the function is affine on every $(d-1)$-cell of $\mathcal{H}$.
\end{corollary}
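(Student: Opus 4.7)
The plan is to reduce everything directly to \Cref{lem:theta-piece-wise-linear}, since the definition $\exitTime_{i,s}(\vec \mu) \coloneqq \theta^*_i(\vec \mu) + \tau_{i,s}$ adds only the constant $\tau_{i,s} \in \mathbb{Q}_{\geq 0}$ (constant with respect to $\vec \mu$, as $i$ and $s$ are fixed) to $\theta^*_i(\vec \mu)$. Adding a constant to a function preserves continuity, piecewise linearity, and affineness on each piece of the domain, so the entire proof is a matter of observing that these three properties transfer under such a shift.

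More concretely, I would proceed in three short steps. First, I would invoke \Cref{lem:theta-piece-wise-linear} to obtain that $\theta^*_i \colon \Delta \to \mathbb{R}_{\geq 0}$ is continuous and piecewise linear on $\mathbb{R}^d$, and in particular affine on every $(d-1)$-cell of the arrangement $\mathcal{H}$. Second, I would note that $\exitTime_{i,s} = \theta^*_i + \tau_{i,s}$ as functions of $\vec \mu \in \Delta$, where $\tau_{i,s}$ is a scalar. Continuity is clearly preserved under addition of a constant, and if $\theta^*_i$ restricted to a $(d-1)$-cell $P$ of $\mathcal{H}$ is affine, say $\theta^*_i(\vec \mu) = \vec a_P^\top \vec \mu + b_P$, then $\exitTime_{i,s}(\vec \mu) = \vec a_P^\top \vec \mu + (b_P + \tau_{i,s})$ is also affine on $P$, using the same underlying cell decomposition induced by $\mathcal{H}$. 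Third, since the property holds on every $(d-1)$-cell and these cells cover $\Delta$, the function is piecewise linear on all of $\Delta$.

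Nonnegativity of $\exitTime_{i,s}$ follows from $\theta^*_i(\vec \mu) \geq 0$ (the entry times are defined to lie in $\mathbb{R}_{\geq 0}$, with $\theta^*_{\pi(1;\vec \mu)} = 0$ and subsequent $\theta^*_{\pi(i+1;\vec \mu)}$ obtained by adding nonnegative increments or set to $\infty$) together with $\tau_{i,s} \geq 0$. There is no genuine obstacle here; the only subtlety worth flagging is that the same hyperplane arrangement $\mathcal{H}$ works simultaneously for all $(i,s)$, because $\mathcal{H}$ is defined purely in terms of the travel-time vectors $\vec \tau_i$ and does not depend on $s$, and adding the scenario-dependent shift $\tau_{i,s}$ does not refine the cell decomposition.
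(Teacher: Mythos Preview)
Your proposal is correct and matches the paper's approach exactly: the paper states this as an immediate corollary of \Cref{lem:theta-piece-wise-linear} without further proof, and your argument---that adding the constant $\tau_{i,s}$ to $\theta^*_i(\vec\mu)$ preserves continuity, piecewise linearity, and affineness on each $(d-1)$-cell of $\mathcal{H}$---is precisely the intended reasoning.
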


We proceed to introduce another permutation of the links. For a given scenario $s \in [d]$, let $\sigma_s(\cdot\,;\vec \mu) : [m] \to [m]$ be a permutation of the links that orders them non-decreasingly with respect to their exit times in scenario~$s$, i.e., $\sigma_{s} (i; \vec \mu) \leq \sigma_{s} (j; \vec \mu)$ implies $\exitTime_{i, s}(\vec{\mu}) \leq \exitTime_{j, s}(\vec{\mu})$.
Since the inflow functions $f_i$ of the dynamic equilibrium $\vec f$ are piecewise constant (by \eqref{eq:equilibrium:flow}), so are the outflows $f^-_{i, s}$ in scenario~$s$. We define
    $\bar{\nu}_{s} (i; \vec \mu) \coloneqq
    \sum_{j \in [i]} \nu_{\sigma_{s} (j; \vec \mu)}$ for  $i \in [m] \cup\{0\}$
    and set $\sigma_{s} (0; \vec \mu) \coloneqq 0$, $\sigma_{s} (m+1; \vec \mu) \coloneqq m+1$ and $\exitTime_{0, s} (\vec{\mu}) \coloneqq 0$, $\exitTime_{m+1, s} (\vec{\mu}) \coloneqq \infty$.
For $s \in [d]$ and $\vec \mu \in \Delta$, let
    $$        \numContributingLinks \coloneqq
        \min \bigl\{
            \max \bigl\{ i \in [m] \;\vert\; \exitTime_{\sigma_{s}(i; \vec \mu), s} \leq T \bigr\},
            \min \bigl\{ i \in [m] \;\vert\; \bar{\nu}_s (i; \vec{\mu}) \geq u \bigr\}
        \bigr\}$$
    be the number of links that contribute to the throughput.
We proceed to compute explicit formulas for the out-flows and the throughput.

\begin{lemma}\label{lem:flow-value:function}
    For a fixed scenario~$s \in [d]$ and a given belief~$\vec{\mu} \in \Delta$, the total outflow at time~$\theta$ is
    $\sum_{j \in [m]} \sinkInflow_{j,s} (\theta) =
            \min\bigl\{\bar{\nu}_{s} (i; \vec \mu), \arrival \bigr\}$  whenever $\exitTime_{\sigma_{s}(i; \vec \mu), s} (\vec{\mu}) \leq \theta < \exitTime_{\sigma_{s}(i+1; \vec \mu), s} (\vec{\mu})$ for some $i \in [m] \cup \{0\}$.
    The throughput is given by the equation
    \[
        F_s (\vec{\mu}) =
        \arrival T \!+\!
        T
        \big[ \bar{\nu}_s (\numContributingLinks; \vec{\mu}) \!-\! \arrival \big]^- 
        \!\!+\! 
        \exitTime_{\sigma_s (\numContributingLinks; \vec{\mu}), s}(\vec \mu)
        \big[
            \bar{\nu}_s (\numContributingLinks; \vec{\mu}) - \arrival
        \big]^+
        \!-\!\!\! \sum_{i \in [\numContributingLinks]} \nu_{\sigma_s(i; \vec{\mu})} \exitTime_{\sigma_{s}(i; \vec \mu), s}(\vec \mu)
        .
    \]
\end{lemma}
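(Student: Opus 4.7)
The plan is to compute the outflow link-by-link in scenario~$s$ from the explicit equilibrium description of Lemma~\ref{lem:equilibrium} (carried over to the stochastic setting by substituting the expected travel time $\vec{\mu}^\top\vec{\tau}_i$ for the deterministic $\tau_i$), then aggregate over links and integrate over $[0,T]$ to obtain the throughput formula.

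First, I would analyze each link individually. For a link $j = \pi(i;\vec{\mu})$ with $\pi$-rank $i \leq k(\vec{\mu})$, formula~\eqref{eq:equilibrium:flow} gives $f_j(\theta) > \nu_j$ on $[\theta^*_j(\vec{\mu}),\theta^*_{\pi(k+1;\vec\mu)}(\vec{\mu}))$ and $f_j(\theta)=\nu_j$ afterwards, so by the queue dynamics~\eqref{eq:queue:dynamics} the queue $z_j$ is strictly positive for every $\theta > \theta^*_j(\vec\mu)$. By the definition of $f^-_{j,s}$, the outflow of link~$j$ is therefore constant equal to $\nu_j$ from time $\exitTime_{j,s}(\vec{\mu}) = \theta^*_j(\vec\mu)+\tau_{j,s}$ onwards and zero before. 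For $j = \pi(k+1;\vec\mu)$, the inflow is $\arrival-\bar{\nu}(k;\vec\mu)\leq \nu_j$ with queue remaining at zero, so the outflow equals the inflow from time $\exitTime_{j,s}(\vec\mu)$ onwards; links with $\pi$-rank beyond $k+1$ never receive flow and contribute nothing to the outflow.

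Second, I would aggregate across links. The set of links producing outflow at time $\theta$ is exactly $\{\sigma_s(1;\vec\mu),\dots,\sigma_s(i;\vec\mu)\}$, where $i$ is the largest index with $\exitTime_{\sigma_s(i;\vec\mu),s}(\vec\mu)\leq \theta$. Summing the per-link outflows from the first step and applying conservation of flow (the aggregate outflow can never exceed the aggregate inflow $\arrival$) yields the piecewise-constant total outflow $\min\{\bar{\nu}_s(i;\vec\mu),\arrival\}$ on $[\exitTime_{\sigma_s(i;\vec\mu),s}(\vec\mu),\exitTime_{\sigma_s(i+1;\vec\mu),s}(\vec\mu))$, which is the first claim.

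Finally, I would integrate this piecewise-constant outflow over $[0,T]$. By the definition of $\eta_s(\vec\mu)$, for $i<\eta_s(\vec\mu)$ we have $\bar{\nu}_s(i;\vec\mu)<\arrival$, so the integrand equals $\bar{\nu}_s(i;\vec\mu)$ on each intermediate subinterval; on the last subinterval $[\exitTime_{\sigma_s(\eta_s(\vec\mu);\vec\mu),s}(\vec\mu),T]$ the value is $\bar{\nu}_s(\eta_s(\vec\mu);\vec\mu)$ if the time-horizon bound determines $\eta_s$ (captured by the $[\,\cdot\,]^-$-term) and $\arrival$ if the capacity bound does (captured by the $[\,\cdot\,]^+$-term together with the $\arrival T$ summand). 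An Abel-type exchange of summation order, using $\bar{\nu}_s(i;\vec\mu)-\bar{\nu}_s(i-1;\vec\mu)=\nu_{\sigma_s(i;\vec\mu)}$, rearranges the resulting telescoping sum into the stated closed form. I expect the main obstacle to be the bookkeeping around the $\pi$-rank $k+1$ link, whose steady-state outflow rate $\arrival-\bar{\nu}(k;\vec\mu)$ can be strictly below its capacity $\nu_{\pi(k+1;\vec\mu)}$; verifying that this discrepancy is absorbed correctly by the $\min\{\cdot,\arrival\}$ cap in the aggregate outflow is the only non-routine algebraic step.
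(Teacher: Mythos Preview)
Your proposal is correct and follows essentially the same approach as the paper: compute per-link outflows from the explicit equilibrium of \Cref{lem:equilibrium}, aggregate using the $\sigma_s$-ordering to obtain the piecewise-constant total outflow, then integrate over $[0,T]$ via summation by parts with a case split according to whether $\bar{\nu}_s(\eta_s(\vec\mu);\vec\mu)$ exceeds $\arrival$. The step you flag as ``non-routine''---the bookkeeping around the partial link $\pi(k+1;\vec\mu)$---is handled in the paper with the same brevity (``Summation over all links yields the formula''), so your plan matches the paper's level of detail there as well.
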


\begin{proof}
    Let $s \in [d]$ and $\vec{\mu}$ be fixed. For ease of notation, let $\sigma \coloneqq \sigma_s(\,\cdot\, ; \vec{\mu})$ and $\pi \coloneqq \pi (\, \cdot \,, \vec{\mu}$) as well as $\exitTime_{\sigma(i)} \coloneqq \exitTime_{\sigma_s (i; \vec{\mu}) , s} (\vec{\mu})$, and $\bar{\nu} (i) \coloneqq \bar{\nu}_s (i; \vec{\mu})$, and $\numContributingLinks* \coloneqq \numContributingLinks$.

    Let $\vec{f}$ be the inflows of the equilibrium with respect to the given belief~$\vec{\mu}$, i.e., $\vec{f}$ is the inflow defined in~\eqref{eq:equilibrium:flow} with respect to the travel times $\vec{\mu}^{\top} \vec{\tau}_i$. The inflows are piecewise constant.
    Let $k \coloneqq \max \bigl\{j \;\big\vert\; \bar{\nu} (i) \leq \arrival \bigr\}$. Then, in particular the inflow into every link $i$ with $\pi(i) \leq k$, the inflow is either $0$ or greater equal than $\nu_i$ and the link $k$ has inflow $0$ or $\arrival - \bar{\nu} (k-1)$. Therefore the outflows of the links are either $0$ or $\nu_i$ (for $i$ with $\pi(i) \leq k$) and $0$ or $\arrival - \bar{\nu} (k-1)$ (for $i = k$). All other outflows are $0$. With the definition of the first exit times, we get
    \[
        \sinkInflow_{i, s} (\theta) =
        \begin{cases}
            \nu_i &\text{if } \theta \geq \exitTime (i) \text{ and } i < k \\
            \arrival - \bar{\nu} (k-1) &\text{if } \theta \geq \exitTime (i) \text{ and } i = k \\
            0 &\text{otherwise.}
        \end{cases}
    \]
    Summation over all links yields the formula for $\sum_{i=1}^m \sinkInflow_{i,s} (\theta)$ as stated.

    We proceed by computing the throughput $F_s (\vec{\mu})$ in scenario~$s$.
    To this end, let $L \coloneqq \{ i \in [m] \mid \exitTime_{\sigma_{s}(i)} \leq T \}$ be the number of links that have an exit time smaller than~$T$. With the value $\numContributingLinks* \coloneqq \numContributingLinks$, we can rewrite the throughput as
    \begin{align*}
        F_s (\vec{\mu}) &= \int_0^T \sum_{i=1}^m f^-_{i, s} (\theta) \,\mathrm{d} \theta
        = \sum_{i = 0}^{m} \int_{\min \{\exitTime_{\sigma (i)}, T\}}^{\min \{ \exitTime_{\sigma (i+1)}, T \}}  \min\bigl\{\bar{\nu} (i), \arrival \bigr\} \,\mathrm{d} \theta \\
        &=
        \sum_{i = 0}^{\numContributingLinks* - 1} \int_{\exitTime_{\sigma (i)}}^{\exitTime_{\sigma (i+1)}}  \bar{\nu} (i) \,\mathrm{d} \theta
        +
        \sum_{i = \numContributingLinks*}^{L-1} \int_{\exitTime_{\sigma (i)}}^{\exitTime_{\sigma (i+1)}} \arrival \,\mathrm{d} \theta
        +
        \int_{\exitTime_{\sigma (L)}}^{T} \min\bigl\{ \bar{\nu} (L), u \bigr\}  \,\mathrm{d} \theta \\
        &=
        \sum_{i = 0}^{\numContributingLinks* - 1} \bar{\nu} (i) (\exitTime_{\sigma (i + 1)} - \exitTime_{\sigma (i)})
        +
        \arrival \cdot \max\bigl\{ \exitTime_{\sigma (L)} - \exitTime_{\sigma (\numContributingLinks*)}, 0 \bigr\}
        +
        \min\bigl\{ \bar{\nu} (L), u \bigr\} \cdot (T - \exitTime_{\sigma (L)})
        \\
        &=
        \sum_{i = 0}^{\numContributingLinks* - 1}
        \bigl( \bar{\nu} (i-1) - \bar{\nu} (i) \bigr) \exitTime_{\sigma(i)}
        +
        \bar{\nu} (\numContributingLinks* - 1) \exitTime_{\sigma(\numContributingLinks*)}
        +
        \arrival \cdot \max\bigl\{ \exitTime_{\sigma (L)} - \exitTime_{\sigma (\numContributingLinks*)}, 0 \bigr\}  \\
        &\qquad
        +
        \min\bigl\{ \bar{\nu} (L), u \bigr\} \cdot (T - \exitTime_{\sigma (L)})
        \\
        &=
        - \sum_{i = 0}^{\numContributingLinks*-1}
        \nu_{\sigma(i)} \exitTime_{\sigma(L)}
        +
        \bar{\nu} (\numContributingLinks* - 1) \exitTime_{\sigma(\numContributingLinks*)}
        +
        \arrival \cdot \max\bigl\{\exitTime_{\sigma (L)} - \exitTime_{\sigma (\numContributingLinks*)}, 0 \bigr\}
        +
        \min\bigl\{ \bar{\nu} (L), \arrival \bigr\} \cdot (T - \exitTime_{\sigma (L)})
        ,
    \end{align*}
    where we used $\sigma(0) = 0$ and $\omega_0 = 0$.
    If $L = \numContributingLinks*$, then $\min\bigl\{ \bar{\nu} (L), \arrival \bigr\} = \bar{\nu} (L)$ and the throughput can be simplified to
    \[
        F_s (\vec{\mu}) = - \sum_{i = 0}^{\numContributingLinks*-1}
        \nu_{\sigma(i)} \exitTime_{\sigma(\numContributingLinks*)}
        +
        \big(\bar{\nu} (\numContributingLinks* - 1) - \bar{\nu} (\numContributingLinks*)\big) \exitTime_{\sigma(\numContributingLinks*)}
        + \bar{\nu}(\numContributingLinks*) T
        =
        - \sum_{i = 0}^{\numContributingLinks*}
        \nu_{\sigma(i)} \exitTime_{\sigma(\numContributingLinks*)}
        + \bar{\nu}(\numContributingLinks*) T
        .
    \]
    If $L > \numContributingLinks*$, then $\exitTime_{\sigma (L)} > \exitTime_{\sigma (\numContributingLinks*)}$ and $\min\bigl\{ \bar{\nu} (L), \arrival \bigr\} = \arrival$. In this case, the throughput can be simplified to
    \begin{align*}
        F_s (\vec{\mu}) &=
        - \sum_{i = 0}^{\numContributingLinks*-1}
        \nu_{\sigma(i)} \exitTime_{\sigma(\numContributingLinks*)}
        +
        (\bar{\nu} (\numContributingLinks* - 1) - \arrival ) \exitTime_{\sigma(\numContributingLinks*)} + \arrival T
        \\
        &=
        - \sum_{i = 0}^{\numContributingLinks*}
        \nu_{\sigma(i)} \exitTime_{\sigma(\numContributingLinks*)}
        +
        (\bar{\nu} (\numContributingLinks*) - \arrival) \exitTime_{\sigma(\numContributingLinks*)}
        + \arrival T
        . \qedhere
    \end{align*}
    
\begin{figure}
    \hfill
    \begin{subfigure}[t]{0.47\textwidth}
        \centering
        \begin{tikzpicture}
            \fill[color1!10]
                (.75, 0) -- (.75, .5) 
                -| ++(1, .5)
                -| ++(.8, .6)
                -| ++(1, .5)
                -- (4.5, 2.1)
                -- (4.5,0) -- cycle;

            \fill[color2!10] 
                (0,0) -| ++(.75, .5) 
                -| ++(1, .5)
                -| ++(.8, .6)
                -| ++(1, .5) -| cycle;

            \draw[color2]
                (0,0) rectangle ++(.75, .5)
                (0,.5) rectangle ++(1.75, .5)
                (0,1) rectangle ++(2.55, .6)
                (0,1.6) rectangle ++(3.55, .5);

            \draw[thick, -latex] (-.5, 0) -- (6, 0) node[right] {$\theta$};
            \draw[thick, -latex] (0, -.5) -- (0, 3);

            \foreach \x/\i in {.75/1,1.75/2,2.55/3,3.55/4,5.3/5} {
                \draw[thick] (\x, .1) -- ++(0, -.2) node[below] {
                    \footnotesize $\exitTime_{\sigma(\i)}$
                };
            }
            \node[below] at (3.55, -.5) {\footnotesize $= \exitTime_{\sigma(K)}$};

            \foreach \y/\i in {.5/1,1/2,1.6/3,2.1/4} {
                \draw[thick] (.1, \y) -- ++(-.2, 0) node[left] {
                    \footnotesize $\bar{\nu}_{s} (\i)$
                };
            }

            \draw[thick, color1] 
                (0,0) -| ++(.75, .5) 
                -| ++(1, .5)
                -| ++(.8, .6)
                -| ++(1, .5)
                -| ++(1.75, .6) -- (6, 2.7);

            \draw[thick, dashed]
                (4.5, 3) -- (4.5, -.1) node[below] {\footnotesize $T$};

            \draw[thick, dashed]
                (6, 2.5) -- (-.1, 2.5) node[left] {\footnotesize $\arrival$};

            \draw[color3, thick]
                (0,0) rectangle (4.5, 2.1);
        \end{tikzpicture}
        \caption{Case $\bar{\nu} (\numContributingLinks*) < \arrival$: The flow value (blue area) is the difference between the green rectangle and the red rectangles.}
    \end{subfigure}\hfill%
    \begin{subfigure}[t]{0.47\textwidth}
        \centering
        \begin{tikzpicture}
            \fill[color1!10]
                (.75, 0) -- (.75, .5) 
                -| ++(1, .5)
                -| ++(.8, .6)
                -| ++(1, .5)
                -- (4.5, 2.1)
                -- (4.5,0) -- cycle;

            \fill[color2!10] 
                (0,0) -| ++(.75, .5) 
                -| ++(1, .5)
                -| ++(.8, .6)
                 -| cycle;

            \draw[color2]
                (0,0) rectangle ++(.75, .5)
                (0,.5) rectangle ++(1.75, .5)
                (0,1) rectangle ++(2.55, .6);

            \draw[pattern=north east lines, pattern color=color2]
                (0,1.3) rectangle ++(2.55, .3);

            \draw[thick, -latex] (-.5, 0) -- (6, 0) node[right] {$\theta$};
            \draw[thick, -latex] (0, -.5) -- (0, 3);

            \foreach \x/\i in {.75/1,1.75/2,2.55/3,3.55/4,5.3/5} {
                \draw[thick] (\x, .1) -- ++(0, -.2) node[below] {
                    \footnotesize $\exitTime_{\sigma(\i)}$
                };
            }
            \node[below] at (2.55, -.5) {\footnotesize $= \exitTime_{\sigma(K)}$};

            \foreach \y/\i in {.5/1,1/2,1.6/3,2.1/4} {
                \draw[thick] (.1, \y) -- ++(-.2, 0) node[left] {
                    \footnotesize $\bar{\nu}_{s} (\i)$
                };
            }

            \draw[thick, color1] 
                (0,0) -| ++(.75, .5) 
                -| ++(1, .5)
                -| ++(.8, .6)
                -| ++(1, .5)
                -| ++(1.75, .6) -- (6, 2.7);

            \draw[thick, dashed]
                (4.5, 3) -- (4.5, -.1) node[below] {\footnotesize $T$};

            \draw[thick, dashed]
                (6, 1.3) -- (-.1, 1.3) node[left] {\footnotesize $\arrival$};

            \draw[color3, thick]
                (0,0) rectangle (4.5, 1.3);
        \end{tikzpicture}
        \caption{Case $\bar{\nu} (\numContributingLinks*) \geq \arrival$: The flow value (blue area) is the difference between the green rectangle and the red rectangles. In this case, the red shared area has to be added again resulting in the extra term.}
    \end{subfigure}
    \hfill
    \caption{Visualization of the flow value $F_s (\vec{\mu})$ in scenario~$s$ for given belief~$\vec{\mu}$. The blue function is the total inflow into the sink, the blue area depicts the flow value.}
\end{figure}
\end{proof}

As long as the ordering $\sigma_{s} (\, \cdot \,; \vec{\mu})$ remains unchanged, $\exitTime_{\sigma_s(i; \vec{\mu}), s}$ is linear in $\vec\mu$ and the capacities $\nu_{\sigma_s (i; \vec{\mu})}$ are constant in $\vec \mu$.
As long as the same number of links has an exit time smaller or equal than~$T$, the number $\numContributingLinks$ is constant as well.
Therefore, in this case, $F_s (\vec{\mu})$ is linear in~$\vec{\mu}$.
We define new hyperplanes
\begin{align*}
    H_{i,j,s} &\coloneqq \bigl\{ \vec{\mu} \in \R^m \;\big\vert\; \exitTime_{i,s} (\vec{\mu}) = \exitTime_{j,s} (\vec{\mu}) \bigr\}
    &&\text{ for } s \in [d] \text{ and } i, j \in [m] \text{ with } i < j, \text{and} \\
    H_{i,s,T} &\coloneqq \bigl\{ \vec{\mu} \in \R^m \;\big\vert\; \exitTime_{i,s} (\vec{\mu}) = T \bigr\}
    &&\text{ for } s \in [d] \text{ and } i \in [m]
    .
\end{align*}

For every $(d-1)$-cell $P$ of the hyperplane arrangement $\mathcal{H}$, consider the hyperplane arrangement
$\mathcal{H}^* \coloneqq\bigl\{ H_{i,j,s} \;\big\vert\;  s \in [d], i,j \in [m] \text{ with } i < j\} \,\cup\,\bigl\{ H_{i,s,T} \;\big\vert\; s \in [d], i \in [m] \bigr\}$.
This hyperplane arrangement further subdivides every $(d-1)$-cell $P$ of $\mathcal{H}$. The following lemma gives the main structural insights for the behavior of the functions $\FV_s$ on the $(d-1)$-cells of the hyperplane arrangements $\mathcal{H}$ and $\mathcal{H}^*$.
While the proof of the first two properties follows from the definition of the hyperplane arrangement $\mathcal{H}^*$, the proof of convexity of the functions $\FV_s$ on the $(d-1)$-cells of $\mathcal{H}$ is highly non-trivial and uses the concrete form of the functions $\FV_s$ obtained in \Cref{lem:flow-value:function}. The proof of the lemma is quite technical and thus deferred to Appendix~\ref{app:lem:piecewise-linear}.

\begin{lemma}
\label{lem:piecewise-linear}
For every scenario~$s \in [d]$, the function $\FV_s \colon \Delta \to \mathbb{R}_{\geq 0}$ has the following properties.
    \begin{enumerate}[label=(\roman*)]
        \item \label{it:piecewise-linear-1} $\FV_s (\vec{\mu})$ is piecewise linear.
        \item \label{it:piecewise-linear-2} Let $P$ be a $(d-1)$-cell of $\mathcal{H}$ and $P^*$ be a $(d-1)$-cell of $\mathcal{H}^*$; then $\FV_s$ is affine linear on $P \cap P^*$.
        \item \label{it:piecewise-linear-3} Let $P$ be a $(d-1)$-cell of $\mathcal{H}$, then $\FV_s$ is convex on $P$.
    \end{enumerate}
\end{lemma}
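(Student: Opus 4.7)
The plan rests on two ingredients: the explicit equilibrium flow~\eqref{eq:equilibrium:flow} together with the queuing dynamics~\eqref{eq:queue:dynamics}, which yield a closed form for the total outflow in any scenario, and the corollary after \Cref{lem:theta-piece-wise-linear}, which states that $\exitTime_{i,s}(\vec{\mu}) = \theta^*_i(\vec{\mu}) + \tau_{i,s}$ is continuous on $\Delta$ and affine on every $(d-1)$-cell of~$\mathcal{H}$. Fix a $(d-1)$-cell $P$ of~$\mathcal{H}$. On $P$ the permutation $\pi(\,\cdot\,;\vec{\mu})$, the index $k = \max\{j : \bar{\nu}(j;\vec{\mu}) < u\}$, and the active set $\{\pi(1;\vec{\mu}),\dots,\pi(k+1;\vec{\mu})\}$ of~\eqref{eq:equilibrium:flow} are all constant. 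Consequently, the outflow rate of link $\pi(i;\vec{\mu})$ in scenario~$s$ vanishes before $\exitTime_{\pi(i;\vec{\mu}),s}(\vec{\mu})$ and equals a nonnegative constant $c_i$ thereafter, with $c_i = \nu_{\pi(i;\vec{\mu})}$ for $i \leq k$ and $c_{k+1} = u - \bar{\nu}(k;\vec{\mu})$. Integrating over $[0,T]$ yields the $\pi$-indexed representation
\[
    \FV_s(\vec{\mu}) \;=\; \sum_{i=1}^{k+1} c_i \,\bigl[T - \exitTime_{\pi(i;\vec{\mu}),s}(\vec{\mu})\bigr]^+
    \qquad \text{for } \vec{\mu} \in P.
\]

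For~(iii), each $c_i$ is a fixed nonnegative constant on $P$ and each $\exitTime_{\pi(i;\vec{\mu}),s}(\vec{\mu})$ is affine in $\vec{\mu}$ on $P$, so each summand is a nonnegative multiple of the pointwise maximum of two affine functions, and hence convex. A nonnegative combination of convex functions is convex, so $\FV_s$ is convex on $P$.

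For~(ii), on an intersection $P \cap P^*$ the hyperplanes $H_{i,s,T}$ of~$\mathcal{H}^*$ additionally pin the sign of $T - \exitTime_{\pi(i;\vec{\mu}),s}(\vec{\mu})$ for every $i$, so each summand in the display above reduces to either the affine function $c_i\bigl(T - \exitTime_{\pi(i;\vec{\mu}),s}(\vec{\mu})\bigr)$ or the constant $0$ throughout $P \cap P^*$. Hence $\FV_s$ is affine on $P \cap P^*$. Claim~(i) then follows because the common refinement $\mathcal{H} \cup \mathcal{H}^*$ partitions $\Delta$ into finitely many $(d-1)$-cells on each of which $\FV_s$ is affine by~(ii); continuity across cell boundaries is inherited from the continuity of the $\exitTime_{i,s}$'s and of the $\max$ operation, taking care that across a crossing of $H_{i,s,T}$ the added or dropped summand is evaluated at $\exitTime_{i,s} = T$.

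The chief obstacle is deriving the $\pi$-indexed representation directly from~\eqref{eq:equilibrium:flow} and~\eqref{eq:queue:dynamics}, which requires carefully distinguishing the $k$ \emph{saturated} links $\pi(1;\vec{\mu}),\dots,\pi(k;\vec{\mu})$---whose outflow rates equal their capacities $\nu_{\pi(i;\vec{\mu})}$ because their queues grow as soon as the link becomes active---from the unique \emph{boundary} link $\pi(k+1;\vec{\mu})$, which stays strictly below capacity and outputs only the residual rate $u - \bar{\nu}(k;\vec{\mu})$. Working directly with the $\sigma_s$-indexed formula of \Cref{lem:flow-value:function} would force a delicate slope comparison across the sub-cells of $\mathcal{H}^*$ inside $P$, whereas the $\pi$-indexed representation, in which $\pi$ and $k$ are constant on $P$, makes convexity immediate as a sum of convex functions.
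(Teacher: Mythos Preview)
Your argument is correct and takes a genuinely different, more economical route than the paper's.

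The paper works with the $\sigma_s$-indexed formula of \Cref{lem:flow-value:function}, in which both the exit-time ordering $\sigma_s(\,\cdot\,;\vec{\mu})$ and the cutoff $\numContributingLinks$ vary across the sub-cells of $\mathcal{H}^*$ inside a fixed $(d-1)$-cell $P$ of~$\mathcal{H}$. Because those combinatorial data move, convexity is established by a direct verification: for $\vec{\mu}^{(1)},\vec{\mu}^{(2)}\in P$ and $\vec{\mu}^{(3)}=\xi\vec{\mu}^{(1)}+(1-\xi)\vec{\mu}^{(2)}$, the proof introduces a technical claim bounding two auxiliary quantities $A_1,A_2$ by $F_s(\vec{\mu}^{(1)}),F_s(\vec{\mu}^{(2)})$, and establishes the claim through a four-case analysis depending on whether $\bar{\nu}_{Q_3}$ and $\bar{\nu}_{Q_j}$ lie above or below $u$, with further sub-cases inside.

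You sidestep all of this by computing the throughput link by link rather than by time slab. Since each active link's outflow in scenario $s$ is a step function jumping from $0$ to a constant $c_i\geq 0$ at time $\exitTime_{\pi(i;\vec{\mu}),s}(\vec{\mu})$, integrating gives the $\pi$-indexed representation $F_s(\vec{\mu})=\sum_{i\leq k+1} c_i\,[T-\exitTime_{\pi(i;\vec{\mu}),s}(\vec{\mu})]^+$. The key point is that $\pi$, $k$, and the coefficients $c_i$ are \emph{constant} on $P$, so each summand is a nonnegative multiple of the positive part of an affine function, and convexity is immediate. This also shows that $F_s$ is already affine on the coarser subdivision of $P$ induced by the hyperplanes $H_{i,s,T}$ alone---the $H_{i,j,s}$'s in $\mathcal{H}^*$ are not needed---which is slightly stronger than the stated~(ii). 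What your approach buys is a two-line proof of~(iii) in place of a page of case distinctions; what the paper's $\sigma_s$-indexed formula buys is an expression organised by the order in which links start delivering flow, which is reused verbatim in the separation oracle of \Cref{lem:strong-separation}, though your representation would serve there equally well.
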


\section{Additive PTAS for Throughput Maximization}
\label{sec:ptas-additive}

In this section, we give an additive PTAS for computing the optimal throughput achievable by a public signaling scheme.
For ease of notation, let us write $\vec M  \in [0,1]^{d \times d}$ for the matrix that has the vectors $\vec \mu^1, \dots \vec \mu^d$ as column vectors, and let us write $\mathcal{M}$ for the set of left-stochastic matrices whose rows sum to $1$.
The primal signaling problem is given by
\begin{align}
\opt \coloneqq \sup \Biggl\{ \sum_{j \in [d]} \alpha_j\FV(\vec\mu^j) \;\Bigg\vert\; \vec M \in \mathcal{M}, \vec\alpha \in [0,1]^d \text{ with }  \vec 1^\top \vec \alpha =1 \text{ and } \vec M \vec \alpha = \vprior \Biggr\},\label{eq:primal}\tag{$P$}
\end{align}
where $F$ is the expected throughput as a function of the prior.
The main result of this section is the following.

\begin{theorem}
\label{thm:ptas-additive}
For $d$ constant and every $\eps^* > 0$, there is a polynomial-time algorithm computing $p \in [\opt-\eps^*,\opt]$.
\end{theorem}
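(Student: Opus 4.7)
The plan is to approximately solve a Lagrangian dual of \eqref{eq:primal} by the Ellipsoid method, using the cell-decomposition machinery from \Cref{sec:structure-stochastic} as a polynomial-time separation oracle. Concretely, I would dualize the barycenter constraint $\vec M \vec\alpha = \vprior$ with a multiplier $\vec\beta \in \R^d$. Since the inner maximization over $(\vec\alpha, \vec M)$ decouples across the columns $\vec\mu^j$, the Lagrangian dual collapses to
$$D \coloneqq \inf_{\vec\beta \in \R^d}\Bigl\{\vec\beta^\top \vprior + \sup_{\vec\mu \in \Delta}\bigl(\FV(\vec\mu) - \vec\beta^\top \vec\mu\bigr)\Bigr\},$$
which, after introducing an epigraph variable $t$, is a semi-infinite LP with $d+1$ variables and one linear constraint per belief.

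The main obstacle is establishing strong duality $D = \opt$, since $\FV$ is neither convex nor concave and Slater-type constraint qualifications are not applicable. The key observation is that $\opt$ equals the concave envelope of $\FV$ evaluated at $\vprior$: by Carathéodory every convex decomposition of $\vprior$ can be taken to use at most $d$ beliefs, and every such decomposition corresponds to a public signaling scheme (and conversely). Since $\FV$ is bounded on the compact simplex $\Delta$ by $\arrival T$ and, by \Cref{lem:piecewise-linear}, piecewise polynomial with a finite cell structure, I would show that the concave envelope admits the Fenchel-type representation $\mathrm{conc}(\FV)(\vprior) = \inf\bigl\{c + \vec\beta^\top \vprior : c + \vec\beta^\top \vec\mu \geq \FV(\vec\mu)\text{ for all }\vec\mu \in \Delta\bigr\}$, which coincides with $D$. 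Concretely, I would use compactness of $\Delta$ together with the finite cell decomposition to argue that the primal supremum is attained and that the dual infimum can be restricted to a bounded ball of multipliers derived from $\|\FV\|_\infty \leq \arrival T$, thereby yielding zero duality gap.

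For the separation oracle I would invoke \Cref{lem:piecewise-linear}: on each $(d{-}1)$-cell of the common refinement of the arrangements $\mathcal{H}$ and $\mathcal{H}^*$, every $\FV_s$ is affine linear, so $\FV(\vec\mu) = \sum_{s \in [d]} \mu_s \FV_s(\vec\mu)$ is a polynomial of degree at most two, and hence $\FV(\vec\mu) - \vec\beta^\top \vec\mu$ is quadratic on that cell. By \Cref{thm:buck}, the number of such cells is polynomial in $m$ when $d$ is constant, and the reverse-search algorithm of \citet{Avis96} enumerates them with polynomial delay. On each cell the domain is a polytope in $\R^d$ of constant dimension with polynomially many facets, so the maximum of a quadratic function over it can be found in polynomial time by inspecting vertices and stationary points on each face. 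Taking the maximum across cells yields the exact global maximum of $\FV(\vec\mu) - \vec\beta^\top \vec\mu$, which gives either a feasibility certificate for a candidate pair $(\vec\beta, t)$ or a violated belief $\vec\mu$.

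With this separation oracle and a polynomial-size bounding box on $(\vec\beta, t)$ derived from $\|\FV\|_\infty \leq \arrival T$, I would apply the Ellipsoid method to the dual LP to obtain, in polynomial time, a value within additive $\eps^*$ of $D$. By the strong-duality argument this value is also within $\eps^*$ of $\opt$ and, coming from the dual, provides an upper bound, so the output $p$ can be taken in $[\opt - \eps^*, \opt]$ as required. The delicate points I expect to need the most care are the strong-duality argument for a non-concave objective and the need to maintain polynomial bit-complexity in the Ellipsoid iterations despite the separation oracle potentially producing algebraic irrational maximizers; both hinge crucially on the assumption that $d$ is a constant.
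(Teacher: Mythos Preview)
Your proposal follows essentially the same route as the paper: Lagrangian dualization of the barycenter constraint, strong duality via the concave envelope of $\FV$, an exact separation oracle built from the cell decomposition of \Cref{lem:piecewise-linear} together with \Cref{thm:buck} and reverse search, and finally the Ellipsoid method. The paper eliminates your epigraph variable by exploiting shift-invariance of the dual objective in $\vec\beta$, but this is cosmetic.

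One technical point deserves correction: a bounding box on the dual multiplier cannot be derived from $\|\FV\|_\infty \leq \arrival T$ alone. The optimal $\vec\beta$ is a supergradient of the concave envelope at $\vprior$, and its norm is controlled by the Lipschitz constant of $\FV$, not its sup norm; this Lipschitz constant involves the derivatives $\partial \omega_{i,s}/\partial\mu_s$, which in turn depend on ratios like $\bar\nu(i)/(\arrival - \bar\nu(i))$ from \eqref{eq:equilibrium:breakpoints:mu}. The paper bounds this explicitly and shows the resulting radius has polynomial encoding length. Your concern about irrational maximizers in the separation oracle is, by contrast, unfounded: on each cell the quadratic has rational coefficients, so both the vertices and the stationary points (solutions of a linear system) are rational, and Gaussian elimination suffices.
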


Instead of (approximately) solving the primal signaling problem \eqref{eq:primal} directly, our algorithm relies on the following Lagrangian dual. 

\begin{lemma}
\label{lem:lagrange-dual}
The dual signaling problem is
\begin{align}
d^*=&\;\Bigl\{ \vec w^\top \vprior \;\Big\vert\; \vec w \in \mathbb{R}^{\s} \text{ with } \vec w^\top \vec \mu \geq \FV(\vec \mu) \text{ for all } \vec \mu \in \Delta \Bigr\}.\label{eq:dual}\tag{$D$}
\end{align} 
In particular, weak duality holds.
\end{lemma}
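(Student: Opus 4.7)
The plan is to obtain the dual via standard Lagrangian duality, exploit a translation invariance to collapse the unconstrained dual to the constrained form in the statement, and then verify weak duality by a one-line calculation.

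First, I would note that in \eqref{eq:primal} the constraint $\vec 1^\top \vec \alpha = 1$ is implied by $\vec M \vec \alpha = \vprior$: since every column of $\vec M \in \mathcal{M}$ is in $\Delta$ we have $\vec 1^\top \vec M = \vec 1^\top$, and $\vec 1^\top \vprior = 1$, so $\vec 1^\top \vec \alpha = \vec 1^\top \vec M \vec \alpha = \vec 1^\top \vprior = 1$. I would therefore dualize only the equalities $\sum_{j \in [d]} \alpha_j \vec \mu^j = \vprior$, introducing a multiplier $\vec w \in \mathbb{R}^d$ and forming
$$L(\vec M, \vec \alpha, \vec w) = \sum_{j \in [d]} \alpha_j \FV(\vec \mu^j) + \vec w^\top \Bigl(\vprior - \sum_{j \in [d]} \alpha_j \vec \mu^j\Bigr) = \vec w^\top \vprior + \sum_{j \in [d]} \alpha_j \bigl(\FV(\vec \mu^j) - \vec w^\top \vec \mu^j\bigr).$$

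Second, I would compute the dual function $g(\vec w) := \sup L$ over the remaining feasible set $\{\vec M \in \mathcal{M},\, \vec \alpha \in [0,1]^d,\, \vec 1^\top \vec \alpha = 1\}$. The $\vec w^\top \vprior$ term is constant, and what remains is a convex combination (through $\vec \alpha$) of the independent scalar quantities $\FV(\vec \mu^j) - \vec w^\top \vec \mu^j$ with each $\vec \mu^j$ ranging freely over $\Delta$. The supremum is therefore attained by placing all mass of $\vec \alpha$ on one index and choosing the corresponding $\vec \mu^j$ optimally, giving
$$g(\vec w) = \vec w^\top \vprior + \sup_{\vec \mu \in \Delta} \bigl(\FV(\vec \mu) - \vec w^\top \vec \mu\bigr).$$

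Third, I would exploit the translation invariance $g(\vec w + c \vec 1) = g(\vec w)$ for every $c \in \mathbb{R}$, which holds because $\vec 1^\top \vec \mu = 1$ for all $\vec \mu \in \Delta$ and $\vec 1^\top \vprior = 1$, so the $\pm c$ contributions cancel between the two summands. By choosing $c$ equal to $\sup_{\vec \mu \in \Delta}(\FV(\vec \mu) - \vec w^\top \vec \mu)$, one may always assume without changing $g(\vec w)$ that the residual supremum is non-positive, i.e.\ that $\vec w^\top \vec \mu \geq \FV(\vec \mu)$ for all $\vec \mu \in \Delta$. Hence
$$\inf_{\vec w \in \mathbb{R}^d} g(\vec w) = \inf \bigl\{ \vec w^\top \vprior \,\big|\, \vec w \in \mathbb{R}^d \text{ with } \vec w^\top \vec \mu \geq \FV(\vec \mu) \text{ for all } \vec \mu \in \Delta \bigr\},$$
which is exactly the form of~\eqref{eq:dual}. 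Finally, weak duality follows directly: for any primal feasible $(\vec M, \vec \alpha)$ and any dual feasible $\vec w$,
$$\sum_{j \in [d]} \alpha_j \FV(\vec \mu^j) \leq \sum_{j \in [d]} \alpha_j \vec w^\top \vec \mu^j = \vec w^\top \Bigl(\sum_{j \in [d]} \alpha_j \vec \mu^j\Bigr) = \vec w^\top \vprior,$$
and taking the supremum on the left and the infimum on the right yields $\opt \leq d^*$. The only genuinely non-routine step is the translation-invariance observation that reshapes the dual into the constrained form; everything else is mechanical, and crucially nothing uses concavity of $\FV$, which we do not have.
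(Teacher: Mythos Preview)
Your proposal is correct and follows essentially the same route as the paper: form the Lagrangian by dualizing the affine constraint $\sum_j \alpha_j \vec\mu^j = \vprior$, reduce the inner supremum to $\sup_{\vec\mu\in\Delta}(\FV(\vec\mu)-\vec w^\top\vec\mu)$, and use the shift invariance $\vec w \mapsto \vec w + c\vec 1$ to normalize this supremum to zero and recover the constrained form. Your additional observations---that $\vec 1^\top\vec\alpha=1$ is redundant given $\vec M\vec\alpha=\vprior$, and the explicit one-line verification of weak duality---are correct refinements; the paper instead keeps the simplex constraint on $\vec\alpha$ during the inner supremization and invokes general Lagrangian theory for weak duality, but the argument is otherwise identical.
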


\begin{proof}
The Lagrange function is defined as
\begin{align*}
L(\vec \alpha, \vec M,\vec w) \coloneqq \sum_{j=1}^d \alpha_j \FV(\vec \mu^j) + \vec w^\top \Biggl(\vec \lambda^* - \sum_{j=1}^d \alpha_j \vec \mu^j\Biggr).
\end{align*}
The dual function is then
\begin{align*}
q(\vec w) \coloneqq&\; \sup \bigl\{ L(\vec \alpha, \vec M, \vec \mu) \;\big\vert\; \vec M \in \mathcal{M}, \vec
\alpha \in [0,1]^d \text{ with } \vec 1^\top \vec\alpha =1 \bigr\} \\
=&\; \sup \Biggl\{ \sum_{j=1}^d \alpha_j \FV(\vec \mu^j) + \vec w^\top \Biggl(\vec \lambda^* - \sum_{j=1}^d \alpha_j \vec \mu^j\Biggr)  \;\Bigg\vert\; \vec M \in \mathcal{M}, \vec
\alpha \in [0,1]^d \text{ with } \vec 1^\top \vec\alpha = 1\Biggr\} \\
=&\; \sup \Biggl\{ \vec w^\top \vec \lambda^* + \sum_{j=1}^d \alpha_j \bigl(\FV(\vec \mu^j) - \vec w^\top \vec \mu^j\bigr)  \;\Bigg\vert\; \vec M \in \mathcal{M}, \vec
\alpha \in [0,1]^d \text{ with } \vec 1^\top \vec \alpha = 1\Biggr\}\\
=&\; \vec w^\top \vec \lambda^* + \sup \Biggl\{ \sum_{j=1}^d \alpha_j \bigl(\FV(\vec \mu^j) - \vec w^\top \vec \mu^j\bigr) \;\Bigg\vert\; \vec M \in \mathcal{M}, \vec
\alpha \in [0,1]^d \text{ with } \vec 1^\top \vec \alpha = 1 \Biggr\}\\
=&\; \vec w^\top \vec \lambda^* + \sup \bigl\{ \FV(\vec \mu) - \vec w^\top \vec \mu \;\big\vert\; \vec \mu \in \Delta \bigr\}.
\end{align*}
The dual signaling problem is
\begin{align}
d^* \coloneqq&\; \inf \Bigl\{ \vec w^\top \vec \lambda^* + \sup\bigl\{ F(\vec \mu) - \vec w^\top \vec \mu \;\big\vert\; \vec \mu \in \Delta\bigr\} \;\Big\vert\; \vec w \in \mathbb{R}^d\Bigr\}.\notag\\
\intertext{It is easy to convince ourselves that the objective is shift invariant in $\vec w$, i.e., for any $\delta \in \mathbb{R}$, the vectors $\vec w \in \mathbb{R}^d$ and $\vec w + \vec 1\delta$ yield the same objective.
To see this, note that adding a constant $\delta$ to each entry of $\vec w$, we have that $\vec w^\top \vec \lambda^*$ increases by $\delta$ since $\vec \lambda^*$ is a probability vector. On the other hand, $\sup\bigl\{F(\vec \mu) - \vec w^\top \vec \mu\;\big\vert\; \vec \mu \in \Delta\bigr\}$ decreases by $\delta$ since the supremum is taken over all probability vectors.
Thus, it is without loss of generality to assume that $\sup \bigl\{F(\vec \mu) - \vec w^\top \vec \mu\bigr\} = 0$. We then obtain that the dual problem is}
d^*=&\inf \;\Bigl\{ \vec w^\top \vprior \;\Big\vert\; \vec w \in \mathbb{R}^{\s} \text{ with } \vec w^\top \vec \mu \geq \FV(\vec \mu) \text{ for all } \vec \mu \in \Delta \Bigr\}.\notag
\end{align}
Weak duality follows from the general theory of Lagrange functions.
\end{proof}

In the following, we show that there is no duality gap, i.e., the optimal values for the primal and dual signaling problem are attained and coincide.
For the proof, we use the definition and properties of the concave envelope of $\FV$.

\begin{lemma}
\label{lem:no-duality-gap}
The optimal values of the primal signaling problem \eqref{eq:primal} and the dual signaling problem \eqref{eq:dual} are attained at finite values and coincide.
\end{lemma}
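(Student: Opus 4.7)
The plan is to exhibit a common value for both \eqref{eq:primal} and \eqref{eq:dual} by passing through the concave envelope of $\FV$. Let $\hat\FV \colon \Delta \to \R$ denote this concave envelope, i.e., the pointwise smallest upper semi-continuous concave function on $\Delta$ that majorizes $\FV$. By Lemmas~\ref{lem:flow-value:function} and~\ref{lem:piecewise-linear}, $\FV$ is piecewise linear, and hence in particular continuous, on the compact set $\Delta$; moreover, $\FV$ is bounded above by $\arrival T$. Consequently $\hat\FV$ is well defined, concave, finite, and continuous on $\Delta$.

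For the primal identification, the classical Carathéodory theorem, applied in the $d$-dimensional affine hull of $\Delta\times\R$ (recall that $\Delta$ has affine dimension $d-1$, so that $d$ points suffice for a convex decomposition in $\Delta$ as already observed in Section~2), yields
\begin{align*}
\hat\FV(\vec\lambda) = \sup\Bigl\{\sum_{j=1}^{d} \alpha_j \FV(\vec\mu^j) \;\Big\vert\; \vec\mu^j \in \Delta,\; \vec\alpha \in [0,1]^{d},\; \vec 1^\top\vec\alpha = 1,\; \sum_{j=1}^{d} \alpha_j \vec\mu^j = \vec\lambda\Bigr\}
\end{align*}
for every $\vec\lambda \in \Delta$. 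Setting $\vec\lambda = \vprior$ matches this formula exactly with \eqref{eq:primal}, so $\opt = \hat\FV(\vprior)$. The supremum is attained because the feasible set is a compact subset of $\Delta^{d} \times \{\vec\alpha \in [0,1]^{d} : \vec 1^\top\vec\alpha = 1\}$ and $(\vec\alpha, \vec M) \mapsto \sum_j \alpha_j \FV(\vec\mu^j)$ is continuous.

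For the dual identification, note that restricted to $\Delta$ every affine function $c + \vec b^\top\vec\mu$ equals the linear form $(\vec b + c\vec 1)^\top \vec\mu$ (since $\vec 1^\top\vec\mu=1$ on $\Delta$), so the feasible set of \eqref{eq:dual} is in bijection with the set of affine majorants of $\FV$ on $\Delta$. By the standard Fenchel--Moreau characterization of the concave envelope of a bounded upper semi-continuous function on a compact convex set, the right-hand side
\begin{align*}
d^* = \inf\bigl\{\ell(\vprior) \;\big\vert\; \ell \text{ affine on } \Delta,\, \ell \geq \FV \text{ on } \Delta\bigr\}
\end{align*}
equals $\hat\FV(\vprior)$. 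The infimum is attained by applying the supporting hyperplane theorem to the concave, continuous function $\hat\FV$ on the compact convex $\Delta$ at the point $\vprior$: this produces an affine $\ell^* \geq \hat\FV \geq \FV$ with $\ell^*(\vprior) = \hat\FV(\vprior)$, and the associated $\vec w^*$ is dual-feasible and optimal. Combining the two identifications yields $\opt = \hat\FV(\vprior) = d^*$.

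The main obstacle is not the duality machinery itself but justifying that the standard convex-analytic tools apply: the objective $\FV$ is neither convex nor concave, so strong duality cannot be derived from Slater-type constraint qualifications; instead, the entire argument must be built around the concave envelope. The hypotheses of upper semi-continuity, boundedness, and the tight $d$-point Carathéodory bound are exactly what the structural results of Section~\ref{sec:structure-stochastic} and the trivial bound $\FV \leq \arrival T$ provide.
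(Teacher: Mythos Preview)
Your proof is correct and follows essentially the same route as the paper: both arguments pass through the concave envelope $\hat\FV$, identify $\opt=\hat\FV(\vprior)$ via Carath\'eodory, observe that affine majorants of $\FV$ and of $\hat\FV$ coincide (so $d^*$ is the infimum of affine majorants of $\hat\FV$ at $\vprior$), and then use the existence of a supergradient/supporting hyperplane of $\hat\FV$ at $\vprior$ to produce an optimal dual $\vec w$ and close the chain $\opt=\hat\FV(\vprior)=d^*$. The only cosmetic difference is that you invoke Fenchel--Moreau and the supporting hyperplane theorem by name, whereas the paper phrases the same step directly in terms of the supergradient $\partial\hat\FV(\vprior)$.
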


\begin{proof}
Let $\hat{\FV} : \Delta \to \mathbb{R}_{\geq 0}$ be the concave envelope defined as
\begin{align}
\hat{\FV}(\vec \mu) \coloneqq \inf \bigl\{ g(\vec \mu) \;\big\vert\; g \text{ is concave and } g \geq \FV \text{ over } \Delta\bigr\}.
\label{eq:concave-envelope}
\end{align}
Using Caratheodory's theorem, it can be shown that $\opt = \hat{\FV}(\vec \lambda^*)$; see \citet[Theorem~2.35]{Dacorogna08} for a proof.
Let further
\begin{align}
\hat{d}^* \coloneqq \inf \Bigl\{ \vec w^\top \vprior \;\Big\vert\; \vec w \in \mathbb{R}^{\s} \text{ with } \vec w^\top \vec \mu \geq \hat{\FV}(\vec \mu) \text{ for all } \vec \mu \in \Delta \Bigr\}.\label{eq:dual-hat}\tag{$\hat{D}$}
\end{align}
be the dual signaling problem where the function $\FV$ is replaced by its concave envelope $\hat{\FV}$.
We claim that $\hat{d}^* = d^*$. To this end, it suffices to show that $\vec w^\top \vec \mu \geq \FV(\vec \mu)$ for all $\vec \mu \in \Delta$ implies $\vec w^\top \vec \mu \geq \hat{\FV}(\vec \mu)$ for all $\vec \mu \in \Delta$. To see this, note that the function $\vec \mu \mapsto \vec w^\top \vec \mu$ is linear and hence concave and, thus, every such function with $\vec w^\top \vec \mu \geq \FV(\vec \mu)$ is considered in the infimum in the definition of the concave envelope \eqref{eq:concave-envelope}. This yields $\vec w^\top \vec \mu \geq \hat{\FV}(\vec \mu)$ for all $\vec \mu \in \Delta$ as claimed.
Since $\hat{\FV}$ is concave, the supergradient $\partial \hat{\FV}(\vec \lambda^*)$ at $\vec \lambda^*$ exists and is a non-empty and convex set. Every vector $\vec v \in \partial\hat{\FV}(\vec \lambda^*)$ has the properties that $\vec v^\top \vec \mu \geq \hat{\FV}(\mu)$ for all $\vec\mu \in \Delta$ and that $\vec v^\top \vec \lambda^* = \hat{\FV}(\vec \lambda^*)$. As a consequence, the vectors $\vec v \in \partial\hat{\FV}(\vec \lambda^*)$ are considered in the infimum in \eqref{eq:dual-hat} and, thus, $\hat{d}^* \leq \hat{\FV}(\vec \lambda^*)$. We have shown that
$\opt = \hat{\FV}(\lambda^*) \geq \hat{d}^* = d^*$. Since, by weak duality $\opt \leq d^*$, we have $\opt = d^*$ and the result follows.
\end{proof}

The general idea for the additive PTAS is to apply the Ellipsoid method on the dual signaling problem and to use the equivalence of optimization and separation.
To this end, note that the feasible region of the Lagrange dual is always convex, see, e.g., \cite[\S~5.2]{Boyd04}.
Formally,
for a set $K \subseteq \mathbb{R}^n$ and any $\eps > 0$, let
$S(K,\eps) \coloneqq \bigl\{\vec x \in \mathbb{R}^n \;\big\vert\; ||\vec x - \vec y||_2 \leq \eps \text{ for some } \vec y
\in K\bigr\}$
be the set of points that are within an $\eps$-distance to a point in $K$. Furthermore, let
$
S(K,-\eps) \coloneqq \bigl\{\vec x \in K \;\big\vert\; S(\{x\}, \eps) \subseteq K \bigr\}$
be the set of points that are in the $\eps$-interior of $K$.
The following definition of the \emph{weak optimization problem} is taken from \citet[Definition~2.1.10]{Groetschel88} and adapted such that it deals with minimization instead of maximization.

\begin{definition}[Weak optimization problem]
\label{def:weak-optimization}
Given a vector $\vec c \in \mathbb{Q}^n$ and a rational number $\eps > 0$, the \emph{weak optimization problem} is to compute either
\begin{enumerate}[label=(\roman*)]
\item a vector $\vec y \in \mathbb{Q}^n$ such that $\vec y \in S(K,\eps)$, and $\vec c^\top \vec y \leq \vec c^\top \vec x + \eps$ for all $\vec x \in S(K,-\eps)$, or
\item assert that $S(K,-\eps)$ is empty.
\end{enumerate}
\end{definition}

The following definition is taken from \citet[Definition~2.1.13]{Groetschel88}.

\begin{definition}[Weak separation problem]
\label{def:weak-separation}
Given a vector $\vec y \in \mathbb{Q}^n$ and a rational number $\delta > 0$, the weak separation problem is to either
\begin{enumerate}[label=(\roman*)]
    \item \label{it:weak-separation-1} assert that $\vec y \in S(K,\delta)$, or
    \item \label{it:weak-separation-2} to compute a vector $\vec c \in \mathbb{Q}^n$ with $||\vec c||_{\infty} = 1$ such that $\vec c^\top \vec x \leq \vec c^\top \vec y + \delta$ for every $\vec x \in S(K, -\delta)$.
\end{enumerate}
\end{definition}

Roughly speaking, the equivalence of optimization and separation implies that a polynomial algorithm for the weak separation problem yields a polynomial algorithm for the weak optimization problem; see \citet[Corollary~4.2.7]{Groetschel88} for a formal statement. Thus, in order to solve the weak optimization problem, it is enough to solve the weak separation problem.

The separation problem for the dual signaling problem is the following. Given a vector $\vec w \in \mathbb{R}^{\s}$, find $\vec \mu \in \Delta$ such that $\vec w^\top \vec \mu < \FV(\vec \mu)$, or decide that no such $\vec \mu \in \Delta$ exists.
We proceed to show that for the dual signaling problem we can even solve a stronger version of the weak separation problem where $\delta = 0$. To this end, we show the following lemma.
For its proof, we use \Cref{thm:buck} to show that the subdivision into sets $Q \cap Q^*$ where $Q$ is a $k$-cell of $\mathcal{H}$ and $Q^*$ is a $k^*$-cell of $\mathcal{H}$ for some $k,k^* \in \{0,\dots,d\}$ produces only a polynomial number of sets. This allows to iterate over these sets in polynomial time while checking whether there is a candidate for an extreme point of the function $\FV(\vec \mu) - \vec w^\top \vec\mu$ in the relative interior of $Q \cap Q^*$. Showing that this function is differentiable in $Q \cap Q^*$ then allows first order conditions that reduce to a linear system since the function is piece-wise quadratic by \Cref{lem:piecewise-linear}. 

\begin{lemma}
\label{lem:strong-separation}
Given $\vec w \in \mathbb{R}^n$, we can compute in polynomial time $\vec \mu \in \Delta$ such that $\vec w^\top \vec \mu < \FV(\vec \mu)$, or decide that no such $\vec \mu \in \Delta$ exists.
\end{lemma}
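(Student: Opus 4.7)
The plan is to reduce the separation problem to globally maximizing the piecewise quadratic function $g(\vec\mu) \coloneqq \FV(\vec\mu) - \vec w^\top \vec\mu$ over the simplex $\Delta$ and to compare the maximum to zero: if $\max_{\vec\mu \in \Delta} g(\vec\mu) > 0$ then any maximizer is a valid separator, and otherwise $\vec w^\top \vec\mu \geq \FV(\vec\mu)$ holds throughout $\Delta$.

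First, I would form the common refinement of the hyperplane arrangements $\mathcal{H}$ and $\mathcal{H}^*$ (both defined in Section~\ref{sec:structure-stochastic}) restricted to $\Delta$, and explicitly enumerate all cells of the form $R = Q \cap Q^*$ with $Q$ a $k$-cell of $\mathcal{H}$ and $Q^*$ a $k^*$-cell of $\mathcal{H}^*$ for $k, k^* \in \{0, \dots, d-1\}$. The total number of hyperplanes involved is $O(m^2 d)$, so by \Cref{thm:buck} and the constancy of $d$, the number of cells of every dimension in each of the two arrangements is polynomial in $m$, and hence so is the number of intersection cells $R$. These cells can be constructed explicitly in polynomial time, for instance using the reverse-search algorithm of \citet{Avis96} mentioned in the introduction.

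Next, I would exploit \Cref{lem:piecewise-linear}: on each full-dimensional intersection cell $R$, each $\FV_s$ is affine in $\vec\mu$, so the expected throughput $\FV(\vec\mu) = \sum_{s \in [d]} \mu_s \FV_s(\vec\mu)$ is a quadratic polynomial with rational coefficients that can be read off from the explicit formula in \Cref{lem:flow-value:function}. Consequently $g$ is a quadratic polynomial on the relative interior of each cell of the refinement. For every such cell $R$, I would then compute the candidates for $\max_{\vec\mu \in R} g(\vec\mu)$ by writing the first-order optimality conditions restricted to the affine hull of $R$: because $g$ is quadratic and $R$ is defined by linear equalities of $\mathcal{H}$ and $\mathcal{H}^*$, these conditions form a linear system in $\vec\mu$ with rational coefficients that can be solved exactly in polynomial time. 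A critical point lying in the relative interior of $R$ is a candidate; otherwise the maximum over the closed cell $R \cap \Delta$ is attained on its relative boundary, which is again a union of lower-dimensional cells of the refinement and is processed by the same procedure. Since $\Delta$ is compact, the global maximum is attained at one of the polynomially many candidates.

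Finally, I would pick the best candidate $\vec\mu^*$ and compare $g(\vec\mu^*)$ to zero. The main obstacle will be the careful treatment of degenerate cells (where the restricted quadratic is constant along a face, or where its Hessian restricted to the affine hull is singular, so that the first-order linear system is under-determined): these situations are handled by deferring to the analysis on the lower-dimensional faces, which is justified because the global maximum over the compact set $R \cap \Delta$ must then be attained on such a face, and the cell-enumeration bound from \Cref{thm:buck} continues to apply. Since all enumeration and linear algebra is performed with rational arithmetic, the whole procedure is strongly polynomial for constant $d$ and, in particular, does not incur any approximation error, allowing us to solve the separation problem with $\delta = 0$.
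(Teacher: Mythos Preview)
Your proposal is correct and follows essentially the same approach as the paper: reduce the separation problem to maximizing the piecewise quadratic function $g(\vec\mu)=\FV(\vec\mu)-\vec w^\top\vec\mu$ over $\Delta$, enumerate all cells $Q\cap Q^*$ of the combined arrangement $\mathcal{H}\cup\mathcal{H}^*$ using \Cref{thm:buck} and the reverse-search algorithm of \citet{Avis96}, solve the (linear) first-order conditions on each cell, and compare the best candidate value to zero. The only cosmetic difference is your treatment of the degenerate case where the restricted first-order system is under-determined: you defer to lower-dimensional faces, whereas the paper observes directly that the critical set is an affine subspace on which $g$ is constant and therefore takes an arbitrary point of it as the candidate; both arguments are valid since a positive-dimensional affine subspace meeting the interior of a compact polytope also meets its boundary.
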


\begin{proof}
By \Cref{lem:piecewise-linear}\ref{it:piecewise-linear-2}, for every $s \in [d]$, the function $\FV_s$ is affine on $P \cap P^*$ where $P$ is a $(d-1)$-cell of $\mathcal{H}$ and $P^*$ is a $(d-1)$-cell of $\mathcal{H}^*$. Thus, the function
\begin{align*}
g(\vec \mu) \coloneqq \FV(\vec \mu) - \vec w^\top \vec\mu = \sum_{s \in [d]} \mu_s \FV_s(\vec \mu) - \vec w^\top \vec\mu
\end{align*}
is a quadratic function on $P \cap P^*$.

Let $\vec \mu^* \in \arg\max\{ g(\vec \mu) \mid \vec \mu \in \Delta\}$ be a belief where the global maximum of $g$ is attained.
There is a $(d-1)$-cell $P$ of $\mathcal{H}$ and a $(d-1)$-cell $P^*$ of $\mathcal{H}^*$ such that $\vec \mu^* \in P \cap P^*$.
Since $P$ and $P^*$ are polytopes, so is their intersection $P \cap P^*$, and the set of inequalities defining the polytope $P \cap P^*$ is the union of those defining $P$ and $P^*$, respectively.

This implies that either $\vec \mu^*$ is contained in a $0$-cell of $P$ or $P^*$ or there are $k, k^* \in \{1,\dots,d-1\}$, a $k$-cell $Q$ of $\mathcal{H}$ and a $k^*$-cell $Q^*$ of $\mathcal{H}^*$ such that $\vec \mu^*$ is contained in the relative interior of $Q \cap Q^*$.
Since $Q \cap Q^* \subseteq P \cap P^*$, the function $g(\vec \mu)$ is also a quadratic function on $Q \cap Q^*$.
We claim that $g$ is differentiable when restricted to $Q \cap Q^*$.
Since $Q \cap Q^* \subseteq P$ and $P$ is a $(d-1)$-cell of $\mathcal{H}$, we can assume that the ordering $\pi(\cdot; \vec \mu)$ of the links with respect to the times $\theta_i^*(\vec \mu)$ when link~$i$ is first used is constant on $Q \cap Q^*$.
In addition, since $Q \cap Q^* \subseteq P^*$ and $P^*$ is a $(d-1)$-cell of $\mathcal{H}$, for every scenario $s \in [d]$, we can assume that the ordering $\sigma_s(\cdot; \vec \mu)$ of the links with respect to the times $\omega_{i,s}(\vec \mu)$ when flow exits link~$i$ first in scenario $s$ is constant on $Q \cap Q^*$.
In turn, this implies that the value $\min \{i \in [m] \mid \bar{\nu}(i; \vec \mu) > \arrival\}$ of the number of links whose capacity is enough to support the arrival rate is constant on $Q \cap Q^*$.
With the same arguments, for every scenario~$s \in [d]$ there is a fixed set of links whose times $\omega_{i,s}(\vec\mu)$ is larger than the time horizon $T$. With the notation from \Cref{lem:flow-value:function}, we have that
the number of links that contribute to the throughput defined as
\begin{align*}
\numContributingLinks = \min \bigl\{ \max \{i \in [m] \mid \omega_{\sigma_s(i;\vec \mu,s)} \leq T\}, \min \{i \in [m] \mid \bar{\nu}_s(i; \vec \mu) > \arrival\}\bigr\}
\end{align*}
is constant on $Q \cap Q^*$.
We conclude that in the formula for $\FV_s(\vec \mu)$ given in \Cref{lem:flow-value:function}, the only two terms involving $\vec \mu$ that are not constant in $\vec \mu$ are $\omega_{\sigma_s(i;\vec \mu),s}(\vec \mu)$ and $\omega_{\sigma_s(\numContributingLinks;\vec \mu),s}(\vec \mu)$ which are linear in $\vec \mu$. Hence, $g$ is differentiable when restricted to $Q \cap Q^*$.
If $\vec \mu^*$ is in the relative interior of $Q \cap Q^*$, we have as a necessary condition that $\nabla g |_{Q \cap Q^*}(\vec \mu^*) = 0$ where $g |_{Q \cap Q^*}$ is the restriction of $g$ on $Q \cap Q^*$.

Our algorithm for the solution of the separation problem uses the reverse search algorithm \citep[Theorem~3.3]{Avis96} as a main building block. The algorithm allows to iterate over all $(d-1)$-cells of a hyperplane arrangement $\mathcal{A}$ in dimension $d-1$ in time polynomial in $d$, $|\mathcal{A}|$, the number of the $(d-1)$-cells of $\mathcal{A}$, and the time to solve a linear program with $d-1$ variables and $|\mathcal{A}|-1$ inequalities.
To also enumerate over the $k$-cells with $k \in \{1,\dots,d-2\}$, we argue as follows. Every $k$-cell lies in the intersection of $d-1-k$ hyperplanes. Hence for every choice of $d-1-k$ hyperplanes, we obtain a corresponding hyperplane arrangement of $|\mathcal{A}|-(d-1-k)$ hyperplanes in dimension $k$. With \Cref{thm:buck}, this arrangement has at most
\begin{align*}
\sum_{i=0}^k \binom{|\mathcal{A}|-d-1-k}{i}  \leq \bigl( |\mathcal{A}|-d-1-k \bigr)^{k+1}
\end{align*}
$k$-cells. This implies that the time needed to enumerate all $k$-cells is polynomial in
\begin{align*}
\binom{|\mathcal{A}|}{d-1-k} \bigl( |\mathcal{A}-d-1-k\bigr)^{k+1} \leq |\mathcal{A}|^{d-1-k} \bigl( |\mathcal{A}-d-1-k\bigr)^{k+1}.
\end{align*}
This is polynomial in the input size as long as $d$ is constant and $|\mathcal{A}|$ is bounded by a polynomial of the input size.
Using that $|\mathcal{H}| = \frac{m(m-1)}{2}$ and that $|\mathcal{H}^*| = \frac{dm(m-1)}{2} + dm$, we conclude that we can enumerate both over all $k$-cells $Q$ of $\mathcal{H}$ with $k \in \{0,\dots,d-1\}$ and over all $k^*$-cells $Q^*$ of $\mathcal{H}^*$ with $k^* \in \{0,\dots,d-1\}$ in polynomial time.

For every $Q \cap Q^*$ such that $Q$ is a $k$-cell of $\mathcal{H}$ and $Q^*$ is a $k^*$-cell of $\mathcal{H}^*$ for some $k,k^* \in \{0,\dots,d-1\}$, we compute a candidate $\vec \mu_{Q,Q^*} \in Q \cap Q^*$ for the global maximimum of $g$ on $Q \cap Q^*$.
If $Q \cap Q^* = \emptyset$, there is nothing to do. If $\dim(Q \cap Q^*) = 0$, we compute the unique point in $Q \cap Q^*$ by Gaussian elimination. Since the numbers involved in the description of the hyperplanes for the definition of $Q$ and $Q^*$ are inputs to our problem, this can be done in polynomial time.

If $\dim(Q \cap Q^*)$, we compute a solution $\vec \mu$ to the linear system $\nabla g |_{Q \cap Q^*} = 0$. The system is linear since $g$ is quadratic when restricted to $Q \cap Q^*$ and, hence, its gradient is linear. Again, a solution to this linear system can be computed with Gaussian elimination and, hence, runs in polynomial time. Since the set of all points where the gradient is zero is a affine subspace of $Q \cap Q^*$, we further see that all these points have the same value of $g$. Hence, it is enough to take an arbitrary such point as the candidate $\vec \mu_{Q, Q^*}$.

After having computed all candidates $\vec \mu_{Q,Q^*}$, we simply choose the one with the highest value of $g$. Consider a true global optimum $\vec \mu^*$ of $g$ on $\Delta$. If $\vec \mu^*$ is contained in the relative interior of $Q \cap Q^*$, we have that $\nabla g|_{Q \cap Q^*} \vec \mu^* = \nabla g|_{Q \cap Q^*} \vec \mu_{Q,Q^*} = 0$ and, hence, $g(\vec \mu^*) = \vec \mu_{Q,Q^*}$. If, on the other hand, $\dim(Q \cap Q^*)$, we have that $\vec \mu^* = \vec \mu_{Q,Q^*}$. In either case, we have computed a global maximum $\vec \mu^{**}$ of $g$ on $\Delta$.

If $g(\vec \mu^{**}) > 0$, we have that $\FV(\vec \mu^{**}) - \vec w^\top \vec \mu^{**} > 0$ and, hence, we have computed $\vec \mu^{**} \in \Delta$ with $\vec w^\top \vec \mu^{**} < \FV(\vec \mu^{**})$ as requested. If, on the other hand, $g(\vec \mu^{**}) \leq 0$, we have that $\FV(\vec \mu) - \vec w^\top \vec \mu \leq 0$ for all $\vec \mu \in \Delta$. We conclude that we solved the exact separation problem in polynomial time.
\end{proof}

As an immediate corollary, we obtain that we can also solve the weak separation problem for the dual signaling problem in polynomial time
\begin{corollary}
\label{cor:strong-separation}
For any $\delta > 0$, the weak separation problem for the dual signaling problem can be solved in polynomial time.
\end{corollary}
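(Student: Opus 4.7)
The proof plan is simply to invoke \Cref{lem:strong-separation} as a subroutine: since that lemma solves the \emph{exact} separation problem, the weak separation problem with any slack $\delta > 0$ reduces to it immediately. Concretely, given a rational input $\vec w \in \mathbb{Q}^{\s}$, I would first apply the algorithm of \Cref{lem:strong-separation}. If it certifies that no $\vec \mu \in \Delta$ with $\vec w^\top \vec \mu < \FV(\vec \mu)$ exists, then $\vec w$ lies in the feasible region $K$ of \eqref{eq:dual}, hence $\vec w \in K \subseteq S(K,\delta)$, and condition~\ref{it:weak-separation-1} of \Cref{def:weak-separation} is satisfied.

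Otherwise, the algorithm returns a belief $\vec \mu \in \Delta$ with $\vec w^\top \vec \mu < \FV(\vec \mu)$. Since the inequality $\vec x^\top \vec \mu \geq \FV(\vec \mu)$ is by definition valid on all of $K$ and strictly violated at $\vec w$, the normalized direction $\vec c \coloneqq -\vec \mu / \lVert \vec \mu \rVert_{\infty}$ is a rational vector with $\lVert \vec c \rVert_{\infty} = 1$ satisfying, for every $\vec x \in S(K,-\delta) \subseteq K$,
\begin{equation*}
\vec c^\top \vec x \;=\; -\frac{\vec \mu^\top \vec x}{\lVert\vec \mu\rVert_{\infty}} \;\leq\; -\frac{\FV(\vec \mu)}{\lVert\vec \mu\rVert_{\infty}} \;<\; -\frac{\vec w^\top \vec \mu}{\lVert\vec \mu\rVert_{\infty}} \;=\; \vec c^\top \vec w \;\leq\; \vec c^\top \vec w + \delta,
\end{equation*}
which is condition~\ref{it:weak-separation-2}.

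Regarding representation: the candidate beliefs $\vec \mu_{Q,Q^*}$ produced inside the proof of \Cref{lem:strong-separation} arise from Gaussian elimination on rational data (hyperplane descriptions and the linear system $\nabla g|_{Q \cap Q^*} = 0$), so the returned $\vec \mu$ is rational and the rescaling by $\lVert \vec \mu \rVert_{\infty}$ is exact; no rounding is required. Consequently the entire procedure inherits the polynomial running time of \Cref{lem:strong-separation}, uniformly in $\delta$. There is no real obstacle here: the work was already done in \Cref{lem:strong-separation}, and the parameter $\delta$ merely provides slack that we do not need to exploit.
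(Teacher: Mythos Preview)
Your proposal is correct and follows essentially the same approach as the paper: invoke \Cref{lem:strong-separation}, conclude $\vec w \in K \subseteq S(K,\delta)$ in the feasible case, and in the infeasible case use the returned belief $\vec\mu$ (normalized by $\lVert\vec\mu\rVert_\infty$) as the separating direction via the same chain of inequalities. The only addition is your explicit remark on rationality of the output, which the paper leaves implicit.
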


\begin{proof}
By \Cref{lem:strong-separation}, given $\vec w \in \mathbb{Q}^d$ we can compute in polynomial time a belief $\vec \mu^* \in \Delta$ such that $\vec w ^\top \vec \mu^* < \FV(\vec \mu^*)$ or decide that no such $\vec \mu^*$ exists.
If no such belief $\vec \mu^*$ exists, we clearly have $\vec w^\top \vec \mu \geq \FV(\vec \mu)$ for all $\vec \mu \in \Delta$ and, thus, $\vec w \in K$ where
\begin{align*}
K \coloneqq \bigl\{ \vec v \in \mathbb{R}^{d} \;\big\vert\;  \vec v^\top \vec \mu \geq \FV(\vec \mu) \text{ for all } \vec \mu \in \Delta \bigr\}.
\end{align*}
In particular, we have then also that $\vec w \in S(K,\delta)$ for all $\delta > 0$ such that \Cref{def:weak-separation} \eqref{it:weak-separation-1} is satisfied.

If, on the other hand, we computed $\vec \mu^* \in \Delta$ with $\vec w^\top \vec \mu^* < \FV(\vec \mu^*)$, we have that
\begin{align*}
(-\vec \mu^*)^\top \vec w > -\FV(\vec \mu^*)  \geq (-\vec \mu^*)^\top \vec v
\end{align*}
for all $\vec v \in K$. This implies in particular for all $\delta > 0$ that $(-\vec \mu^*)^\top \vec w + \delta \geq (-\vec \mu^*)^\top \vec v$ for all $\vec v \in S(K,-\delta)$. Since $\vec \mu^*\in \Delta$, we further have $||-\vec\mu^*||_{\infty} \leq 1$. Normalizing, we have that \Cref{def:weak-separation}~\eqref{it:weak-separation-2} is satisfied for $-\vec \mu^* / ||\vec\mu^*||_{\infty}$.
\end{proof}

We are now ready to prove the main theorem of this section (\Cref{thm:ptas-additive}). The general idea for the proof is to use \Cref{cor:strong-separation} and the Ellipsoid method. However, in order to do so, we have to show that we can fit the dual feasible region into a ball with polynomially bounded diameter.
To this end, we show an upper bound on $||\vec w||_{\infty}$ of the dual optimal vector $\vec w$ based on the supergradient of $\FV$. 
Further calculations involving the approximation error of solutions $\vec v \in S(K,-\eps)$ then yield the result.

\begin{proof}[Proof of \Cref{thm:ptas-additive}]
Consider the dual signaling problem
\begin{align*}
\inf \Bigl\{ \vec w^\top \vec \lambda^* \;\Big\vert\; \vec w \in \mathbb{R}^d \text{ with } \vec w^\top \vec \mu \geq \FV(\vec \mu) \text{ for all } \vec \mu \in \Delta \Bigr\}.
\end{align*}
Let $K = \{\vec w \in \mathbb{R}^d \text{ with } \vec w^\top \vec \mu \geq \FV(\vec \mu) \text{ for all } \vec \mu \in \Delta\}$ be the set of dual feasible solutions. It is straightforward to check that $K$ is convex.
Indeed, for arbitrary $\vec v, \vec w \in K$, we have that
\begin{align*}
\biggl(\frac{\vec v + \vec w}{2}\biggr)^{\!\top} \vec \mu  = \frac{1}{2} \bigl(\vec v^\top \vec \mu + \vec w^\top \vec \mu\bigr) \geq \FV(\vec \mu)
\end{align*}
for all $\vec \mu \in \Delta$.
By \citet[Corollary~4.2.7]{Groetschel88} there is an algorithm that solves the weak optimization problem for every \emph{circumscribed} convex body $K \subseteq \mathbb{R}^n$ (where $n$ is given) in  with a polynomially bounded number of calls to a weak separation oracle and polynomial overhead. Here, circumscribed means that we need to give as an additional input to the algorithm a value $R \in \mathbb{Q}_{>0}$ such that $K \subseteq \bigl\{\vec x \in \mathbb{R}^d \mid ||x||_2 \leq R \bigr\}$.
However, this is clearly impossible since $K$ is unbounded.
To resolve this issue, we define a bounded subset $\bar{K} \subset K$ such that $$\inf\{ \vec w^\top \vec \lambda^* \mid \vec w \in K\} = \inf \{\vec w^\top \vec \lambda^* \mid \vec w \in \bar{K}\}$$ and for which a bound on the radius can be given.
To this end, note that since $\FV(\vec \mu) \geq 0$, we have $\vec w \geq 0$ for every dual feasible $\vec w$.
Since, by Lemma~\ref{lem:no-duality-gap}, we have strong duality, the supremum in the primal signaling problem and the infimum in the dual signaling problem are attained.
This implies in particular that for an optimal solution $\vec w$ of the dual signaling problem we have $\vec w^\top \vec \mu^* = \FV(\vec \mu^*)$ for some $\vec \mu^* \in \Delta$.
This implies that $\FV$ is locally concave at $\vec \mu^*$ and, hence, the supergradient $\partial \FV(\vec \mu^*)$ exists.
Furthermore, using that $\vec w^\top \vec \mu \geq \FV(\vec \mu)$ for all $\vec \mu \in \Delta$, we further have that $\vec w \in \partial \FV(\vec \mu^*)$, i.e., $\vec w$ is contained in the supergradient of $\FV$ at $\vec \mu^*$. 
The norm of the elements in the subgradient can be bounded by the (directional) derivatives (see, e.g., \citet[Proposition 3.2.12]{lucchetti2006convexity}).
Using that $\FV$ is differentiable almost everywhere, for every $\vec{w} \in \partial \FV(\vec \mu^*)$ we can bound
\begin{align*}
||\vec{w}||_{\infty} &\leq \sup \bigl\{||\nabla \FV(\vec \mu)||_\infty \;\big\vert\; \vec\mu \in \Delta \text{ and }\nabla \FV(\vec \mu) \text{ exists} \bigr\} \\
&\leq \sup\Biggl\{ \frac{\partial}{\partial \mu_s} \FV(\vec \mu) = \sum_{t \in [d]} \frac{\partial}{\partial \mu_s} \FV_t(\vec \mu) \mu_t + \FV_s(\vec \mu) \;\Bigg\vert\; s \in [d], \vec\mu \in \Delta \text{ and }\nabla \FV(\vec \mu) \text{ exists}  \Biggr\} \\
&\leq \sup\Biggl\{ \sum_{t \in [d]} \frac{\partial}{\partial \mu_s} \FV_t(\vec \mu) \mu_t + \FV_s(\vec \mu) \;\Bigg\vert\; s \in [d], \vec\mu \in \Delta \text{ and }\nabla \FV(\vec \mu) \text{ exists} \Biggr\}.
\intertext{Using \Cref{lem:flow-value:function}, and setting $\nu^* \coloneqq \sum_{i \in [m]} \nu_i$, we obtain the bound}
||\vec{w}||_{\infty}&\leq d(m+1) \sup_{i \in [m], s\in [d]} \bigg\vert\bigg\vert\frac{\partial}{\partial \mu_s}\omega_{i,s}(\vec \mu) \bigg\vert\bigg\vert \nu^* \\
&= d(m+1) \sup_{i \in [m], s\in [d]} \bigg\vert\bigg\vert\frac{\partial}{\partial \mu_s}\theta^*_{i,s}(\vec \mu) \bigg\vert\bigg\vert \nu^*
\intertext{Let $\kappa \coloneqq \min_{S \subseteq [m] : u \neq \sum_{j \in S} \nu_j} \{u - \sum_{j \in S} \nu_j\}$. Then, we obtain}
||\vec{w}||_{\infty} &\leq d(m+1) \sup \{\tau_{i,s} \mid i \in [m], s\in [d]\}(\nu^*)^2/\kappa
\end{align*}
Hence, we can set $\bar{K} \coloneqq \{\vec w \in \mathbb{R}^d \mid ||w||_{\infty} \leq R\}$ with $R \coloneqq d(m+1) d(m+1) \sup \{\tau_{i,s} \mid i \in [m], s\in [d]\}\nu^*$. Since $\log R$ is polynomial in the input size of the problem, this does not increase the runtime of the Ellipsoid method.

Having solved the weak optimization problem, we obtain $\vec w \in \mathbb{Q}^d$ such that $\vec w \in S(K,\eps)$ and $\vec w^\top \vec \lambda^* \leq \vec v^\top \vec \lambda^* + \eps$ for all $\vec v \in S(K,-\eps)$. Since $\vec w \in S(K,\eps)$, there is a vector $\bar{\vec w} \in K$ such that $||\vec w- \bar{\vec w}||_2 \leq \eps$.
This implies in particular that $||\vec w - \bar{\vec w}||_{\infty} \leq \eps$.
We obtain
\begin{align*}
\opt \leq \bar{\vec w}^\top \vec \lambda^* \leq (\vec w + \mathbf{1}\eps)^\top \vec \lambda^* \leq \vec w^\top \vec \lambda^* + \eps.
\end{align*}
In addition, we have that
\begin{align*}
\vec w^\top \vec \lambda^* \leq \vec v^\top \vec \lambda^* + \eps
\end{align*}
for all $\vec v \in S(K,-\eps)$. Since $K \subseteq \{\vec v' \in \mathbb{R}^d \mid ||\vec v- \vec v'||_{\infty} \leq d \eps \text{ for some } v \in S(K,-\eps)\}$, this implies in particular that $\vec w^\top \vec \lambda^* \leq \vec v^\top \vec \lambda^* + (d+1)\eps$ for all $\vec v \in K$ and, hence, $\vec w^\top \vec \lambda^* \leq (d+1)\eps + \opt$. We have established
\begin{align*}
\opt - \eps \leq \vec w^\top \vec \lambda^* \leq (d+1)\eps + \opt.
\end{align*}
This implies that for $p := \vec w^\top \vec \lambda^* - \eps$, we have
\begin{align*}
\opt - (d+2)\eps \leq p   \leq \opt.  
\end{align*}
Choosing $\eps^* \coloneqq \eps/(d+2)$ yields then the claimed result.
\end{proof}

\section{Multiplicative FPTAS for Throughput Maximization}
\label{sec:ptas-multiplicative}

While the additive PTAS devised in the last section allows to approximate the optimal throughput up to an arbitrary constant, it does not yield the corresponding approximate optimal signals. In this section, we devise a multiplicative FPTAS that also yields the corresponding signals, i.e., we show the following theorem.

\begin{theorem}\label{thm:PTAS-throughput}
For constant $d$ and for any $\eps^*>0$, there exists a fully polynomial-time algorithm for computing a signaling scheme, such that for its induced throughput $\alg$, it holds that
    $\alg \geq (1-\eps^*)\opt$.
\end{theorem}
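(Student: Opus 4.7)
The plan is to construct a piecewise-convex under-estimator $\tilde{\FV}$ of $\FV$ on a polynomially-sized non-uniform refinement of $\Delta$, and then to solve a single linear program yielding both the best convex decomposition of $\vprior$ with respect to $\tilde{\FV}$ and the corresponding public signaling scheme (with at most $d+1$ signals, by Caratheodory's theorem).

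For the refined partition, I would start from the intersections $P \cap P^*$ of $(d-1)$-cells $P$ of $\mathcal{H}$ and $P^*$ of $\mathcal{H}^*$, on which by \Cref{lem:piecewise-linear} each $\FV_s$ is affine, so $\FV = \sum_s \mu_s \FV_s$ is quadratic. I would overlay a non-uniform grid that is uniform with fineness $\delta = \mathrm{poly}(\eps^*)$ far from the arrangement's relative boundaries and is refined geometrically as one approaches any face where the denominator $u - \bar\nu(i;\vec\mu)$ in~\eqref{eq:equilibrium:breakpoints:mu} vanishes or where some $\exitTime_{i,s}(\vec\mu)$ crosses $T$. \Cref{thm:buck} bounds the number of cells in $\mathcal{H}$ and $\mathcal{H}^*$ polynomially, and, for constant $d$, the resulting refinement has a polynomial number of pieces in the input and in $1/\eps^*$. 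On each small piece $C$, I would define $\tilde{\FV}|_C$ as a convex under-estimator of $\FV|_C$ (e.g.\ a supporting quadratic of $\FV$ from below, or an affine function through suitable vertex values), chosen so that $\tilde{\FV} \leq \FV$ everywhere and $\tilde{\FV}(\vec\mu) \geq (1-\eps^*)\FV(\vec\mu)$ uniformly on $\Delta$.

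Given $\tilde{\FV}$, the optimization of $\sum_j \alpha_j \tilde{\FV}(\vec\mu^j)$ over convex decompositions $\sum_j \alpha_j \vec\mu^j = \vprior$ reduces to a polynomial-size LP: since $\tilde{\FV}$ is convex on each piece, each $\vec\mu^j$ can be replaced by a convex combination of its piece's vertices without decreasing the objective. Introducing $\beta_v \geq 0$ for every vertex $v$ of the refinement, the LP reads
\[
\max\Biggl\{ \sum_v \beta_v \tilde{\FV}(v) \;\Bigg\vert\; \sum_v \beta_v v = \vprior,\ \sum_v \beta_v = 1,\ \beta_v \geq 0\Biggr\}.
\]
A basic optimal solution has support at most $d+1$ by Caratheodory and directly yields the signaling scheme. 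To turn this into the approximation guarantee, let $\hat{\FV}$ and $\hat{\tilde{\FV}}$ be the concave envelopes of $\FV$ and $\tilde{\FV}$: the proof of \Cref{lem:no-duality-gap} gives $\opt = \hat{\FV}(\vprior)$, and the LP value equals $\hat{\tilde{\FV}}(\vprior)$; since $(1-\eps^*)\FV \leq \tilde{\FV}$ implies $(1-\eps^*)\hat{\FV} \leq \hat{\tilde{\FV}}$, the induced throughput satisfies $\alg \geq \sum_v \beta_v \tilde{\FV}(v) \geq (1-\eps^*)\opt$.

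The hard part will be establishing the \emph{multiplicative} inequality $\tilde{\FV}\geq(1-\eps^*)\FV$ with a polynomial number of cells. A uniform grid together with the Lipschitz constant of $\FV$ on each $P\cap P^*$ only delivers an additive bound $\tilde{\FV}\geq \FV - O(\delta)$, which is incompatible with a multiplicative guarantee near faces where $\FV$ is small but its gradient explodes: the prefactors $1/(u-\bar\nu(i;\vec\mu))$ in~\eqref{eq:equilibrium:breakpoints:mu} diverge as $\bar\nu(i;\vec\mu)\to u$, and $\FV$ itself vanishes on the region where all $\exitTime_{i,s}$ exceed $T$. The non-uniform discretization is designed exactly to track these degeneracies, with geometrically shrinking cell sizes near each critical face, so that $(\FV-\tilde{\FV})/\FV\leq\eps^*$ uniformly while the total cell count remains polynomial in the input size and in $1/\eps^*$, preserving the FPTAS guarantee.
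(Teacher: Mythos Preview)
Your high-level architecture---build a piecewise-convex under-estimator of $\FV$ on a polynomial refinement of $\Delta$, then solve an LP over the vertices---is the same as the paper's. However, the two pieces that actually carry the argument are not worked out, and your sketch of them is based on a misreading of the structure.

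First, the diagnosis of where refinement is needed is off. Within a $(d-1)$-cell of $\mathcal{H}$ the permutation $\pi(\cdot;\vec\mu)$ is fixed, so $\bar\nu(i;\vec\mu)=\sum_{j\le i}\nu_{\pi(j)}$ is a \emph{constant}; the prefactors $1/(u-\bar\nu(i;\vec\mu))$ in~\eqref{eq:equilibrium:breakpoints:mu} do not vary with $\vec\mu$ and therefore cannot cause the gradient of $\FV$ to blow up inside a cell. The genuine obstruction to a pointwise multiplicative bound is not there but at the \emph{coordinate} faces $\{\mu_s=0\}$, because $\FV(\vec\mu)=\sum_s\mu_s\FV_s(\vec\mu)$ and the $s$-th summand vanishes linearly while $\FV_s$ need not. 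Your proposed refinement, which is geometric near the $\mathcal{H}/\mathcal{H}^*$ faces, does not address this.

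Second, and more importantly, you never construct $\tilde\FV$ nor prove $\tilde\FV\ge(1-\eps^*)\FV$. The paper avoids both difficulties with one idea you do not use: it exploits \Cref{lem:piecewise-linear}\ref{it:piecewise-linear-3} (each $\FV_s$ is \emph{convex} on every $(d-1)$-cell of $\mathcal{H}$, not merely affine on $P\cap P^*$) and replaces the coefficients $\mu_s$ by a geometric rounding $h_{\eps,\kappa}(\mu_s)\in\{0,(1-\eps)^{\kappa-1},\dots,1\}$. On each cell of the arrangement $\mathcal{L}=\mathcal{H}\cup\{\mu_s=(1-\eps)^{j-1}\}$ the under-estimator $\FV_{\eps,\kappa}(\vec\mu)=\sum_s h_{\eps,\kappa}(\mu_s)\FV_s(\vec\mu)$ is then a nonnegative combination of convex functions, hence convex, with no need to approximate a quadratic. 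Crucially, the paper does \emph{not} establish the pointwise bound $\tilde\FV\ge(1-\eps^*)\FV$ you aim for; it proves only $\FV_{\eps,\kappa}\ge(1-\eps)\FV-d(1-\eps)^{\kappa}\FV_{\max}$ and then converts the additive tail into a multiplicative loss by bounding $d(1-\eps)^{\kappa}\FV_{\max}\le\delta\,\opt$ via a lower bound on $\opt$ coming from full information revelation (\Cref{lem:positive-throughput}, \Cref{lem:delta-opt}). This last step is what makes $\kappa$ polynomial and is entirely absent from your outline.
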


As this result trivially holds, whenever $\opt=0$, we assume for the remaining part of this section that $\opt>0$.
The following lemma shows that this assumption is in fact equivalent to the smallest travel time
being strictly smaller than the time horizon $T$. 

\begin{lemma}\label{lem:positive-throughput}
    The following are equivalent:
    \begin{enumerate}[label=(\roman*)]
        \item $\opt>0$, and
        \item $T - \min \{ \tau_{i,s} \mid i \in [m], s\in [\s] \}>0.$
    \end{enumerate}
\end{lemma}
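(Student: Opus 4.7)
The plan is to prove the two implications separately; both follow fairly directly from the throughput definition together with the structural description of the dynamic equilibrium recalled in Section~\ref{sec:structure-deterministic}.

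For $(i)\Rightarrow(ii)$, I would argue by contrapositive. Assume $\tau_{i,s}\geq T$ for every $i\in[m]$ and every $s\in[\s]$. By definition of the outflow, we have $f^-_{i,s}(\eta)=0$ for all $\eta<\tau_{i,s}$, since no inflow has yet reached the end of the queue (there is no inflow before time zero). Because $\tau_{i,s}\geq T$, this gives $f^-_{i,s}(\eta)=0$ for all $\eta\in[0,T)$, so $F_{T,s}(\vec f)=0$ for every scenario and every flow $\vec f$. Consequently $\FV(\vec\mu)=0$ for every $\vec\mu\in\Delta$, and every convex decomposition of $\vprior$ achieves zero expected throughput, so $\opt=0$.

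For $(ii)\Rightarrow(i)$, let $(i^*,s^*)\in[m]\times[\s]$ attain the minimum, so $\tau_{i^*,s^*}<T$. Consider full information revelation, i.e.\ the signaling scheme with signal set $[\s]$ and $\varphi_{s,\xi}=\lambda^*_s$ if $s=\xi$ and $0$ otherwise; the posterior upon signal $s$ is $\vec e_s$. Under posterior $\vec e_{s^*}$, the ordering $\pi(\cdot\,;\vec e_{s^*})$ of the links by expected travel time coincides with the ordering by $\tau_{\cdot,s^*}$; let $i_1$ denote the first-ranked link, so $\tau_{i_1,s^*}\leq\tau_{i^*,s^*}<T$. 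By \eqref{eq:equilibrium:breakpoints:mu}, $\theta^*_{i_1}(\vec e_{s^*})=0$, hence the equilibrium inflow $f_{i_1}$ is at least $\min\{\arrival,\nu_{i_1}\}>0$ on an initial interval starting at $\theta=0$. The outflow formulas recalled in \Cref{lem:flow-value:function} then yield $\sum_j f^-_{j,s^*}(\eta)\geq\min\{\arrival,\nu_{i_1}\}$ for every $\eta\in[\tau_{i_1,s^*},T]$, and integrating gives $F_{T,s^*}\geq\min\{\arrival,\nu_{i_1}\}(T-\tau_{i_1,s^*})>0$. Since this is a valid decomposition of $\vprior$, the expected throughput of the scheme is at least $\lambda^*_{s^*}\min\{\arrival,\nu_{i_1}\}(T-\tau_{i_1,s^*})>0$, provided $\lambda^*_{s^*}>0$, so $\opt>0$.

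The only mild subtlety I foresee is the degenerate case where the scenario $s^*$ realizing the minimum satisfies $\lambda^*_{s^*}=0$: then any public signaling scheme forces $\mu^\xi_{s^*}=0$ for every signal $\xi$ with positive probability, so $s^*$ cannot be exploited to produce positive expected throughput. I would therefore interpret the statement under the standard convention that $\lambda^*_s>0$ for every $s\in[\s]$, or equivalently restrict the minimum in (ii) to those $s$ with $\lambda^*_s>0$. Beyond this bookkeeping point, there is no real technical obstacle: both directions reduce to short arguments once the structural results of Sections~\ref{sec:structure-deterministic} and~\ref{sec:structure-stochastic} are in hand.
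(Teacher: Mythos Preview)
Your argument is correct and, for $(ii)\Rightarrow(i)$, essentially identical to the paper's: both use full information revelation and the same lower bound via the first-used link in scenario~$s^*$. Your caveat about $\lambda^*_{s^*}=0$ is legitimate and is a tacit assumption the paper also makes without comment.

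For $(i)\Rightarrow(ii)$, however, your contrapositive is a genuinely more elementary route. The paper argues the forward implication: it first invokes strong duality (\Cref{lem:no-duality-gap}) to guarantee that the supremum in \eqref{eq:primal} is attained at some beliefs $\check{\vec\mu}^j$, then appeals to the explicit throughput formula of \Cref{lem:flow-value:function} to extract a link~$i^*$ with $\exitTime_{i^*,s}<T$, from which $\tau_{i^*,s}<T$ follows. Your direct observation---that no particle can exit any link in any scenario before time $\min_{i,s}\tau_{i,s}$, so if this minimum is at least~$T$ the throughput vanishes identically on~$\Delta$---is self-contained and bypasses both the duality lemma and the structural formula. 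This is arguably preferable: \Cref{lem:positive-throughput} is used to set up the multiplicative FPTAS of Section~\ref{sec:ptas-multiplicative}, and avoiding a dependence on \Cref{lem:no-duality-gap} keeps that scheme logically independent of the additive-PTAS machinery in Section~\ref{sec:ptas-additive}.
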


\begin{proof}
    We first show how $(i)$ implies $(ii)$.
    By the strong duality of the primal and dual signaling problem (\Cref{lem:no-duality-gap}), the supremum in \eqref{eq:primal} is attained.
    Let $\{\check{\vec\mu}^1,\dots,\check{\vec\mu}^d\}$ be the beliefs and $\{\check{\alpha}_1,\dots,\check{\alpha}_d\}$ the coefficients for which the optimum in \eqref{eq:primal} is attained.
    This yields 
    \begin{align*}
        \opt = \sum_{j\in[\s]} \check{\alpha}_j\FV(\check{\vec\mu}^j)
        =\sum_{j\in[\s]} \check{\alpha}_j \sum_{s\in[\s]}\check{\mu}_s^j\FV_s(\check{\vec\mu}^j).
    \end{align*}
    With $\opt>0$, there exists a scenario $s$ and an optimal belief $\check{\vec\mu}^j$, such that $\FV_s(\check{\vec\mu}^j)>0$.
    Using \Cref{lem:flow-value:function}, there exists a link $i^*$, such that the exit time $\exitTime_{i^*,s}$ is strictly less than the time horizon $T$, i.e., $T-\exitTime_{i^*,s}>0$.
    With the definitions of the exit times, it thus follows that
    \begin{align*}
        0
        &<T-\exitTime_{i^*,s}\\
        &= T- (\theta^*_{i^*} (\check{\vec\mu}^j) + \tau_{i^*, s})\\
        &\leq T- \tau_{i^*, s}\\
        &\leq T- \min \{ \tau_{i,s} \mid i \in [m], s\in [\s] \}.
    \end{align*}

    Next, we show how $(ii)$ implies $(i)$.
    For this, let $i^*\in [m]$ and $s^*\in [\s]$ be such that $\tau_{i^*,s^*}=\min \{ \tau_{i,s} \mid i \in [m], s\in [\s] \}$. Then we assume that $T-\tau_{i^*,s^*}>0$.
    Full information revelation as a signaling scheme induces throughput 
    \begin{align*}
        \FV^{\text{FI}}(\vprior)=\sum_{s\in [\s]}\prior_s\FV(\vec e_s),
    \end{align*}
    where $\vec e_s$ is the unit vector with entry $1$ for scenario $s$ and $0$ else.
    The throughput $\FV(\vec e_{s^*})$ for the realized scenario $s^*$ can be bounded from below by computing the throughput for the case in which the entire inflow rate $\arrival$ uses link $i^*$.
    This yields
    \begin{align*}
        \FV(\vec e_{s^*})\geq (T-\tau_{i^*,s^*})\cdot\min\{\nu_{i^*},\arrival\}>0.
    \end{align*}
    As the optimal signaling scheme induces a throughput no lower than full information revelation, we obtain
    \begin{align*}
        \opt(\vprior)
        \geq \FV^{\text{FI}}(\vprior)
        =\sum_{s\in [\s]}\prior_s\FV(\vec e_s)
        \geq \prior_{s^*}  (T-\tau_{i^*,s^*})\cdot\min\{\nu_{i^*},\arrival\}>0,
    \end{align*}
    and the result follows.
\end{proof}

The next lemma will be used to define the algorithm and bound its running time. For the proof we bound $\opt$ in terms of the throughput achieved by full information revelation.

\begin{lemma}\label{lem:delta-opt}
    Let $\opt>0$. For any $\eps>0$ and any $\delta >0$, there exists a $\kappa\in \N$ such that
    $(1-\eps)^\kappa\s T \arrival \leq \delta \opt$ and $\kappa$ is polynomially bounded in the input size of the given instance.
\end{lemma}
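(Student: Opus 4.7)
The plan is to solve the inequality $(1-\eps)^\kappa dTu \leq \delta \opt$ for $\kappa$ and then show that the resulting bound is polynomially bounded in the input size. Taking logarithms (using that $\log(1-\eps) < 0$), the inequality is equivalent to
\[
\kappa \;\geq\; \frac{\log\bigl(dTu / (\delta \opt)\bigr)}{\log\bigl(1/(1-\eps)\bigr)}.
\]
Since $\log\bigl(1/(1-\eps)\bigr) \geq \eps$ for $\eps \in (0,1)$, it suffices to pick $\kappa = \bigl\lceil \log(dTu/(\delta \opt))/\eps \bigr\rceil$. Thus the remaining task is to argue that the numerator $\log\bigl(dTu/(\delta \opt)\bigr)$ is polynomial in the binary encoding of the instance, while the denominator $\eps$ (and $\delta$) has $1/\eps$, $1/\delta$ polynomial in the input (which is standard in the FPTAS setting).

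The only nontrivial part is bounding $\log(1/\opt)$ from above. Here I would reuse the explicit lower bound on $\opt$ derived in the proof of \Cref{lem:positive-throughput}. Let $i^* \in [m]$, $s^* \in [d]$ attain the minimum travel time $\tau_{i^*,s^*} = \min\{\tau_{i,s} \mid i \in [m], s \in [d]\}$. Since $\opt > 0$, \Cref{lem:positive-throughput} gives $T - \tau_{i^*,s^*} > 0$, and comparing the optimal signaling scheme with full information revelation yields
\[
\opt \;\geq\; \lambda^*_{s^*}\,\bigl(T - \tau_{i^*,s^*}\bigr)\,\min\{\nu_{i^*}, \arrival\}.
\]
Every factor on the right-hand side is a positive rational whose numerator and denominator have bit-length bounded by the input encoding length~$N$. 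Hence the product is at least $2^{-p(N)}$ for some polynomial~$p$, so $\log(1/\opt) \leq p(N)$.

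Combining, $\log\bigl(dTu/(\delta \opt)\bigr) = \log(dTu) + \log(1/\delta) + \log(1/\opt)$ is polynomial in~$N$ and in the bit-lengths of $1/\delta$, giving that $\kappa = \bigl\lceil \log(dTu/(\delta \opt))/\eps \bigr\rceil$ is polynomial in the input size as required. The main technical point to verify carefully is the encoding bound on the right-hand side of the $\opt$ estimate; the rest is bookkeeping around the logarithmic inequality $\log\bigl(1/(1-\eps)\bigr) \geq \eps$.
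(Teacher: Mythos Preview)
Your proposal is correct and follows essentially the same route as the paper: both solve $(1-\eps)^\kappa dTu \leq \delta\opt$ for $\kappa$ by taking logarithms and then plug in the explicit lower bound $\opt \geq \lambda^*_{s^*}(T-\tau_{i^*,s^*})\min\{\nu_{i^*},u\}$ obtained via full information revelation in \Cref{lem:positive-throughput}. Your use of the elementary estimate $\log\bigl(1/(1-\eps)\bigr)\geq \eps$ and your more explicit discussion of why the resulting $\kappa$ has polynomial encoding length are minor additions, but the core argument is identical.
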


\begin{proof}
    Let $i^*\in [m]$ and $s^*\in [\s]$ be such that $\tau_{i^*,s^*}=\min\{\tau_{i,s} \mid i\in [m], s\in [\s] \}$.
    With $\opt>0$, we apply \Cref{lem:positive-throughput}, which yields $(T-\tau_{i^*,s^*})>0$.
    Analogously to the proof of \Cref{lem:positive-throughput}, we obtain a lower bound on $\opt$ by
    \begin{align*}
        \opt(\vprior)
        \geq \FV^{\text{FI}}(\vprior)
        =\sum_{s\in [\s]}\prior_s\FV(\vec e_s)
        \geq \prior_{s^*}  (T-\tau_{i^*,s^*})\cdot\min\{\nu_{i^*},\arrival\}>0.
    \end{align*}
    Hence, for proving the lemma, it is sufficient to show that there exists an $\kappa\in \N$ that is polynomially bounded by the input size, such that
    \begin{align*}
        (1-\eps)^\kappa\s T \arrival
        &\leq \delta \prior_{s^*}  (T-\tau_{i^*,s^*})\cdot\min\{\nu_{i^*},\arrival\}.
    \end{align*}
    Thus, by setting $\kappa\in\N$ such that
    \begin{align}
    \label{eq:M}
        \kappa\geq
        \left\lceil \frac{\log (\s T \arrival) - \log (\delta \prior_{s^*}  (T-\tau_{i^*,s^*})\cdot\min\{\nu_{i^*},\arrival\})}{\log (1-\eps)}\right\rceil
    \end{align}
    the claim holds. Finally, note that with the assumption of $\opt>0$ and \Cref{lem:positive-throughput}, we have $\delta \prior_{s^*}  (T-\tau_{i^*,s^*})\cdot\min\{\nu_{i^*},\arrival\}>0$, and thus, definition \eqref{eq:M} is indeed well-defined.
\end{proof}

Let $i^*\in [m]$ and $s^*\in[\s]$ be such that $\tau_{i^*,s^*}=\min\{\tau_{i,s} \mid i\in [m], s\in [\s] \}$.
Let $\eps>0$ and $\delta>0$ be two arbitrary but fixed values. Further, let $\kappa$ be as in \Cref{lem:delta-opt}.
We proceed to define an algorithm towards proving \Cref{thm:PTAS-throughput}.
The main building block of the algorithm is to find a piece-wise convex underestimator $\FV_{\eps,\kappa} \colon \Delta \to \mathbb{R}_{\geq 0}$ of the function $\FV$.
To this end, we define the following function $h_{\eps,\kappa}:\R \rightarrow \R$ that rounds numbers to the next power of $(1-\eps)$, or to $0$ if the number is too small:
\begin{align*}
    h_{\eps,\kappa} (x)
    \coloneqq
    \begin{cases}
        0 & \text{if } x<(1-\eps)^{\kappa-1},\\
        \max\{ (1-\eps)^{k-1} \mid (1-\eps)^{k-1}\leq x, k\in[\kappa] \} & \text{else.}
    \end{cases}
\end{align*}
Based on $h_{\eps,\kappa}$, we define an under-estimator function of the throughput $F_{\eps,\kappa}:\R^\s \rightarrow \R$ by
\begin{align}\label{eq:F-eps-M}
    F_{\eps,\kappa}(\vec\mu)
    \coloneqq \sum_{s\in [\s]} h_{\eps,\kappa}(\mu_s) F_s(\vec\mu).
\end{align}
In order to define the regions where the under-estimator is convex, we proceed to discretize $\Delta$ by a non-uniform $\eps$-net. For this, we define for every $s\in [\s]$ and every $j\in [\kappa]$ the hyperplanes
$
        L_{s,j}\coloneqq
        \{\vec\mu \mid \mu_s=(1-\eps)^{j-1}\} $ and $
        L_{s,0}\coloneqq
        \{\vec\mu \mid \mu_s=0\}.
$
We denote by $\mathcal{L}$ the union of the arrangement of hyperplanes $\mathcal{H}$ and the $\eps$-net, i.e., $
    \mathcal{L}\coloneqq
     \mathcal{H} \cup \bigl\{L_{s,j} \;\big\vert\; s\in[d],j\in[\kappa]\cup \{0\} \bigr\}.
$
The set $\mathcal{L}$ again defines an arrangement of hyperplanes in $\Delta$. We write $\Delta_{\eps}$ for the set of $0$-cells that are determined by $\mathcal{L}$, i.e., the set of points in $\Delta$ in which $d-1$ many pairwise distinct hyperplanes of $\mathcal{L}$ intersect. More formally,
$
    \Delta_{\eps} \coloneqq
         \big\{\vec\mu\in \Delta \;\big\vert\; \{\vec\mu\}=\bigcap\nolimits_{i\in[d-1]}A_{i} \text{ with } A_i\in \mathcal{L} \text{ and } A_i\neq A_j \text{ for } i\neq j\big\}$.
Note, that the number of $0$-cells $\ell\coloneqq \vert\Delta_{\eps}\vert$ is polynomially bounded by the input size as
\begin{align*}
    \ell\coloneqq \vert\Delta_{\eps}\vert
    \leq\binom{\vert\mathcal{L}\vert}{d-1}
    =\binom{\frac{m(m-1)}{2}+d\kappa}{d-1}
    \leq\bigg(\frac{m(m-1)}{2}+d\kappa\bigg)^{\!d}.
\end{align*}
In the following, we write $\Delta_\eps =\{\tilde{\vec\mu}^1,\dots,\tilde{\vec\mu}^\ell\}$.
The algorithm solves the linear program
\begin{align}\label{eq:alg}
    \alg\coloneqq \max \Biggl\{ \sum_{j\in[\ell]} \alpha_j\FV_{\eps,\kappa}(\tilde{\vec\mu}^j) \;\Bigg\vert\;
     \alpha_1,\dots,\alpha_{\ell} \in [0,1] \text{ with }  \sum_{j\in[\ell]} \alpha_j \tilde{\vec\mu}^j = \vec\vprior \Biggr\}
\end{align}
and returns a signaling scheme that induces the conditional beliefs that appear with strictly positive probability in \eqref{eq:alg}. Having computed the conditional beliefs $\{\vec\mu^1,\dots\vec\mu^\ell\}$ and the corresponding coefficients $\{\alpha_1,\dots\alpha_\ell\}$, the signaling scheme can be recovered in polynomial time \citep[cf.][]{Dughmi14}.
The next lemma states an important property of the under-estimator function $F_{\eps,\kappa}$. The proof uses that on each $(d-1)$-cell of $\mathcal{L}$, the function is a linear combination of convex functions.

\begin{lemma}\label{lem:F-eps-M-convex}
    For all $\eps>0$ and $\kappa\in \N$, the function $F_{\eps,\kappa}$ is convex on every $(d-1)$-cell of $\mathcal{L}$.
\end{lemma}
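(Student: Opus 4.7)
The plan is to exploit the fact that $\mathcal{L}$ by construction refines both $\mathcal{H}$ (on which, by \Cref{lem:piecewise-linear}\ref{it:piecewise-linear-3}, every $F_s$ is convex) and the $\eps$-net determined by the hyperplanes $L_{s,j}$ (which are precisely the loci where $h_{\eps,\kappa}$ can change value). Fix an arbitrary $(d-1)$-cell $P$ of $\mathcal{L}$. I would first check that for every scenario $s \in [d]$ the quantity $h_{\eps,\kappa}(\mu_s)$ is \emph{constant} on $P$. Indeed, $h_{\eps,\kappa}(\mu_s)$ is piecewise constant on $\mathbb{R}$ and changes value only when $\mu_s$ crosses one of the thresholds $0 = (1-\eps)^{\infty}$ or $(1-\eps)^{j-1}$ for $j \in [\kappa]$. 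All these thresholds correspond to hyperplanes $L_{s,0}$ or $L_{s,j}$ which are contained in $\mathcal{L}$; hence inside the relative interior of $P$ the coordinate $\mu_s$ stays strictly on one side of each $L_{s,j}$, so $h_{\eps,\kappa}(\mu_s)$ takes a single non-negative value $c_s^P \geq 0$ throughout $P$.

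Next, since $\mathcal{H} \subseteq \mathcal{L}$, the cell $P$ is contained in some $(d-1)$-cell of $\mathcal{H}$; \Cref{lem:piecewise-linear}\ref{it:piecewise-linear-3} then ensures that $F_s$ is convex on $P$ for every $s \in [d]$. Combining these two observations, on $P$ we have
\begin{equation*}
F_{\eps,\kappa}(\vec\mu) \;=\; \sum_{s\in[d]} c_s^{P}\, F_s(\vec\mu),
\end{equation*}
i.e., $F_{\eps,\kappa}|_{P}$ is a non-negative linear combination of convex functions with \emph{constant} coefficients. Since the class of convex functions on a convex set is closed under non-negative linear combinations, convexity of $F_{\eps,\kappa}$ on $P$ follows.

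The only genuinely delicate point is justifying constancy of $h_{\eps,\kappa}(\mu_s)$ on $P$, which hinges on the definition of $\mathcal{L}$ containing \emph{all} the level-jump hyperplanes $L_{s,j}$ for $j \in [\kappa] \cup \{0\}$. The convexity of the rescaled pieces $F_s$ then comes for free from \Cref{lem:piecewise-linear}, and no further analysis of the concrete form of $F_s$ from \Cref{lem:flow-value:function} is needed. I do not expect any obstacle beyond carefully writing out these two bookkeeping steps.
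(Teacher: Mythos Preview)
Your proposal is correct and follows essentially the same approach as the paper: the paper's proof also notes that $\mathcal{H}$ is a sub-arrangement of $\mathcal{L}$ so each $F_s$ is convex on $P$ by \Cref{lem:piecewise-linear}\ref{it:piecewise-linear-3}, and then observes that $F_{\eps,\kappa}$ is a non-negative linear combination of these convex functions. Your write-up is actually slightly more careful in spelling out why the coefficients $h_{\eps,\kappa}(\mu_s)$ are constant on $P$ (namely, because all level-jump hyperplanes $L_{s,j}$ belong to $\mathcal{L}$), a point the paper leaves implicit.
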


\begin{proof}
    Let $P$ be a $(d-1)$-cell of $\mathcal{L}$.
    For every $s\in [\s]$, the function $F_s$ is convex on $P$ by \Cref{lem:piecewise-linear}~\ref{it:piecewise-linear-3}, as $\mathcal{H}$ is a sub-arrangement of $\mathcal{L}$. By the definition of $F_{\eps,\kappa}$ in \eqref{eq:F-eps-M}, $F_{\eps,\kappa}$ is a linear combination of convex functions on $P$ with non-negative coefficients and hence, the claim follows.
\end{proof}

Next, we bound the under-estimator function $F_{\eps,\kappa}$ both from below and above. The proof crucially relies on \Cref{lem:delta-opt}.

\begin{lemma}\label{lem:F-eps-M-bounds}
    For all $\eps>0, \kappa\in \N,$ and $\mu \in \Delta$, we have $(1-\eps) \FV (\mu)-d(1-\eps)^{\kappa} \FV_{\max} \leq F_{\eps,\kappa}(\mu) \leq \FV(\mu)$, where $\FV_{\max}=\max_{s\in[\s]}\sup_{\mu\in\Delta}F_s(\mu)$.
\end{lemma}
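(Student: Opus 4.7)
The plan is to bound the quantity $h_{\eps,\kappa}(\mu_s)$ pointwise against $\mu_s$ (from above for the easy direction, from below for the harder direction), then multiply by $F_s(\vec\mu) \geq 0$ and sum over $s \in [d]$.

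For the upper bound $F_{\eps,\kappa}(\vec\mu) \leq F(\vec\mu)$, I would simply observe that $h_{\eps,\kappa}(x) \leq x$ for every $x \in [0,1]$. Indeed, if $x < (1-\eps)^{\kappa-1}$ then $h_{\eps,\kappa}(x) = 0 \leq x$; otherwise $h_{\eps,\kappa}(x)$ is by definition the largest element of $\{(1-\eps)^{k-1} : k \in [\kappa]\}$ not exceeding $x$. Since the outflows satisfy $F_s(\vec\mu) \geq 0$, multiplying by $F_s(\vec\mu)$ and summing over $s$ gives the upper bound, using $F(\vec\mu) = \sum_s \mu_s F_s(\vec\mu)$.

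For the lower bound I would do a case distinction on each coordinate $\mu_s$. In the first case $\mu_s \geq (1-\eps)^{\kappa-1}$, there exists $k \in [\kappa]$ with $\mu_s \in [(1-\eps)^{k-1}, (1-\eps)^{k-2})$ (interpreting $(1-\eps)^{-1}$ as $+\infty$ when $k=1$), and by definition $h_{\eps,\kappa}(\mu_s) = (1-\eps)^{k-1}$. Then $(1-\eps)\mu_s < (1-\eps)^{k-1} = h_{\eps,\kappa}(\mu_s)$, so $(1-\eps)\mu_s F_s(\vec\mu) \leq h_{\eps,\kappa}(\mu_s) F_s(\vec\mu)$. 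In the second case $\mu_s < (1-\eps)^{\kappa-1}$, we have $h_{\eps,\kappa}(\mu_s) = 0$ but also $(1-\eps)\mu_s < (1-\eps)^{\kappa}$, so $(1-\eps) \mu_s F_s(\vec\mu) \leq (1-\eps)^\kappa F_s(\vec\mu) \leq (1-\eps)^\kappa F_{\max}$. In either case,
\[
(1-\eps)\mu_s F_s(\vec\mu) - h_{\eps,\kappa}(\mu_s) F_s(\vec\mu) \leq (1-\eps)^\kappa F_{\max}.
\]
Summing over $s \in [d]$ yields $(1-\eps) F(\vec\mu) - F_{\eps,\kappa}(\vec\mu) \leq d(1-\eps)^\kappa F_{\max}$, which is exactly the claimed lower bound.

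There is no real obstacle here; the only delicate point is the Case~1 inequality $h_{\eps,\kappa}(\mu_s) \geq (1-\eps)\mu_s$, which is where it becomes essential that $h_{\eps,\kappa}$ rounds \emph{down} to the nearest power of $(1-\eps)$ so that the rounding error is at most a multiplicative factor of $(1-\eps)$, and the choice of the cutoff $(1-\eps)^{\kappa-1}$ in the definition of $h_{\eps,\kappa}$ is precisely what makes the Case~2 residual fit into the additive term $d(1-\eps)^\kappa F_{\max}$.
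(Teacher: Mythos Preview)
Your proof is correct and follows essentially the same approach as the paper: both split the coordinates according to whether $h_{\eps,\kappa}(\mu_s)>0$ or $h_{\eps,\kappa}(\mu_s)=0$, use $h_{\eps,\kappa}(\mu_s)\geq(1-\eps)\mu_s$ in the first case, and bound the residual by $(1-\eps)^{\kappa}F_{\max}$ per coordinate in the second. The only cosmetic difference is that the paper writes the split as two index sets $I(\vec\mu)$ and $E(\vec\mu)$ and sums separately, whereas you phrase it as a per-coordinate case distinction and then sum a uniform bound over all $d$ coordinates.
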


\begin{proof}
    The upper bound on $F_{\eps,\kappa}$ is immediate, as 
    \begin{align}\label{eq:h-eps-M}
        F_{\eps,\kappa}(\vec \mu)
        & = \sum_{s\in [\s]} h_{\eps,\kappa}(\mu_s) F_s(\vec \mu)
        \leq \sum_{s\in [\s]} \mu_s F_s(\vec\mu)
        =F(\vec \mu),
    \end{align}
    where the inequality follows from the definition of $h_{\eps,\kappa}$.
    Thus, we focus on the lower bound on $F_{\eps,\kappa}$.
    First, we note that
    \begin{align*}
    h_{\eps,\kappa}(x)\leq x \leq h_{\eps,\kappa}(x)/(1-\eps)
\end{align*}
    holds whenever $h_{\eps,\kappa}(x)\neq 0$.
    Hence, for some $\mu\in\Delta$ we define
    \begin{align*}
        I(\vec\mu) & \coloneqq \{s\in[\s] \mid h_{\eps,\kappa}(\mu_s)>0\} \text{ and}\\
        E(\vec\mu) & \coloneqq \{s\in[\s] \mid h_{\eps,\kappa}(\mu_s)=0\}.
    \end{align*}
    Together with \eqref{eq:h-eps-M}, this yields
    \begin{align*}
        F_{\eps,\kappa}(\vec\mu)
        &= \sum_{s\in [\s]} h_{\eps,\kappa}(\mu_s) F_s(\vec\mu)\\
        &= \sum_{s\in I(\vec\mu)} h_{\eps,\kappa}(\mu_s) F_s(\vec\mu) + \sum_{s\in E(\vec\mu)} h_{\eps,\kappa}(\mu_s) F_s(\vec\mu) \\
        &\geq \sum_{s\in I(\vec\mu)} (1-\eps)\mu_s F_s(\vec\mu) \\
        &= \sum_{s\in [\s]} (1-\eps)\mu_s F_s(\vec\mu) - \sum_{s\in E(\vec\mu)} (1-\eps)\mu_s F_s(\vec\mu)\\
        &\geq (1-\eps)F(\vec\mu)  - (1-\eps)\sum_{s\in E(\vec\mu)}\mu_s F_s(\vec\mu).
    \end{align*}
    It remains to show that $\sum_{s\in E(\vec\mu)}\mu_s F_s(\vec\mu) \leq d(1-\eps)^{\kappa-1} \FV_{\max}$.
    Recall, that for all $s\in E(\vec\mu)$ we have $h_{\eps,\kappa}(\mu_s)=0$ and hence, $\mu_s<(1-\eps)^{\kappa-1}$.
    Thus, we obtain
    \begin{align*}
        \sum_{s\in E(\vec\mu)}\mu_s F_s(\vec\mu)
        &\leq (1-\eps)^{\kappa-1} \sum_{s\in E(\vec\mu)} F_s(\vec\mu) \\
        &\leq (1-\eps)^{\kappa-1} \sum_{s\in E(\vec\mu)} \max_{s\in[\s]}\sup_{\vec\mu\in\Delta}F_s(\vec\mu) \\
        &\leq (1-\eps)^{\kappa-1} d F_{\max},
    \end{align*}
    where for the final inequality we used $\vert E(\vec\mu) \vert \leq \s$.
\end{proof}

With these lemmas at hand, we are now in position to give the proof of the main result of this section (\Cref{thm:PTAS-throughput}). For the proof, we use that optimizing over the piece-wise convex under-estimator $\FV_{\eps,\kappa}$ instead of the original function $\FV$ causes only a multiplicative error of $(1-\eps)$ in the interior and an additional additive error close to the boundary of $\Delta$. In addition, the under-estimator is piece-wise convex and, in particular, convex on every $(d-1)$-cell of $\mathcal{L}$. As a consequence, every optimal convex decomposition of the prior for the convex under-estimator only uses the $0$-cells of $\mathcal{L}$. Bounding their number by a polynomial of the encoding length of the input, we then obtain the result. 

\begin{proof}[Proof of \Cref{thm:PTAS-throughput}]
    We show in fact that the algorithm defined in \eqref{eq:alg} guarantees the desired approximation ratio and runs in time polynomial in the input size.
    We start by proving the approximation guarantee of $(1-\eps)(1-\delta)$.
    Using the definition of $\opt$ and \Cref{lem:F-eps-M-bounds}, we obtain
    \begin{align*}
        \opt
        &= \max \Biggl\{ \sum_{j\in[\s]} \alpha_j\FV(\vec\mu^j) \;\Bigg\vert\; \vec\mu^1,\dots,\vec\mu^{\s}\in \Delta, \alpha_1,\dots,\alpha_{\s} \in [0,1] \text{ with }  \sum_{j\in[\s]} \alpha_j \vec\mu^j = \vec\vprior \Biggr\}\\
        & \leq \frac{1}{1-\eps} \max \Biggl\{ \sum_{j\in[\s]} \alpha_j\FV_{\eps,\kappa}(\vec\mu^j) \;\Bigg\vert\; \vec\mu^1,\dots,\vec\mu^{\s}\in \Delta, \alpha_1,\dots,\alpha_{\s} \in [0,1] \text{ with }  \sum_{j\in[\s]} \alpha_j \vec\mu^j = \vec\vprior \Biggr\} \\
        &\quad + d(1-\eps)^\kappa F_{\max}.\\
        \intertext{
        By \Cref{lem:F-eps-M-convex}, $F_{\eps,\kappa}$ is convex on each $(d-1)$-cell of $\mathcal{L}$. Since $\Delta_\eps$ contains all $0$-cells of $\mathcal{L}$, we can write
        }
        \opt
        & \leq \frac{1}{1-\eps} \max \Biggl\{ \sum_{j\in[\ell]} \alpha_j\FV_{\eps,\kappa}(\tilde{\vec\mu}^j) \;\Bigg\vert\; \tilde{\vec\mu}^1,\dots,\tilde{\vec\mu}^{\ell}\in \Delta_\eps, \alpha_1,\dots,\alpha_{\ell} \in [0,1] \text{ with }  \sum_{j\in[\ell]} \alpha_j \tilde{\vec\mu}^j = \vec\vprior \Biggr\} \\
        &\quad + d(1-\eps)^\kappa F_{\max}.
        \intertext{
        With the definition of $\alg$ in \eqref{eq:alg}, we obtain
        }
        \opt
        &= \frac{1}{1-\eps} \alg + d(1-\eps)^\kappa F_{\max}.
        \intertext{
        To prove the approximation guarantee, it remains to give an upper bound $d(1-\eps)^\kappa F_{\max}$ by $\delta\opt$.
        For this, note that the maximal throughput $\FV_{\max}$ over all scenarios $s\in[\s]$ and all possible priors $\mu\in\Delta$ can be no larger than the inflow rate $\arrival$ times the time horizon $T$. This yields
        }
        \opt
        &\leq \frac{1}{1-\eps}\alg + d(1-\eps)^\kappa T \arrival.
        \intertext{
        Applying \Cref{lem:delta-opt} gives
        }
        \opt
        & \leq \frac{1}{1-\eps}\alg + \delta \opt,
    \end{align*}
    which is equivalent to
    \begin{align*}
        (1-\eps)(1-\delta)\opt \leq \alg.
    \end{align*}
    By setting $\eps\coloneqq\eps^*/2$ and $\delta\coloneqq\eps^*/2$, we obtain the desired approximation guarantee, as 
    \begin{align*}
        (1-\eps)(1-\delta)=(1-\eps^*/2)^2\geq (1-\eps^*).
    \end{align*}
    It remains to show that $\alg$ runs in polynomial time.
    However, this immediately follows from the definition of $\alg$ and the fact that both $\kappa$ and $\ell$ are polynomially bounded by the input size.
\end{proof}

\section{Full Information Revelation for Makespan Minimization}\label{sec:makespan}

In this section, we consider the makespan objective.
As the main result of this section, we show that full information revelation is optimal, i.e., it is optimal to choose $\Xi = [d]$ and have $\varphi_{s,\xi} = \lambda^*_s$ whenever $s=\xi$ and $\varphi_{s,\xi} = 0$ otherwise.

\begin{theorem}
\label{thm:makespan:full-information}
With respect to the makespan objective, full information revelation is an optimal signaling scheme.
\end{theorem}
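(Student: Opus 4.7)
The plan is to reduce the signaling problem to a claim about dynamic equilibria with deterministic travel times and then prove that claim via the explicit equilibrium construction of Section~\ref{sec:structure-deterministic} together with a variational inequality.

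For the reduction, fix a public signaling scheme with signals $\Xi$, probabilities $\{\alpha_\xi\}_{\xi\in\Xi}$, and induced beliefs $\{\vec\mu^\xi\}_{\xi\in\Xi}$. Upon receiving signal $\xi$, particles form the dynamic equilibrium $\vec f(\vec\mu^\xi)$ that treats the expected travel time $(\vec\mu^\xi)^\top\vec\tau_i$ of each link as deterministic, even though the actual travel times in the realized scenario $s$ are $\vec\tau_{\cdot,s}$. The expected makespan of the scheme equals $\sum_{\xi\in\Xi}\alpha_\xi\sum_{s\in[d]}\mu^\xi_s M_s(\vec f(\vec\mu^\xi))$. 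Since Bayes plausibility yields $\sum_\xi \alpha_\xi \mu^\xi_s = \lambda^*_s$, it suffices to show for every signal $\xi$ and scenario $s$ that
\[
M_s(\vec f(\vec\mu^\xi)) \geq M_s(\vec f(\vec e_s)),
\]
since the right hand side is exactly scenario~$s$'s contribution to the makespan under full information revelation.

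This inequality is a special case of the following deterministic lemma: given true travel times $\vec\tau$ and any alternative vector $\vec\tau'$, the makespan of $\vec f(\vec\tau')$ evaluated with the true $\vec\tau$ is at least that of $\vec f(\vec\tau)$, with equality when $\vec\tau' - \vec\tau$ is a constant vector. By \eqref{eq:equilibrium:breakpoints}--\eqref{eq:equilibrium:flow}, once $T$ lies beyond the last breakpoint $\theta^*_{k+1}$, the correct equilibrium $\vec f(\vec\tau)$ equalizes the true exit times $z_i(T)/\nu_i + \tau_i$ across its support $S(T)$ to a common value $E$, and $S(T)$ is exactly the set of the $k+1$ links with the smallest true travel times. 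For $\vec f(\vec\tau')$, the support $S'(T)$ and the queue profile $z'_i(T)$ are instead governed by the assumed ordering $\pi'$. I would exploit the variational inequality characterizing dynamic equilibria on parallel links---that at almost every time the inflow vector of $\vec f(\vec\tau')$ minimizes the assumed instantaneous exit time $\sum_i (\tau'_i + z'_i(\theta)/\nu_i) f_i$ over all feasible inflow vectors---to relate the queue profile of $\vec f(\vec\tau')$ to that of $\vec f(\vec\tau)$. Combined with flow conservation (the cumulative outflow of any equilibrium by time~$T$ equals $uT$ minus the mass still in queues and in transit), this forces some link $i \in S'(T)$ to satisfy $z'_i(T)/\nu_i + \tau_i \geq E$, which is the desired bound. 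The equality case is immediate, since shifting all assumed travel times by a common constant leaves $\vec f(\vec\tau')$ unchanged.

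The main obstacle is the case where the assumed ordering $\pi'$ disagrees with the true ordering $\pi$. Then $S'(T)$ may contain genuinely slow links or omit genuinely fast ones, the entry times $\theta^*_i(\vec\tau')$ and queue lengths $z'_i(T)$ are not link-by-link comparable to those of $\vec f(\vec\tau)$, and the makespan function $\vec\tau' \mapsto M_T(\vec f(\vec\tau'))$ is discontinuous (cf.\ Appendix~\ref{sec:makespan-example}). The technical heart of the argument is to show that any deviation from the true ordering wastes capacity early on---either by prematurely activating a slow link or by neglecting a fast one---and that this loss must materialize as an elongated queue on some active link at time~$T$, forcing its true exit time up to $E$.
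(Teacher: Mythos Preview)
Your reduction is correct and matches the paper: it is enough to show, for every pair of travel-time vectors $\vec\tau$ (true) and $\vec\tau'$ (assumed), that the makespan of the equilibrium $\vec f(\vec\tau')$ evaluated under $\vec\tau$ is at least the makespan of $\vec f(\vec\tau)$. This is exactly the paper's Theorem~\ref{thm:makespan:minimal}.

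The gap is in your proof of this deterministic lemma. The variational inequality you invoke is the equilibrium condition \emph{with respect to} $\vec\tau'$: at each instant $\vec f(\vec\tau')$ sends flow only to links minimizing $\tau'_i + z'_i(\theta)/\nu_i$. That says nothing directly about the \emph{true} exit times $\tau_i + z'_i(\theta)/\nu_i$, which are what determine the makespan. Likewise, your flow-conservation identity (inflow $=uT=$ outflow $+$ mass in queues $+$ mass in transit) holds for \emph{every} feasible flow, so it cannot by itself distinguish $\vec f(\vec\tau')$ from $\vec f(\vec\tau)$ or force a link in $S'(T)$ to have true exit time at least $E$. Your last paragraph correctly identifies the real work---quantifying how a wrong ordering ``wastes capacity'' and showing this surfaces as an elongated queue---but does not do it; what you have written is the statement of the lemma, not its proof. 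The assumption ``$T$ lies beyond the last breakpoint $\theta^*_{k+1}$'' also needs justification or a separate argument for the complementary case.

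The paper avoids a direct comparison entirely and instead runs a variational argument on the function $\vec\tau'\mapsto \underline{M}(\vec\tau')$. It first fixes a tie-breaking rule so that $\underline{M}$ is lower semi-continuous (Lemma~\ref{lem:makespan:lower-semicontinuous}); this is precisely what handles the discontinuity you flag. Lower semi-continuity plus boundedness below gives a minimizer $\tilde{\vec\tau}$. Then a local perturbation argument (Lemma~\ref{lem:makespan:minimizer}) shows that at $\tilde{\vec\tau}$ all true exit times on the support must coincide: if some link $i$ had strictly larger true exit time, increasing $\tilde\tau_i$ slightly would strictly decrease $z_i(T)$ (using the explicit monotonicity in Lemma~\ref{lem:equilibrium}(iii)) and hence the makespan, contradicting minimality. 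Finally, equal true exit times force $\tilde\tau_i-\tilde\tau_j=\tau_i-\tau_j$ on the support, so the minimizing equilibrium coincides with $\vec f(\vec\tau)$. This route sidesteps any global ``capacity-wasting'' accounting by working purely at the minimizer.
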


To prove the result, we will consider dynamic equilibria with different \emph{deterministic} travel times.
We first note that among all dynamic equilibria the total delay on every link is unique \citep[e.g.][]{cominetti2015existence,Olver21} and, hence, the queue lengths on every link are unique as well.
Thus, every dynamic equilibrium has the explicit queue lengths computed in \Cref{lem:equilibrium} (deferred to Appendix~\ref{app:structure-deterministic}) which are non-decreasing in~$\theta$. Therefore, a particle entering the system at time $\theta=T$ has the longest expected delay. This implies the following, simpler formula for the makespan.
    \begin{lemma}\label{lem:easier:makespan}
        Let $\vec f$ be the dynamic equilibrium from~\eqref{eq:equilibrium:flow}. Then,
        $
            \MS_T (\vec f) = \max \bigl\{ T_i (T) \;\big\vert\; i \in S(T) \bigr\}
        $.
    \end{lemma}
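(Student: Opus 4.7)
The plan is to exploit monotonicity of the queue lengths (hence of the per-link exit-time functions) together with the equilibrium condition, so that the supremum in the definition of $M_T(\vec f)$ is attained at $\theta=T$ on the support $S(T)$.

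First I would record two monotonicity facts about the explicit flow $\vec f$ from \eqref{eq:equilibrium:flow}. Fact (a): every queue length $z_i(\theta)$ is non-decreasing in $\theta$. This follows from the structure of \eqref{eq:equilibrium:flow}: for $i\leq k$ and $\theta \in [\theta^*_i,\theta^*_{k+1})$ the inflow $f_i(\theta)=(\arrival/\bar\nu(j))\,\nu_i$ satisfies $f_i(\theta)\geq \nu_i$ because $\bar\nu(j)\leq \arrival$ for $j\leq k$, so the queue builds up; for $\theta\geq\theta^*_{k+1}$ link $i\leq k$ receives exactly $\nu_i$, so the queue stays constant; for $i=k+1$ the inflow equals $\arrival-\bar\nu(k)\leq\nu_{k+1}$ and there is no initial queue, so no queue forms; all other inflows are zero and the corresponding queues are identically zero. (This is exactly what the equilibrium lemma of Appendix~\ref{app:structure-deterministic} gives.) Fact (b): each exit-time function $T_i(\theta)=\theta+z_i(\theta)/\nu_i+\tau_i$ is non-decreasing; this is an immediate consequence of (a).

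Next I would argue that the supremum in \eqref{eq:makespan} is attained at $\theta=T$. By the equilibrium condition, whenever $f_i(\theta)>0$ one has $T_i(\theta)=\min_{j\in[m]}T_j(\theta)$, so
\[
\sup\bigl\{T_i(\theta)\;\big\vert\;\theta\in[0,T],\,i\in[m],\,f_i(\theta)>0\bigr\}
\;=\;\sup_{\theta\in[0,T]}\,\min_{j\in[m]}T_j(\theta).
\]
Because $T_j$ is non-decreasing in $\theta$ by Fact~(b), so is the pointwise minimum, and therefore the supremum on the right is attained at $\theta=T$ and equals $\min_{j\in[m]}T_j(T)$. A small boundary remark is needed because $\vec f$ from \eqref{eq:equilibrium:flow} is piece-wise constant and may be discontinuous at $T$: if no $i$ has $f_i(T)>0$ in the one-sided sense required, continuity of $T_j$ in $\theta$ lets us replace $T$ by a left-approaching sequence with the same limit $\min_j T_j(T)$.

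Finally I would identify this common value with the claimed maximum. By definition, every $i\in S(T)$ satisfies $T_i(T)=\min_{j\in[m]}T_j(T)$, so $\max\{T_i(T)\mid i\in S(T)\}$ equals $\min_{j\in[m]}T_j(T)$, which combined with the previous display gives $M_T(\vec f)=\max\{T_i(T)\mid i\in S(T)\}$. The only delicate step is the monotonicity of $T_i$ on links that are not currently used, but under \eqref{eq:equilibrium:flow} such links either carry no queue at all or, from $\theta^*_{k+1}$ onward, keep their queue frozen, so Fact~(a) still holds verbatim and the argument goes through.
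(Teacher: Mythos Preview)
Your proof is correct and follows essentially the same idea the paper sketches just before stating the lemma: the explicit queue lengths from the equilibrium lemma are non-decreasing in $\theta$, hence so are the exit-time functions $T_i(\cdot)$, and therefore the latest exit occurs at $\theta=T$ on the support. Your intermediate rewriting of the supremum as $\sup_{\theta\in[0,T]}\min_{j}T_j(\theta)$ via the equilibrium condition is a clean touch that the paper does not spell out, but it is a presentational refinement rather than a different argument.
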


    We proceed to investigate how changing the perceived travel times  influences the makespan.
    Formally, let $\vec{\tau} = (\tau_i)_{i \in [m]}$ be the travel times of a given instance.
    Then, we denote by $\vec{f}^{\vec{\tau}}$ the dynamic equilibrium as defined in~\eqref{eq:equilibrium:flow}.
    Additionally, we consider another vector of (arbitrary) travel times $\vec{\tau}' = (\tau'_i)_{i \in [m]}$ with $\tau'_i \geq 0$.
    (The travel times $\vec{\tau}'$ could, for instance, be the travel times expected by the particles given a certain belief~$\vec{\mu}$.)
    Assume that particles behave according to $\vec{\tau}'$ rather than $\vec{\tau}$. Then, the dynamic equilibrium and, thus, the makespan changes if $\vec{\tau}'$ is changed.
    We denote by
    \[
        \mathcal{F}(\vec{\tau}') \coloneqq \big\{ \vec{f} \;\big\vert\; \vec{f} \text{ is dynamic equilibrium with respect to the travel times } \vec{\tau}' \big\}
    \]
    all dynamic equilibria with respect to the travel times~$\vec{\tau}'$ and emphasize that these equilibria are not unique. If the flow  $\vec{f}' \in \mathcal{F}(\vec{\tau}')$ emerges in the original instance (i.e., the instance with travel times~$\vec{\tau}$) the flow particles experience the exit times
    $
       \smash{T_i (\theta; \vec{f}') = \theta + \mfrac{z_i(\theta; \vec{f}')}{\nu_i} + \tau_i}$,
    where $z_i(\theta; \vec{f}')$ are the queue lengths determined by the dynamic equilibrium~$\vec{f}'$ . 
    Then, we denote by
    \[
        \overline{\MS} (\vec{\tau}') \coloneqq
        \sup \bigl\{ M_T (\vec{f}') \;\big\vert\; \vec{f}' \in \mathcal{F}(\vec{\tau}')\bigr\}
        =
        \sup_{\vec{f}' \in \mathcal{F}(\vec{\tau}')} \sup \bigl\{ T_i(\theta; \vec{f}') \;\big\vert\;  \theta \in [0, T], i \in [m] \text{ with } f'_i (\theta) > 0\bigr\}
    \]
    the (worst-case) makespan for given travel times~$\vec{\tau}'$.
    In this setting, we are interested in finding the best possible $\vec{\tau}' \geq \vec{0}$ that minimzes the makespan, i.e., we want to compute
    \begin{align} \label{eq:makespan:minimization:problem}
        \inf \bigl\{ \overline{\MS} (\vec{\tau}') \;\big\vert\; \vec{\tau}' \geq \vec{0} \bigr\}
        .
    \end{align}
    The function $\overline{\MS} (\vec{\tau}')$ is in general not continuous (see, for example, the instance in \Cref{sec:makespan-example}). Thus, it is not clear if a minimum is actually attained.

    Given a travel time vector~$\vec{\tau}' \geq \vec{0}$, we denote by $\vec{f}^{\vec{\tau}'} \in \mathcal{F}(\vec{\tau}')$ the dynamic equilibrium defined in~\eqref{eq:equilibrium:flow} for $\tau_i = \tau'_i$.
    Note, that in order to apply formulas~\eqref{eq:equilibrium:breakpoints} and~\eqref{eq:equilibrium:flow} we need to order the links with respect to $\tau'_i$. In case of a tie (i.e., if $\tau'_i = \tau'_j$) we order the links as they were ordered originally (i.e., with respect to the travel times~$\tau_i$).
    The latter ensures that the flow $\smash{\vec{f}^{\vec{\tau}'}}$
    only uses links with smallest original travel time $\tau_i$ whenever the equilibrium is not unique and particles are indifferent between links.
    Therefore,
    $\underline{\MS} (\vec{\tau}') \coloneqq \MS_T (\vec{f}^{\vec{\tau}'})  \leq \overline{\MS} (\vec{\tau}')$.
    This tie-breaking rule in favor of the equilibrium yielding the smaller makespan ensures that the function~$\underline{\MS}$ is lower semi-continuous and is key for the proof of the following lemma. 
    \begin{lemma} \label{lem:makespan:lower-semicontinuous}
        The function $\vec{\tau}' \mapsto \underline{\MS} (\vec{\tau}')$ is lower semi-continuous.
    \end{lemma}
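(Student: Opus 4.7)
The plan is to work directly from the sequential characterization of lower semi-continuity: for an arbitrary sequence $\vec{\tau}'_n \to \vec{\tau}'$ I would show $\liminf_n \underline{\MS}(\vec{\tau}'_n) \geq \underline{\MS}(\vec{\tau}')$. First, pass to a subsequence along which $\underline{\MS}(\vec{\tau}'_n)$ converges to its limit inferior, and then, since there are only $m!$ possible orderings of the links, pass to a further subsequence so that the permutation $\pi$ that orders $\vec{\tau}'_n$ non-decreasingly (with the canonical tie-breaking against $\vec{\tau}$) is the same for every~$n$. By taking limits, $\pi$ is also a valid non-decreasing ordering of $\vec{\tau}'$, though it may differ from the canonical ordering $\pi^*$ associated to $\vec{\tau}'$: since any two non-decreasing orderings of the same vector differ only by permutations of tied indices, $\pi$ and $\pi^*$ coincide up to reordering within each maximal block of indices on which the components of $\vec{\tau}'$ agree.

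Within the polyhedral region $R_\pi \coloneqq \{\vec{\sigma} \geq \vec{0} \mid \sigma_{\pi(1)} \leq \dots \leq \sigma_{\pi(m)}\}$, formula~\eqref{eq:equilibrium:breakpoints} yields breakpoints $\theta^*_i$ that depend affinely on the travel-time vector, and consequently the piecewise-constant inflows from~\eqref{eq:equilibrium:flow} and the queue lengths $z_i(T)$ depend continuously on $\vec{\tau}'$ throughout $R_\pi$. Since the link $\pi(1)$ of smallest perceived travel time is always contained in the support $S(T)$, combining \Cref{lem:easier:makespan} with the equilibrium condition gives
\[
\MS_T (\vec{f}^{\vec{\tau}'_n}) \,=\, T_{\pi(1)}(T) \,=\, T + \frac{z_{\pi(1)}(T;\vec{\tau}'_n)}{\nu_{\pi(1)}} + \tau_{\pi(1)},
\]
which is continuous in $\vec{\tau}'_n$ on $R_\pi$. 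Denoting by $\vec{f}^{\vec{\tau}'}_\pi$ the flow obtained by applying~\eqref{eq:equilibrium:flow} to $\vec{\tau}'$ with the ordering $\pi$, this yields $\underline{\MS}(\vec{\tau}'_n) \to \MS_T(\vec{f}^{\vec{\tau}'}_\pi)$.

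It then remains to verify $\MS_T(\vec{f}^{\vec{\tau}'}_\pi) \geq \MS_T(\vec{f}^{\vec{\tau}'}) = \underline{\MS}(\vec{\tau}')$. Because $\pi$ and $\pi^*$ differ only by permutations of indices on which the components of $\vec{\tau}'$ coincide, the breakpoints $\theta^*_i$ and the aggregate flow distribution are identical under both orderings; what can change is the identity of the tied links that occupy positions $1,\dots,k+1$ (with $k=\max\{j \mid \bar{\nu}(j) < \arrival\}$ computed in the ordering) and therefore carry the flow at time $T$. The common equilibrium exit time at time~$T$ depends through the formula $T_i(T) = T + z_{\pi(i)}(T)/\nu_{\pi(i)} + \tau_{\pi(i)}$ on the \emph{original} travel times $\tau_i$ of the active links, and is minimized precisely by assigning the positions with positive inflow to the tied links of smallest original $\tau_i$. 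This is exactly the choice prescribed by the canonical rule, so $\MS_T(\vec{f}^{\vec{\tau}'}) \leq \MS_T(\vec{f}^{\vec{\tau}'}_\pi)$, and combining with the previous step gives $\liminf_n \underline{\MS}(\vec{\tau}'_n) \geq \underline{\MS}(\vec{\tau}')$. The main obstacle I expect is this last tie-breaking step: making the monotonicity argument rigorous requires a careful case distinction depending on whether $T$ lies in some build-up phase $[\theta^*_j, \theta^*_{j+1})$ with $j \leq k$ or in the stationary phase $\theta \geq \theta^*_{k+1}$, tracking in each regime exactly which tied indices enter the support at time~$T$ and how their original travel times feed into the makespan formula.
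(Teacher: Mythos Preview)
Your sequential approach is a reasonable alternative to the paper's $\varepsilon$--$\delta$ argument, but the displayed identity
\[
\MS_T (\vec{f}^{\vec{\tau}'_n}) \,=\, T_{\pi(1)}(T) \,=\, T + \frac{z_{\pi(1)}(T;\vec{\tau}'_n)}{\nu_{\pi(1)}} + \tau_{\pi(1)}
\]
is not correct, and this is where the argument breaks. The equilibrium condition equalises the \emph{perceived} exit times $\theta + z_i(\theta)/\nu_i + \tau'_i$ across links in the support; the makespan in \Cref{lem:easier:makespan} is the maximum of the \emph{actual} exit times $T_i(T)=T+z_i(T)/\nu_i+\tau_i$, which differ from one another by $\tau_i-\tau'_i$. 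In general this maximum is not attained at $\pi(1)$, so it cannot be collapsed to a single link, and you lose the claimed continuity of $\MS_T(\vec f^{\vec\tau'_n})$ on $R_\pi$.

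Concretely, even with the ordering $\pi$ fixed, the support $S(T)=\{\pi(1),\dots,\pi(\ell(T))\}$ changes inside $R_\pi$ whenever a breakpoint $\theta^*_j(\vec\tau')$ crosses $T$. If the newly entering link $\pi(j)$ has a large original travel time $\tau_{\pi(j)}$, the maximum over the support can jump at this crossing, so fixing only the ordering is not enough. To salvage the route you would have to pass to a further subsequence on which the support at time~$T$ is also constant, and then compare the resulting limit support to the support at $\vec\tau'$ under the canonical ordering; this comparison is essentially the same support-change analysis that the paper carries out directly via the $\varepsilon$--$\delta$ definition and the tie-breaking rule. In particular, the ``main obstacle'' you anticipate in the tie-breaking step is already present one step earlier, in the continuity claim itself.
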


     \begin{proof}
        Since $\vec{f}^{\vec{\tau}'}$ the dynamic equilibrium from~\eqref{eq:equilibrium:flow} for $\tau_i = \tau'_i$, we can apply \Cref{lem:easier:makespan} and obtain
        \begin{equation}\label{eq:prf:lower-semicontinuous:1}
            \underline{\MS} (\vec{\tau}') = \MS_T (\vec f^{\vec{\tau}'}) = \max \bigl\{ T_i (T; \vec{f}^{\vec{\tau}'}) \;\big\vert\; i \in S(T; \vec{f}^{\vec{\tau}'}) \bigr\},
        \end{equation}
        where $S(T; \vec{f}')$ is the support of the flow $\vec{f}'$ at time~$T$.
        The exit times $T_i(T; \vec{f}^{\vec{\tau}'})$ depend continuously on the queue lengths $z_i (T, \vec{f}^{\vec{\tau}'})$ and, thus, also continuously on $\vec{\tau}'$. A discontinuity can only appear if the support $S(T; \vec{f}^{\vec{\tau}'})$ changes.

        We want to prove lower semi-continuity, i.e., we want to show that for all $y < \underline{\MS}(\vec{\tau}')$ there exists a neighborhood $U \ni \vec{\tau}'$ such that $\underline{\MS}(\hat{\vec{\tau}}) > y$ for all $\hat{\vec{\tau}} \in U$.
        Let $\vec{\tau}'$ and $y < \underline{\MS}(\vec{\tau}')$ be fixed.
        Let $\tilde{U} \ni \vec{\tau}'$ be a neighborhood of $\vec{\tau}'$ such that
        \[
            |T_i (T; \vec{f}^{\vec{\tau}'}) - T_i (T; \vec{f}^{\hat{\vec{\tau}}})| < \underline{\MS}(\vec{\tau}') - y
            \quad \text{for all } i \in [m] \text{ and } \hat{\vec{\tau}} \in \tilde{U}
            .
        \]
        The neighborhood $\tilde{U}$ exists since the exit times $T_i(T; \vec{f}^{\vec{\tau}'})$ are continuous.

        If, additionally, there exists a neighborhood around $\hat{U} \ni \vec{\tau}'$ such that the support is the same for all $\vec{f}^{\hat{\vec{\tau}}} \ni \hat{U}$, then $\underline{M} (\hat{\vec{\tau}}) > y$ for all $\hat{\vec{\tau}} \in U \coloneqq \hat{U} \cap \tilde{U}$.

        If no neighborhood exists where the support stays the same, then this is only possible if particles are indifferent between two (or more links), i.e., if the exit times with respect to $\vec{\tau}'$ are exactly the same. Further, these links must have empty queues, since queues are continuous and links can only leave the support if the queue is empty. Hence, the links in question have the same travel times~$\tau'_i$. Denote by $L$ the set of all links with the same travel times. By definition, the flow $\vec{f}^{\vec{\tau}'}$ only uses one link without queue and ties between links with the same travel time~$\tau'_i$ are broken in favor of smaller $\tau_i$ (and thus smaller $T_i (T; \vec{f}^{\vec{\tau}'})$). Let $i^* \in L$ be the link, that is in the support of $\vec{f}^{\vec{\tau}'}$. If $i^*$ is not a maximizer of~\eqref{eq:prf:lower-semicontinuous:1}, then change of support has no effect on $\underline{M} (\vec{\tau}')$ in a small neighborhood. Otherwise, we have $\underline{M} (\vec{\tau}') = T_{i^*} (T; \vec{f}^{\vec{\tau}'})$ and therefore
        \[
            \underline{\MS} (\hat{\vec{\tau}}) \geq T_j(T; \vec{f}^{\hat{\vec{\tau}}}) \geq \tau_j = \tau_i = T_{i^*} (T; \vec{f}^{\vec{\tau}'}) = \underline{M} (\vec{\tau}')
        \]
        for all $\hat{\vec{\tau}} \in U$, proving the lower semi-continuity.
\end{proof}

    Since $\underline{\MS} (\vec{\tau}') \geq 0$, \Cref{lem:makespan:lower-semicontinuous} ensures that the minimum of $\vec{\tau}' \mapsto \underline{\MS} (\vec{\tau}')$ is attained.
    In the following lemma, we prove that the minimizer has the property that the exit times of all used links are the same. It can be shown that if this is not the case, the makespan can be improved, contradicting that the flow is a minimizer in the first place. 

    \begin{lemma} \label{lem:makespan:minimizer}
        Let $\tilde{\vec{\tau}} \geq \vec{0}$ be a vector of travel times such that
        $
            \underline{\MS} (\tilde{\vec{\tau}}) =
            \min_{\vec{\tau}' \geq \vec{0}} \underline{\MS} (\vec{\tau}')
        $.
        Denote by $S_{\vec{f}^{\tilde{\vec{\tau}}}} (T)$ the support of the flow~$\vec{f}^{\tilde{\vec{\tau}}}$ at time~$T$.
        Then,
        either $\underline{\MS} (\tilde{\vec{\tau}}) = \underline{\MS} (\vec{\tau})$ or the exit times at time~$T$ with respect to the flow~$\tilde{\vec{\tau}}$ are the same on all links $i \in S_{\vec{f}^{\tilde{\vec{\tau}}}} (T)$ in the support, i.e.,
        \[
            T_i (T; \vec{f}^{\tilde{\vec{\tau}}})
            =
            T_j (T; \vec{f}^{\tilde{\vec{\tau}}})
            \quad \text{ for all } i,j \in S_{\vec{f}^{\tilde{\vec{\tau}}}} (T),
        \]
        where $\smash{T_i (\theta; \vec{f}^{\tilde{\vec{\tau}}}) = \theta + \mfrac{z_i(\theta; \vec{f}^{\tilde{\vec{\tau}}})}{\nu_i} + \tau_i}$ are the exit times on the links with, if the particles act according to the dynamic equilibrium~$\vec{f}^{\tilde{\vec{\tau}}}$ with respect to the travel times~$\vec{\tau}'$ but the travel times are measured with respect to the original travel times~$\vec{\tau}$.
    \end{lemma}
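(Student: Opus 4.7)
The plan is by contradiction via an exchange/perturbation argument. Assume $\tilde{\vec{\tau}}$ is a minimizer and that the true exit times $T_i(T; \vec{f}^{\tilde{\vec{\tau}}}) = T + \mfrac{z_i(T; \vec{f}^{\tilde{\vec{\tau}}})}{\nu_i} + \tau_i$ are not all equal on $S \coloneqq S_{\vec{f}^{\tilde{\vec{\tau}}}} (T)$. I will construct a vector $\vec{\tau}''$ with $\underline{\MS}(\vec{\tau}'') < \underline{\MS}(\tilde{\vec{\tau}})$, contradicting minimality (unless already $\underline{\MS}(\tilde{\vec{\tau}}) = \underline{\MS}(\vec{\tau})$, the escape clause in the statement).

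Because $\vec{f}^{\tilde{\vec{\tau}}}$ is an equilibrium with respect to $\tilde{\vec{\tau}}$, the perceived exit times $T + \mfrac{z_i(T; \vec{f}^{\tilde{\vec{\tau}}})}{\nu_i} + \tilde{\tau}_i'$ are equal across $i \in S$. Hence the true exit times on $S$ differ by exactly how $\tau_i - \tilde{\tau}_i'$ varies on $S$. Pick $i^* \in \argmax_{i \in S}(\tau_i - \tilde{\tau}_i')$ attaining $\underline{\MS}(\tilde{\vec{\tau}})$ and $j^* \in S$ with $\tau_{j^*} - \tilde{\tau}_{j^*}' < \tau_{i^*} - \tilde{\tau}_{i^*}'$.

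Consider the perturbation
\[
    \vec{\tau}_\eps'' \coloneqq \tilde{\vec{\tau}} + \eps(\vec{e}_{i^*} - \vec{e}_{j^*})
\]
for small $\eps > 0$ (replace $-\vec{e}_{j^*}$ with $\vec{0}$ if $\tilde{\tau}_{j^*}' = 0$). For $\eps$ small enough, the ordering of the links by $\tilde{\tau}_i'$ is unaffected (up to a tie that is broken in favor of the smaller original travel time $\tau_i$, so the flow $\vec{f}^{\vec{\tau}_\eps''}$ defined in \eqref{eq:equilibrium:flow} retains the same index $k$, the same permutation, and the same support at time $T$). Using the explicit breakpoints \eqref{eq:equilibrium:breakpoints} and the queue lengths from \Cref{lem:equilibrium} (deferred to the appendix), the map $\eps \mapsto z_i(T; \vec{f}^{\vec{\tau}_\eps''})$ is piecewise linear, so directional derivatives exist. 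A direct computation shows that increasing $\tilde{\tau}_{i^*}'$ postpones the entry time $\theta_{i^*}^*(\vec{\tau}_\eps'')$ and strictly reduces $z_{i^*}(T)$ at rate $\Omega(1)$, while decreasing $\tilde{\tau}_{j^*}'$ raises $z_{j^*}(T)$ at rate $O(1)$. Consequently
\begin{align*}
    T_{i^*}(T; \vec{f}^{\vec{\tau}_\eps''}) &= T_{i^*}(T; \vec{f}^{\tilde{\vec{\tau}}}) - \Omega(\eps), \\
    T_{i}(T; \vec{f}^{\vec{\tau}_\eps''}) &= T_{i}(T; \vec{f}^{\tilde{\vec{\tau}}}) + O(\eps) \quad \text{for } i \in S \setminus \{i^*\}.
\end{align*}
Since by assumption $T_{i}(T; \vec{f}^{\tilde{\vec{\tau}}}) < T_{i^*}(T; \vec{f}^{\tilde{\vec{\tau}}})$ strictly for at least one (and hence, by continuity in $\eps$, for all links whose exit time strictly increases under the perturbation), choosing $\eps>0$ small enough yields $\max_{i \in S} T_i(T; \vec{f}^{\vec{\tau}_\eps''}) < \underline{\MS}(\tilde{\vec{\tau}})$, contradicting minimality.

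The main obstacle is the first-order calculation: one must verify that increasing $\tilde{\tau}_{i^*}'$ strictly decreases $z_{i^*}(T)$ and that the total compensating increase on other queues is bounded linearly in $\eps$ with a coefficient small enough (relative to the gap $\tau_{i^*} - \tilde{\tau}_{i^*}' - (\tau_i - \tilde{\tau}_i')$) that no other link in $S$ overtakes the reduced value of $T_{i^*}(T; \vec{f}^{\vec{\tau}_\eps''})$. This follows from the explicit linearity of $\theta_\ell^*(\vec{\tau}')$ in $\vec{\tau}'$ on each cell (\Cref{lem:theta-piece-wise-linear}) together with the flow conservation identity linking the changes in the queues across the support. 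A secondary technicality is the tie-breaking and the case $\tilde{\tau}_{j^*}' = 0$, both handled by the tie-breaking rule that keeps the support at $T$ stable under small perturbations.
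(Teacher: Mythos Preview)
Your approach mirrors the paper's: assume the true exit times are not all equal on the support and perturb $\tilde{\vec{\tau}}$ to strictly lower the makespan, reaching a contradiction. The paper uses a one-sided perturbation (increase $\tilde{\tau}_{i^*}$ only); your extra $-\vec{e}_{j^*}$ term is harmless but unnecessary.

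There is, however, a genuine gap. You assert that increasing $\tilde{\tau}_{i^*}'$ ``strictly reduces $z_{i^*}(T)$ at rate $\Omega(1)$'', but this fails when $z_{i^*}(T;\vec{f}^{\tilde{\vec{\tau}}})=0$. In the explicit equilibrium of \Cref{lem:equilibrium}, the last link entering the support (link $k{+}1$ in the $\tilde{\vec{\tau}}$-ordering) carries positive flow yet has zero queue for all $\theta$; if this link has the largest true exit time, no perturbation of $\tilde{\vec{\tau}}$ can lower $T_{i^*}(T)=T+\tau_{i^*}$. This is exactly where the escape clause $\underline{\MS}(\tilde{\vec{\tau}})=\underline{\MS}(\vec{\tau})$ is needed. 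The paper makes this split explicit via \Cref{lem:equilibrium}\emph{(iii)}: either $i^*\le k$ (queue strictly decreasing in $\tilde{\tau}_{i^*}$, so the perturbation works) or $i^*>k$ (zero queue, perturbation is useless), and in the latter case one argues directly that $\underline{\MS}(\tilde{\vec{\tau}})\ge\underline{\MS}(\vec{\tau})$. You mention the escape clause at the outset but never locate where in the argument it actually arises or how it is verified.

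A secondary issue: if several links tie for the maximum true exit time, increasing $\tilde{\tau}_{i^*}'$ lowers $T_{i^*}$ but, by \Cref{lem:equilibrium}\emph{(iii)}, weakly raises the queues (and hence the true exit times) on the other maximizers, so the makespan need not drop. Your sentence ``and hence, by continuity in $\eps$, for all links whose exit time strictly increases under the perturbation'' does not resolve this. The paper handles it (somewhat informally) by first iterating the perturbation to reduce to a unique maximizer with a strict gap $\delta>0$ before applying the continuity argument.
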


    \begin{proof}
    Assume that there is $i, j \in S_{\vec{f}^{\tilde{\vec{\tau}}}} (T)$ such that $T_i (T; \vec{f}^{\tilde{\vec{\tau}}}) > T_j (T; \vec{f}^{\tilde{\vec{\tau}}})$. By \Cref{lem:equilibrium}\emph{(iii)}, we know that either the queue length~$z_i(T)$ is strictly decreasing in~$\tilde{\tau}_i$ or $i > k$. In the latter case, we get
    \[
        \underline{\MS} (\tilde{\vec{\tau}}) =
        T_i (T; \vec{f}^{\tilde{\vec{\tau}}}) \geq T + \tau_i \geq T + \tau_{k+1} =
        \underline{\MS} (\vec{\tau})
        .
    \]
    In the former case, we $z_i(T)$ can be decreased by increasing $\tilde{\tau}_i$ by a small~$\eps > 0$. Iterating this ensures that we are in a situation, where there is some $\delta > 0$ such that
    \[
        T_i (T; \vec{f}^{\tilde{\vec{\tau}}}) > T_j (T; \vec{f}^{\tilde{\vec{\tau}}}) + \delta
    \]
    for all $j \in S_{\vec{f}^{\tilde{\vec{\tau}}}} (T)$ with $j \neq i$. Therefore, in particular $\underline{\MS} (\tilde{\vec{\tau}}) = T_i (T; \vec{f}^{\tilde{\vec{\tau}}})$.
    Since the queues depend continuously on $\tilde{\vec{\tau}}$ as can be seen from the formulas in~\Cref{lem:equilibrium}, there exists $\eps > 0$ such that increasing $\tilde{\tau}_i$ by $\eps$ decreases $T_i (T; \vec{f}^{\tilde{\vec{\tau}}})$ by less than $\frac{\delta}{2}$ and increases $T_j (T; \vec{f}^{\tilde{\vec{\tau}}})$ by less than $\frac{\delta}{2}$. 
    Therefore, also $\underline{\MS} (\tilde{\vec{\tau}}) = T_i (T; \vec{f}^{\tilde{\vec{\tau}}})$ is decreased, contradicting the assumption that $\underline{\MS} (\tilde{\vec{\tau}}) =
    \min_{\vec{\tau}' \geq \vec{0}} \underline{\MS} (\vec{\tau}')$.
\end{proof}
    
    By \Cref{lem:makespan:minimizer}, all exit times in the equilibrium with respect to~$\tilde{\vec\tau}$ are equal. This can be used to show that the difference $\tilde{\tau}_{i+1} - \tilde{\tau}_i= \tau_{i+1} - \tau_i$ is equal for all $i \in [m]$. This implies that $\overline{\MS}(\vec\tau) = \overline{\MS}(\tilde{\vec{\tau}})$, leading to the following main result. 

    \begin{theorem} \label{thm:makespan:minimal}
        The infimum in~\eqref{eq:makespan:minimization:problem} is attained and, further,
        \[
            \MS_T (\vec{f}) = \underline{\MS}(\vec{\tau}) = \overline{\MS} (\vec{\tau}) = \min \bigl\{ \overline{\MS} (\vec{\tau}') \;\big\vert\; \vec{\tau}' \geq \vec{0} \bigr\}
            ,
        \]
        where $\vec{f}$ is the dynamic equilibrium with respect to the original travel times~$\vec\tau$.
    \end{theorem}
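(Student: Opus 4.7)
\emph{Attainment.} I would show that $\vec{\tau}$ itself attains the infimum in~\eqref{eq:makespan:minimization:problem}. Using $\overline{\MS} \geq \underline{\MS}$ pointwise and the lower semi-continuity of $\underline{\MS}$ (\Cref{lem:makespan:lower-semicontinuous}), I first reduce the minimization of $\underline{\MS}$ to a compact domain: if $\tau'_i > T + \max_j \tau_j$, then~\eqref{eq:equilibrium:breakpoints} forces $\theta^*_i(\vec{\tau}') > T$ under $\vec{f}^{\vec{\tau}'}$, so link $i$ carries no flow on $[0,T]$ and one may cap $\tau'_i$ without altering $\underline{\MS}(\vec{\tau}')$. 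Compactness together with lower semi-continuity then yields a minimizer $\tilde{\vec{\tau}}$ of $\underline{\MS}$.

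\emph{Characterization up to a shift.} Applying \Cref{lem:makespan:minimizer} to $\tilde{\vec{\tau}}$ either gives $\underline{\MS}(\tilde{\vec{\tau}}) = \underline{\MS}(\vec{\tau})$ directly, or shows that the sum $z_i(T;\vec{f}^{\tilde{\vec{\tau}}})/\nu_i + \tau_i$ is constant in $i \in S \coloneqq S_{\vec{f}^{\tilde{\vec{\tau}}}}(T)$. On the other hand, since $\vec{f}^{\tilde{\vec{\tau}}}$ is an equilibrium with respect to $\tilde{\vec{\tau}}$, the sum $z_i(T;\vec{f}^{\tilde{\vec{\tau}}})/\nu_i + \tilde{\tau}_i$ is also constant on $S$. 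Subtracting the two identities yields a single constant $c$ with $\tilde{\tau}_i = \tau_i + c$ for every $i \in S$, and in particular $\tilde{\tau}_{i+1} - \tilde{\tau}_i = \tau_{i+1} - \tau_i$ within $S$. For $j \notin S$, link $j$ is inactive on $[0,T]$ under $\vec{f}^{\tilde{\vec{\tau}}}$, so modifying $\tilde{\tau}_j$ within the range that keeps $j$ inactive does not change $\underline{\MS}(\tilde{\vec{\tau}})$, and after such an adjustment we may assume $\tilde{\vec{\tau}} = \vec{\tau} + c\vec{1}$.

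\emph{Shift invariance and closing the chain.} Formulas~\eqref{eq:equilibrium:breakpoints} and~\eqref{eq:equilibrium:flow} depend on $\vec{\tau}$ only through the differences $\tau_{i+1} - \tau_i$, so $\vec{f}^{\vec{\tau} + c\vec{1}} = \vec{f}^{\vec{\tau}}$ and therefore $\underline{\MS}(\tilde{\vec{\tau}}) = \underline{\MS}(\vec{\tau}) = \MS_T(\vec{f})$, showing that $\vec{\tau}$ is itself a minimizer of $\underline{\MS}$. Uniqueness of queue lengths and supports across all dynamic equilibria~\citep[cf.][]{cominetti2015existence,Olver21}, together with \Cref{lem:easier:makespan}, then gives $\overline{\MS}(\vec{\tau}) = \underline{\MS}(\vec{\tau})$. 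Finally, from $\underline{\MS}(\vec{\tau}) \leq \underline{\MS}(\vec{\tau}') \leq \overline{\MS}(\vec{\tau}')$ for every $\vec{\tau}' \geq \vec{0}$ we deduce $\overline{\MS}(\vec{\tau}) \leq \overline{\MS}(\vec{\tau}')$, so all four quantities in the claim coincide. The main obstacle will be the second step, specifically extending the equality $\tilde{\tau}_i = \tau_i + c$ from the support $S$ to its complement in a way that preserves the equilibrium flow on $[0,T]$; once this is in place, attainment and shift invariance are comparatively routine.
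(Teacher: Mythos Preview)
Your proposal follows essentially the same strategy as the paper: take a minimizer $\tilde{\vec\tau}$ of $\underline{\MS}$, apply \Cref{lem:makespan:minimizer} to force equal exit times on the support, deduce that $\tilde\tau_i-\tau_i$ is constant on $S$, and then invoke shift-invariance of the equilibrium. Your compactification step for attainment is in fact more careful than the paper, which appeals to lower semi-continuity without explicitly restricting the domain.

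There are two places where your execution diverges from the paper and where some care is needed. First, for the step you flag as the ``main obstacle'' (extending $\tilde\tau_i=\tau_i+c$ from $S$ to all links), the paper does \emph{not} attempt this extension at all. Instead it argues directly from the explicit formula in \Cref{lem:equilibrium} that the queue lengths $z_i(T;\vec f^{\tilde{\vec\tau}})$ depend only on the differences $\tilde\tau_{j+1}-\tilde\tau_j$ for links $j$ already in the support at time $T$; since those differences coincide with $\tau_{j+1}-\tau_j$, one gets $\MS_T(\vec f^{\tilde{\vec\tau}})=\MS_T(\vec f)$ without ever touching $\tilde\tau_j$ for $j\notin S$. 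Your proposed modification of $\tilde\tau_j$ for $j\notin S$ to the value $\tau_j+c$ is not guaranteed to keep $j$ inactive (nothing rules out $\tau_j+c$ being small enough to bring $j$ into the support), so the paper's route avoids a genuine difficulty in yours.

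Second, your justification of $\overline{\MS}(\vec\tau)=\underline{\MS}(\vec\tau)$ via ``uniqueness of queue lengths and supports'' is not quite enough: the makespan is a supremum over links with \emph{positive flow}, and distinct equilibria may place positive flow on different subsets of the support $S(T)$. The paper closes this by observing that at $\vec\tau'=\vec\tau$ indifferent links necessarily have identical \emph{real} travel times $\tau_i$, so every choice within the support yields the same exit time $T_i(T)$ and hence the same makespan. You should add this observation explicitly.
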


    \begin{proof}
    First, we observe that for $\vec{\tau}' = \vec{\tau}$, the makespan is independent of the actual equilibrium, i.e.,
    \[
        \MS_T (\vec{f}) = \MS_T (\vec{f}^{\vec\tau})
        \quad
        \text{for all } \vec{f} \in \mathcal{F}(\vec\tau).
    \]
    This is true since there are only multiple dynamic equilibria in $\mathcal{F}(\vec\tau)$ if players are indifferent between links. But in this case, the travel times~$\tau_i$ of these links are also the same and, hence, also the makespan is the same for all flows. This implies in particular $\underline{\MS}(\vec{\tau}) = \overline{\MS} (\vec{\tau})$.

    Let $\tilde{\vec{\tau}} \geq \vec{0}$ such that $\underline{\MS} (\tilde{\vec{\tau}}) =
    \min_{\vec{\tau}' \geq \vec{0}} \underline{\MS} (\vec{\tau}')$. By \Cref{lem:makespan:minimizer}, we know that either $\underline{\MS} (\tilde{\vec{\tau}}) = \underline{\MS} (\vec{\tau}) = \min_{\vec{\tau}' \geq \vec{0}} \, \overline{\MS} (\vec{\tau}')$ or all exit times of links in the support are the same. In the former case, there is nothing to show.
    Thus, assume
    \[
        T_i (T; \vec{f}^{\tilde{\vec{\tau}}})
        =
        T_j (T; \vec{f}^{\tilde{\vec{\tau}}})
        \quad \text{ for all } i,j \in S_{\vec{f}^{\tilde{\vec{\tau}}}} (T)
        .
    \]
    The definition of the exit times $T_j(T; \vec{f}^{\tilde{\vec{\tau}}}) = T + \mfrac{z_j(\theta)}{\nu_j} + \tau_i$ (where the queue lengths~$z_j(\theta)$ depend on $\vec{f}^{\tilde{\vec{\tau}}}$ and, thus, also on $\tilde{\vec{\tau}}$) yields 
    \begin{align*}
        0 &=
        T_i(T; \vec{f}^{\tilde{\vec{\tau}}}) - T_j(T; \vec{f}^{\tilde{\vec{\tau}}})
        =
        (\tau_i - \tau_j) + \frac{z_i(\theta)}{\nu_i} - \frac{z_j(\theta)}{\nu_i}
        \\
        &= (\tau_i - \tau_j) + \big(- \tilde{\tau}_i  - (- \tilde{\tau}_j) \big),
    \end{align*}
    for all $i,j \in S_{\vec{f}^{\tilde{\vec{\tau}}}} (T)$,
    where we used that every term in the formulas of~$z_j(\theta)$ from \Cref{lem:equilibrium} are independent of $\tau_i$ except for the term $\nu_j \tau_j$.
    Hence, we have shown that
    \[
        \tau_i - \tau_j = \tilde{\tau}_i  - \tilde{\tau}_j
    \]
    for all links $i,j \in S_{\vec{f}^{\tilde{\vec{\tau}}}} (T)$.
    We argue that the queue lengths only depend on the differences of travel times. If $\theta \geq \theta^*_{k+1}$ this follows immediately from~\Cref{lem:equilibrium}. If $\theta \leq \theta^*_{k+1}$, we use~\eqref{eq:prf:equilibrium:zi1} from the proof of~\Cref{lem:equilibrium} and~\eqref{eq:equilibrium:breakpoints} and obtain
    \begin{align*}
        z_i &\stackrel{\mathclap{\eqref{eq:prf:equilibrium:zi1}}}{=}
        \nu_i \Biggl(
            \tau_{\nLinksEQ} - \tau_i +
            \frac{\arrival - \bar{\nu} (\nLinksEQ)}{\bar{\nu} (\nLinksEQ)}
            \big(\theta - \theta_{\nLinksEQ}^* \big)
        \Biggr)
        \\
        &\stackrel{\mathclap{\eqref{eq:equilibrium:breakpoints}}}{=}
        \nu_i \Biggl(
            \tau_{\nLinksEQ} - \tau_i +
            \frac{\arrival - \bar{\nu} (\nLinksEQ)}{\bar{\nu} (\nLinksEQ)}
            \bigg(\theta - \sum_{j = 1}^{\nLinksEQ - 1} \frac{\bar{\nu}(j)}{\arrival - \bar{\nu}(j)} (\tau_{j+1} - \tau_j) \bigg)
        \Biggr).
    \end{align*}
    The latter formula also only depends on the differences of the travel times. Therefore, the queue lengths of the flows $\vec{f}$ and $\vec{f}^{\tilde{\vec{\tau}}}$ have the same makespan, i.e., $\MS_T (\vec{f}) = \MS_T (\vec{f}^{\tilde{\vec{\tau}}})$.

    Overall, we conclude
    \[
        \min_{\vec{\tau}' \geq \vec{0}} \, \underline{\MS} (\vec{\tau}') \leq \underline{\MS} (\vec{\tau}) = \MS_T (\vec{f}) = \MS_T (\vec{f}^{\tilde{\vec{\tau}}}) = \min_{\vec{\tau}' \geq \vec{0}} \, \underline{\MS} (\vec{\tau}')
    \]
    and
    \[
        \inf_{\vec{\tau}' \geq \vec{0}} \, \overline{\MS} (\vec{\tau}') \leq \overline{\MS} (\vec{\tau}') = \underline{\MS} (\vec{\tau}') \leq  \min_{\vec{\tau}' \geq \vec{0}} \, \underline{\MS} (\vec{\tau}') \leq \inf_{\vec{\tau}' \geq \vec{0}} \, \overline{\MS} (\vec{\tau}')
        .
    \]
    Hence, we have shown that
    $
        \inf_{\vec{\tau}' \geq \vec{0}} \, \overline{\MS} (\vec{\tau}') = \overline{\MS} (\vec{\tau}')
    $
    and the claim follows.
\end{proof}

   \Cref{thm:makespan:minimal} implies that, if the flow particles act according to a dynamic equilibrium with respect to arbitrary travel times~$\vec{\tau}'$, the makespan of this dynamic equilibrium can never be smaller than that of a dynamic equilibrium with respect with respect to the original travel times~$\vec{\tau}$. This, however, implies full information revelation is an optimal signaling scheme, as claimed in  \Cref{thm:makespan:full-information}. 

    \begin{proof}[Proof of Theorem~\ref{thm:makespan:full-information}]
        Assume, a scenario~$s$ with travel times~$\tau_{i, s}, i \in [m]$ is realized.
        If we use full information revelation signaling scheme, then flow particles act as in a dynamic equilibrium with respect to the actual travel times~$\tau_{i, s}, i \in [m]$.
        Given an arbitrary signaling scheme, the particles act as in a dynamic equilibrium with respect to travel times $\vec{\mu}^{\top}\vec{\tau}$, where $\vec{\mu}$ is the updated belief of the players.
        \Cref{thm:makespan:minimal} implies that the makespan of the dynamic equilibrium with respect to the original travel times~$\vec{\tau}$ is better than the makespan of any other flow that is a dynamic equilibrium for arbitrary travel times~$\vec{\tau}' \geq \vec{0}$. Hence, full information revelation is optimal for every scenario~$s$.
        Observe that this result also holds for an infinite number of scenarios and arbitrary distributions.
    \end{proof}

\newpage

\appendix

\section*{Appendix}

\section{Examples}

\subsection{An Example for the Throughput Objective}
\label{sec:throughput-example}

We consider an example for the throughput objective. The instance consists of $m=2$ links, $\s=2$ scenarios, and a time horizon~$T=5$. 
The links have capacities $\nu_1 = \frac{1}{3}$ and $\nu_2 = \frac{2}{3}$ and the travel times depend on two scenarios (called blue and red) with $\vec\tau_1 = (1, 5)^\top$ and $\vec\tau = (4,3)^\top$. The instance is depicted in \Cref{subfig:throughput:instance}.

The throughput function $\FV$ is shown in \Cref{subfig:throughput:function}. It is a continuous piece-wise quadratic function over the set of beliefs $\mu\in[0,1]$, where $\mu$ indicates the probability of the red scenario being realized. The function $\FV$ has two breakpoints at $\mu\in\{\frac{1}{5},\frac{3}{5}$\}. For $\mu\in[0,\frac{1}{5})$, link $1$ has a lower expected travel time and, thus, the first flow particles only use link $1$. Since the inflow rate $\arrival$ exceeds the capacity $\nu_1$, a queue starts to build up and at some point in time, flow particles start to deviate to link $2$.
With increasing $\mu$, flow particles start to deviate earlier in time to link $2$ as the expected travel time of link $2$ decreases in $\mu$ while the travel time of link $1$ increases in $\mu$.
For $\mu=\frac{1}{5}$, it is guaranteed, that the first particle, that deviated to link $2$ will leave the queuing system before time horizon $T$ no matter which scenario is realized. Hence, for $\mu\in[\frac{1}{5},\frac{3}{5}]$, the throughput increases linearly in $\mu$. For $\mu=\frac{3}{5}$, both links have the same expected travel time and, thus, both links are used from the very first point in time. This results in the maximization of the total throughput $\FV$. For $\mu\in(\frac{3}{5},1]$, the first flow particles use only link $2$ until a sufficiently large queue has built up, such that flow particles deviate to link $1$.

Additionally, \Cref{subfig:throughput:function} depicts two convex decompositions of a prior~$\lambda^*$ with the beliefs~$\vec{\mu}^i$. The corresponding two different signaling schemes are an optimal signaling scheme with three signals (green) and the suboptimal full information revelation scheme. 
The dashed lines give the throughout that can be achieved if the respective signaling scheme is used.


\begin{figure}[bt]%
\hfill%
\begin{subfigure}{.48\textwidth}
    \centering
    \begin{tikzpicture}[scale=4]
        \node[node] (s) at (0,0) {\clap{\footnotesize s}};
        \node[node] (t) at (1,0) {\footnotesize \clap{\smash{t}}\llap{\phantom{s}}};

        \draw[-{Stealth}, very thick, mpgray]
            (s) edge[bend left=25] 
                node[pos=.5, above] {\footnotesize $\nu_1 = \nicefrac{1}{3}$ \quad $\vec\tau_1 = \sctau{1}{5}$}
            (t)
            (s) edge[bend right=25] 
                node[pos=.5, below] {\footnotesize $\nu_3 = \nicefrac{2}{3}$ \quad $\vec\tau_2 = \sctau{4}{3}$}
            (t);

        \draw[very thick, {Stealth}-, mppetrol]
            ($(s)+(-.075,0)$) -- ++(-.35, 0) node[midway, above] {\small $u=1$};
    \end{tikzpicture}
    \caption{The instance.}
    \label{subfig:throughput:instance}
\end{subfigure}\hfill%
\begin{subfigure}{.48\textwidth} \def\sy{6}
    \centering
    \begin{tikzpicture}[xscale=\sx, yscale=\sy]
        \def\FVoffset{-1.1}
        \def\muStar{7/16}

        \draw[thick, -latex, thick] (-.5/\sx, 0) -- (1+.75/\sx, 0) node[right] {\small $\mu$};
        \draw[thick, -latex, thick] (0, -.5/\sy) -- (0, .5) node[above] {\small $\FV$};

        \draw[thick] (1, .1/\sy) -- (1, -.1/\sy) node[below] {\footnotesize $1$};
        \draw[thick] (\muStar, .1/\sy) -- (\muStar, -.1/\sy) node[below] {\footnotesize $\lambda^*$};
        \draw[thick] (.6, .1/\sy) -- (.6, -.1/\sy) node[below] {\footnotesize $\nicefrac{3}{5}$};

        \draw[objectivefunction, domain=0:1/5] plot (\x, {1/3*(4+\x*(-3+5*\x))  + \FVoffset});
        \draw[objectivefunction, domain=1/5:3/5] plot (\x, {(1 + \x) + \FVoffset});
        \draw[objectivefunction, domain=3/5:1] plot (\x, {(4 + 2/3 * \x * (-9+5*\x)) + \FVoffset});
        
        \coordinate (truth1) at (0, 4/3 + \FVoffset);
        \coordinate (truth2) at (1, 4/3 + \FVoffset);
        \draw[myblue, very thick, dashed] (truth1) -- (truth2);
        \fill[myblue] (truth1) ellipse (2/\sx pt and 2/\sy pt) node[below left] {\footnotesize $\tilde{\mu}^{1}$};
        \fill[myblue] (truth2) ellipse (2/\sx pt and 2/\sy pt) node[below right] {\footnotesize $\tilde{\mu}^{2}$};

        \coordinate (opt1) at (truth1);
        \coordinate (opt2) at (3/5, 8/5 + \FVoffset);
        \coordinate (opt3) at (truth2);
        \draw[mpgreen, very thick, dashed] (opt1) -- (opt2) -- (opt3);
        \fill[mpgreen] (opt1) ellipse (2/\sx pt and 2/\sy pt) node[above left] {\footnotesize $\mu^{1}$};
        \fill[mpgreen] (opt2) ellipse (2/\sx pt and 2/\sy pt) node[above right] {\footnotesize $\mu^{2}$};
        \fill[mpgreen] (opt3) ellipse (2/\sx pt and 2/\sy pt) node[above right] {\footnotesize $\mu^{3}$};

        \coordinate (muStarTruth) at (\muStar, 4/3 + \FVoffset);
        \coordinate (muStarNoInfo) at (\muStar, \muStar + 1 + \FVoffset);
        \coordinate (muStarOpt) at (\muStar, 4/3 + 4/9 * \muStar + \FVoffset);
        \draw[thick, dashed] (muStarOpt) -- (\muStar, 0);

        \fill[myblue] (muStarTruth) ellipse (2/\sx pt and 2/\sy pt);
        \fill[mppetrol] (muStarNoInfo) ellipse (2/\sx pt and 2/\sy pt);
        \fill[mpgreen] (muStarOpt) ellipse (2/\sx pt and 2/\sy pt);

    \end{tikzpicture}\hfill%

    \caption{The throughput function with signaling schemes.}
    \label{subfig:throughput:function}
\end{subfigure}
\caption{The example for the throughput objective from~\Cref{sec:throughput-example}.
(a) The instance with $n=3$~edges and $m=2$~scenarios (\textcolor{mpblue}{blue} and \textcolor{mpred}{red}).
(b) The throughput for $T = 5$ depending on the probability \textcolor{mpred}{$\mu$} for the \textcolor{mpred}{red} scenario. \textcolor{myblue}{Full information revelation with conditional beliefs $\tilde{\mu}^1$ and $\tilde{\mu}^2$} is \emph{not optimal}, a \textcolor{mpgreen}{signaling scheme} incorporating the indifference point at $\mu^2 = \nicefrac{3}{5}$ is optimal.}
\label{fig:throughput}
\end{figure}

\subsection{An Example for the Makespan Objective}
\label{sec:makespan-example}

We consider an example for the makespan objective. The instance consists of $m=3$ links, $\s=2$ scenarios, and a time horizon~$T=\frac{1}{2}$. 
The links have capacities $\nu_1 = \frac{1}{2}$, $\nu_2 = \frac{1}{3}$, and $\nu_3 = \frac{1}{2}$. The travel times depend on two scenarios (called blue and red) with $\vec\tau_1 = (0, 5)^\top$, $\vec\tau_2 = (1,1)^\top$, and $\vec\tau = (4,0)^\top$. The instance is depicted in \Cref{subfig:makespan:instance}.

The makespan function $\MS$ is shown in \Cref{subfig:makespan:function}. It is a piece-wise quadratic function over the set of beliefs $\mu\in[0,1]$, where $\mu$ indicates the probability of the red scenario being realized. The function $\MS$ has six breakpoints at $\mu\in\{\frac{1}{10},\frac{1}{5},\frac{2}{5},\frac{1}{2},\frac{3}{4},\frac{7}{8}\}$.
    Roughly speaking, $\MS$ is continuous at a breakpoint, if the order in which links are chosen by the flow particles change. In contrast to this, $\MS$ is not continuous at a breakpoint, if the last flow particle entering the queuing system is indifferent between two links and thus the makespan cannot be uniquely defined. In fact, the makespan depends on which link the last particle chooses.
    In more detail, for $\mu\in[0,\frac{1}{10})$, all flow particles use link~$1$.
    Since the inflow rate $\arrival=1$ into link $1$ exceeds its capacity, a queue builds up and, hence, $\MS$ grows linearly in~$\mu$.
    When $\mu=\frac{1}{10}$, the last particle entering the queuing system is indifferent between choosing link $1$ and link $2$. However, choosing link $2$ increases the makespan $\MS_{T,2}$ for the red scenario from $1$ to $3/2$, resulting in an immediate increase in the makespan $\MS_T(\vec f)$ and, thus, in a discontinuity point.
    With increasing $\mu$, the flow particles start to deviate from link $1$ to link $2$ earlier in time. When $\mu=\frac{1}{5}$, already the very first particle entering the queuing system is indifferent between links $1$ and $2$. Thus, for $\mu\in(\frac{1}{5},\frac{2}{5})$, the first flow particles use link $2$ and only later in time, particles start to deviate to link $1$.
    For $\mu=\frac{2}{5}$, the expected travel time on link $1$ is sufficiently large, such that the last particle entering the queuing system is the first particle indifferent between links $1$ and $2$. In contrast to the case when $\mu=\frac{1}{10}$, this now results in an immediate decrease of the makespan.
    For $\mu\in(\frac{2}{5},\frac{1}{2})$, all flow particles use link $2$. Since the travel time of link $2$ is deterministic, this gives a constant makespan for $\mu\in(\frac{2}{5},\frac{1}{2})$.
    We can make similar observations for the reaming values of $\mu$. In short, for $\mu\in(\frac{1}{2},\frac{3}{4})$, flow particles start by using link~$2$ and then deviate to link $3$ later in time. For $\mu\in(\frac{3}{4},\frac{7}{8})$, flow particles start by using link $3$ and then deviate to link $2$ later in time. Finally, for $\mu\in(\frac{7}{8},1]$, all flow particles only use link $3$.

Additionally, \Cref{subfig:makespan:function} depicts two convex decompositions of a prior $\lambda^*$ with the beliefs~$\mu^i$ and $\tilde{\mu}^i$, $i \in [2]$, respectively. The corresponding signaling schemes with beliefs $\mu^1$ and $\mu^2$ corresponds to full information revelation and is optimal; the other signaling scheme with beliefs $\tilde{\mu}^1$ and $\tilde{\mu}^2$ corresponds to a suboptimal scheme.

\begin{figure}%
    \hfill%
    \begin{subfigure}{.48\textwidth}
    \centering
        \begin{tikzpicture}[scale=4]
            \node[node] (s) at (0,0) {\clap{\footnotesize s}};
            \node[node] (t) at (1,0) {\footnotesize \clap{\smash{t}}\llap{\phantom{s}}};
    
            \draw[-{Stealth}, very thick, mpgray]
                (s) edge[bend left=60] 
                    node[pos=.5, above] {\footnotesize $\nu_1 = \nicefrac{1}{2}$ \quad $\vec\tau_1 = \sctau{0}{5}$}
                (t)
                (s) edge 
                    node[pos=.5, above] {\footnotesize $\nu_2 = \nicefrac{1}{3}$ \quad $\vec\tau_2 = \sctau{1}{1}$}
                (t)
                (s) edge[bend right=60] 
                    node[pos=.5, below] {\footnotesize $\nu_3 = \nicefrac{1}{2}$ \quad $\vec\tau_3 = \sctau{4}{0}$}
                (t);
    
            \draw[very thick, {Stealth}-, mppetrol]
                ($(s)+(-.075,0)$) -- ++(-.35, 0) node[midway, above] {\small $u=1$};
        \end{tikzpicture}
        \caption{The instance.}
        \label{subfig:makespan:instance}
    \end{subfigure}\hfill%
    \begin{subfigure}{.48\textwidth}
    \centering
        \begin{tikzpicture}[xscale=\sx, yscale=\sy]
            \draw[thick, -latex, thick] (-.5/\sx, 0) -- (1+.75/\sx, 0) node[right] {\small $\mu$};
            \draw[thick, -latex, thick] (0, -.5/\sy) -- (0, 4) node[above] {\small $\MS$};
    
            \draw[thick] (1, .1/\sy) -- (1, -.1/\sy) node[below] {\footnotesize $1$};
            \draw[thick] (.45, .1/\sy) -- (.45, -.1/\sy) node[below] {\footnotesize $\lambda^*$};
    
            \draw[mppetrol, thick, dotted] (0.1, 1.5) -- (0.1, 39/20);
            \draw[mppetrol, thick, dotted] (2/5, 37/10) -- (2/5, 5/2);
            \draw[mppetrol, thick, dotted] (1/2, 5/2) -- (1/2, 7/2);
            \draw[mppetrol, thick, dotted] (7/8, 31/16) -- (7/8, 3/2);
    
            \draw[objectivefunction, domain=0:0.1] plot (\x, 1+5*\x);
            \draw[objectivefunction, domain=0.1:0.2] plot (\x, {7/5 + (6-5*\x)*\x});
            \draw[objectivefunction, domain=0.2:0.4] plot (\x, {7/10 + 19/2*\x-5*\x*\x});
            \draw[objectivefunction] (2/5, 5/2) -- (1/2, 5/2);
            \draw[objectivefunction, domain=0.5:.75] plot (\x, {43/10+2/5*\x-4*\x*\x});
            \draw[objectivefunction, domain=.75:7/8] plot (\x, {11/5+4/5*(4-5*\x)*\x});
            \draw[objectivefunction, domain=7/8:1] plot (\x, {5-4*\x});
    
            \draw[mpgreen, very thick, dashed] (0,1) -- (1,1);
            \fill[mpgreen] (0, 1) ellipse (2/\sx pt and 2/\sy pt) node[left] {\footnotesize $\mu^{1}$};
            \fill[mpgreen] (1, 1) ellipse (2/\sx pt and 2/\sy pt) node[right] {\footnotesize $\mu^{2}$};
    
            \draw[myblue, very thick, dashed] (7/40, 147/64) -- (5/8, 239/80);
            \fill[myblue] (7/40, 147/64) ellipse (2/\sx pt and 2/\sy pt) node[above left] {\footnotesize $\tilde{\mu}^{1}$};
            \fill[myblue] (5/8, 239/80) ellipse (2/\sx pt and 2/\sy pt) node[above right] {\footnotesize $\tilde{\mu}^{2}$};
    
            \draw[thick, dashed] (0.45,15661/5760) -- (0.45,0);
    
            \fill[myblue] (0.45,15661/5760) ellipse (2/\sx pt and 2/\sy pt);
            \fill[mppetrol] (0.45,5/2) ellipse (2/\sx pt and 2/\sy pt);
            \fill[mpgreen] (0.45,1) ellipse (2/\sx pt and 2/\sy pt);
    
        \end{tikzpicture}
        \caption{The makespan function with signaling schemes.}
        \label{subfig:makespan:function}
    \end{subfigure}\hfill

    \caption{The example for the makespan objective from~\Cref{sec:makespan-example}.
    (a)
    The example instance with $n=3$~edges and $m=2$~scenarios (\textcolor{mpblue}{blue} and \textcolor{mpred}{red}).
    (b) 
    The makespan function for $T = \nicefrac{1}{2}$ depending on the probability \textcolor{mpred}{$\mu$} for the \textcolor{mpred}{red} scenario. \textcolor{mpgreen}{Full information revelation with conditional beliefs $\mu^1$ and $\mu^2$} is an optimal signaling scheme, \textcolor{myblue}{the other signaling scheme with conditional beliefs $\tilde{\mu}^1$ and $\tilde{\mu^2}$} is suboptimal.
    }
    \label{fig:makespan}
\end{figure}


\subsection{An Example for the Throughput Objective with Irrational Optimal Beliefs}
\label{sec:irrational-example}

In this section, we give an example for the throughput objective that exhibits an optimal signaling scheme that uses a convex combination of the prior~$\vec\mu$ using irrational beliefs $\vec\mu^i$.

In particular, we consider an instance with $m=3$ links, $\s=2$ scenarios, and time horizon $T=7$. The capacities of the links are $\nu_1=\frac{1}{2}$, $\nu_2 = \frac{1}{4}$, and $\nu_3=\frac{1}{3}$. The travel times depend on two scenarios (called blue and red) and are $\vec\tau_1=(1,10)^\top$, $\vec\tau_2 =(2,8)^\top$, and $\vec\tau_3 =(3,5)^\top$, respectively.

We denote by $\mu\in[0,1]$ the probability for sthe second (red) scenario.
The expected throughput for every belief $\mu\in[0,1]$ can be explicitly calculated as 
\begin{align*}
\FV(\mu)=
\begin{cases}
    \frac{1}{2} \left(-9 \mu ^2+\mu +8\right) & \text{if } 0\leq \mu \leq \frac{2}{15}, \\[4pt]
    \frac{1}{6} \mu  (3 \mu -1)+4 & \text{if } \frac{2}{15}<\mu \leq \frac{1}{4}, \\[4pt]
     \frac{23 \mu ^2}{6}-7 \mu +\frac{11}{2} & \text{if } \frac{1}{4}<\mu \leq \frac{2}{7}, \\[4pt]
     \frac{19}{4}-\frac{1}{24} \mu  (27 \mu +71) & \text{if } \frac{2}{7}<\mu \leq \frac{1}{3}, \\[4pt]
      \frac{1}{120} \bigl(\mu  (432 \mu -1111)+759\bigr) & \text{if } \frac{1}{3} <\mu\leq \frac{39}{62},\\[4pt]
      \frac{1}{12} \mu  (37 \mu -101)+6 & \frac{39}{62}<\mu \leq 1. 
\end{cases}
\end{align*}
This function is shown in \Cref{subfig:irrational:throughput}. For every prior $\prior\in \bigl(\frac{1}{36}(9-\sqrt{42}),\frac{1}{4}\bigr)$, the optimal signaling scheme induces the optimal beliefs $\mu^1=\frac{1}{36}(9-\sqrt{42})$ and $\mu^2=\frac{1}{4}$, as shown in~\Cref{subfig:irrational:throughput:zoom}. Hence, the optimal signaling scheme involves irrational numbers, even though all input numbers are rational.




\begin{figure} \centering
\begin{subfigure}{.48\textwidth}
    \centering
    \begin{tikzpicture}[scale=4]
        \node[node] (s) at (0,0) {\clap{\footnotesize s}};
        \node[node] (t) at (1,0) {\footnotesize \clap{\smash{t}}\llap{\phantom{s}}};

        \draw[-{Stealth}, very thick, mpgray]
            (s) edge[bend left=60] 
                node[pos=.5, above] {\footnotesize $\nu_1 = \nicefrac{1}{2}$ \quad $\vec\tau_1 = \sctau{1}{10}$}
            (t)
            (s) edge 
                node[pos=.5, above] {\footnotesize $\nu_2 = \nicefrac{1}{4}$ \quad $\vec\tau_2 = \sctau{2}{8}$}
            (t)
            (s) edge[bend right=60] 
                node[pos=.5, below] {\footnotesize $\nu_3 = \nicefrac{1}{3}$ \quad $\vec\tau_3 = \sctau{3}{5}$}
            (t);

        \draw[very thick, {Stealth}-, mppetrol]
            ($(s)+(-.075,0)$) -- ++(-.35, 0) node[midway, above] {\small $u=1$};
    \end{tikzpicture}
    \caption{The instance.}
    \label{subfig:irrational:instance}
\end{subfigure}

\begin{subfigure}[c]{0.48\textwidth}
\begin{tikzpicture}[xscale=5]
\useasboundingbox (-0.1,-0.5) rectangle (1.2,6);
\draw[-latex, thick] (-0.1,0) -- (1.1,0) node[right] {$\mu$};
\draw[-latex, thick] (0,-0.5) -- (0,5) node[above] {$\FV(\vec \mu)$};
\draw[thick] (0.025,4) -- (-0.025,4) node[left] {\footnotesize$4$};
\draw[thick] (0.025,2/3) -- (-0.025,2/3) node[left] {\footnotesize$2/3$};
\draw[thick] (1/4,0.1) -- (1/4,-0.1) node[below] {\footnotesize$1/4$};
\draw[thick] (1,0.1) -- (1,-0.1) node[below] {\footnotesize$1$};
\draw[smooth,variable=\x,domain=0:2/15, objectivefunction] plot({\x},{1/2*(8+\x-9*\x*\x)});
\draw[smooth,variable=\x,domain=2/15:1/4, objectivefunction] plot({\x},{4+1/6*\x*(-1+3*\x)});
\draw[smooth,variable=\x,domain=1/4:2/7, objectivefunction] plot({\x},{11/2-7*\x+(23*\x*\x)/6});
\draw[smooth,variable=\x,domain=2/7:1/3, objectivefunction] plot({\x},{19/4-1/24*\x*(71+27*\x)});
\draw[smooth,variable=\x,domain=1/3:39/62, objectivefunction] plot({\x},{1/120*(759+\x*(-1111+432*\x))});
\draw[smooth,variable=\x,domain=39/62:1, objectivefunction] plot({\x},{6+1/12*\x*(-101+37*\x)});
\end{tikzpicture}
\caption{The throughput function~$\FV(\mu)$}
\label{subfig:irrational:throughput}
\end{subfigure}
\begin{subfigure}[c]{0.5\textwidth}
\begin{tikzpicture}[xscale=5,yscale=1]
\useasboundingbox (-0.1,-0.5) rectangle (1.2,6);

\draw[-latex,thick] (-0.1,0) -- (1.1,0) node[right] {$\mu$};
\draw[-latex, thick] (0,0.5) -- (0,5) node[above] {$\FV(\vec \mu)$};
\draw[thick] (0,-0.5) -- (0,0) {};
\draw[dotted, thick] (0,0) -- (0,0.5) {};
\draw[xscale=4,yscale=40,thick] (0.005,0.1) -- (-0.005,0.1) node[left] {\footnotesize$4.05$};
\draw[xscale=4,yscale=40,thick] (0.005,0.05) -- (-0.005,0.05) node[left] {\footnotesize$4.00$};
\draw[xscale=4,yscale=40,thick] (2/15,0.0025) -- (2/15,-0.0025) node[below] {\footnotesize$2/15$};
\draw[xscale=4,yscale=40,thick] (1/4,0.0025) -- (1/4,-0.0025) node[below] {\footnotesize$\mu^2 = 1/4$};
\draw[xscale=4,yscale=40,thick] (0.07,0.0025) -- (0.07,-0.0025) node[below] {\footnotesize$\mu^1$};
\draw[smooth,variable=\x,domain=0:2/15,xscale=4,yscale=40,objectivefunction] plot({\x},{1/2*(8+\x-9*\x*\x)-3.95});
\draw[smooth,variable=\x,domain=2/15:1/4,xscale=4,yscale=40,objectivefunction] plot({\x},{4+1/6*\x*(-1+3*\x)-3.95});
\fill[mpgreen,xscale=4,yscale=40] 
    (0.07,{(4.01296-3.95)}) ellipse (.125pt and .0625pt)
    (1/4,383/96-3.95) ellipse (.125pt and .0625pt);
\draw[mpgreen,xscale=4,yscale=40,very thick,dashed] (0.07,4.01296-3.95) -- (1/4,383/96-3.95);
\end{tikzpicture}
\caption{The throughput function~$\FV(\mu)$ for $\mu \leq \frac{1}{4}$.
}
\label{subfig:irrational:throughput:zoom}
\end{subfigure}
\caption{
(a) The instance from~\Cref{sec:irrational-example} with $n=3$~links and $m=2$~scenarios (\textcolor{mpblue}{blue} and \textcolor{mpred}{red}).
(b) The throughput $\FV(\mu)$ for $\mu\in[0,1]$.
(c) The  \textcolor{mpgreen}{optimal convex decomposition} for $\prior\in(\mu^1,1/4)$ with $\mu^1=1/36(9-\sqrt{42})$.}
\end{figure}

\section{Deferred Material from Section~\ref{sec:structure-deterministic}}
\label{app:structure-deterministic}

\begin{lemma} \label{lem:equilibrium}
    Let $\vec f = (f_i)_{e \in E}$ be the flow defined in~\eqref{eq:equilibrium:flow}. Then,
        \begin{enumerate}[label=(\roman*)]
            \item the flow $\vec f$
            is a feasible flow with queue lengths
                \[
                    z_i (\theta) =
                    \begin{cases}
                        \displaystyle
                        \nu_i \Biggl(
                            \frac{\arrival - \bar{\nu} (\nLinksEQ)}{\bar{\nu} (\nLinksEQ)}
                            \Biggl(
                                \theta +
                                \sum_{j=1}^{\nLinksEQ}
                                    \frac{\arrival \, \nu_j \, \tau_j}{(\arrival - \bar{\nu}(j)) (\arrival - \bar{\nu} (j-1))}
                            \Biggr)
                            - \tau_i
                        \Biggr)
                        &
                        \text{if } \theta^*_i \leq \theta < \theta^*_{k+1}, i \leq k,
                        \\
                        \nu_i \bigl(\tau_{k+1} - \tau_i \bigr)
                        &
                        \text{if } \theta \geq \theta^*_{k+1}, i \leq k,
                        \\
                        0 & \text{otherwise,}
                    \end{cases}
                \]
                where $S(\theta) = \bigl\{ e \in [m] \;\big\vert\; \theta^*_i \leq \theta \bigr\}$ is the support at time $\theta$ and $\nLinksEQ \coloneqq |S(\theta)|$ is the number of links in the support at time~$\theta$.
            \item
                the flow $\vec f$ is a Nash equilibrium.
            \item
                the queue lengths $z_i(\theta)$ are non-increasing in $\tau_i$ and non-decreasing in $\tau_j$ for $j \neq i$. If, additionally, $i \leq k$ and $\theta > \theta^*_i$, then $z_i(\theta)$ is strictly decreasing in $\tau_i$.
        \end{enumerate}
    \end{lemma}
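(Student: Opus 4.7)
I would establish the three claims in order. For (i), I begin with flow conservation: on $[\theta_j^*, \theta_{j+1}^*)$ with $j \le k$, the active links $i \le j$ contribute inflows summing to $(u/\bar{\nu}(j)) \sum_{i \le j} \nu_i = u$; on $[\theta_{k+1}^*, \infty)$ the contributions $\sum_{i \le k} \nu_i + (u - \bar{\nu}(k)) = u$. For the queue lengths, I would argue by induction on $j \in \{i, \dots, k\}$ that the claimed closed form is valid on $[\theta_j^*, \theta_{j+1}^*)$. The base case $j = i$ uses $z_i(\theta_i^*) = 0$, which must be verified by substituting $\theta = \theta_i^*$ into the closed form and unwinding the sum via \eqref{eq:equilibrium:breakpoints}. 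On each subsequent interval, $z_i$ grows linearly with slope $f_i - \nu_i = \nu_i (u - \bar{\nu}(j))/\bar{\nu}(j)$, so I check that the two formulas for consecutive $j$ agree at $\theta_{j+1}^*$ using the recursive definition of $\theta_{j+1}^*$. At $\theta = \theta_{k+1}^*$ a telescoping simplification should reduce the expression to $\nu_i(\tau_{k+1} - \tau_i)$, and for $\theta \ge \theta_{k+1}^*$ we have $f_i(\theta) = \nu_i$, so the queue stays constant.

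For (ii), I substitute the queue-length formula into $T_i(\theta) = \theta + z_i(\theta)/\nu_i + \tau_i$. The $-\tau_i$ inside the closed form of $z_i$ cancels the $+\tau_i$ from the exit-time definition, yielding, for all $i \in S(\theta)$ with $\theta \in [\theta_j^*, \theta_{j+1}^*)$, the common value
\[
T_i(\theta) = \theta + \frac{u - \bar{\nu}(j)}{\bar{\nu}(j)}\Biggl(\theta + \sum_{l=1}^{j} \frac{u\,\nu_l\,\tau_l}{(u - \bar{\nu}(l))(u - \bar{\nu}(l-1))}\Biggr),
\]
which depends only on $j = \ell(\theta)$; similarly $T_i(\theta) = \theta + \tau_{k+1}$ for $\theta \ge \theta_{k+1}^*$, $i \le k$. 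For the unused links $i > \ell(\theta)$, we have $z_i(\theta) = 0$ and hence $T_i(\theta) = \theta + \tau_i$. Using the continuity of $T_i(\theta)$ in $\theta$ and the recursive definition of $\theta_i^*$, I argue that at $\theta = \theta_{\ell(\theta)+1}^*$ the common exit time of active links equals $\theta + \tau_{\ell(\theta)+1}$ and is strictly smaller for earlier $\theta$; this yields $T_j(\theta) \le T_i(\theta) = \theta + \tau_i$ for $j \in S(\theta)$ and $i > \ell(\theta)$, establishing the Nash property.

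For (iii), I inspect the closed form. On $[\theta_{k+1}^*, \infty)$ the formula $z_i = \nu_i(\tau_{k+1} - \tau_i)$ is plainly non-increasing in $\tau_i$ and non-decreasing in $\tau_{k+1}$, independent of other $\tau_j$. On $[\theta_i^*, \theta_{k+1}^*)$, the explicit $-\tau_i$ outside the sum provides the monotone decrease in $\tau_i$; after substituting the dependence of the $\theta_j^*$'s on the $\tau_j$'s via \eqref{eq:equilibrium:breakpoints}, the coefficient of $\tau_j$ for $j \neq i$ in $z_i(\theta)$ is non-negative, and the coefficient of $\tau_i$ has the form $-\nu_i$ plus a non-negative contribution. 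Strict monotonicity for $\theta > \theta_i^*$ follows because that extra contribution vanishes only at $\theta = \theta_i^*$.

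The main obstacle I anticipate is the algebraic bookkeeping in (i): matching the telescoping sum in the closed form of $z_i(\theta)$ against the piecewise-linear growth dictated by the dynamics, and verifying continuity at each breakpoint $\theta_j^*$. The arithmetic is straightforward but delicate; the recursion \eqref{eq:equilibrium:breakpoints} must be applied carefully so that the mixed telescoping $\sum_l u\nu_l\tau_l / [(u - \bar{\nu}(l))(u - \bar{\nu}(l-1))]$ collapses appropriately at each transition. Once (i) is secured, both (ii) and (iii) are direct consequences of the closed form.
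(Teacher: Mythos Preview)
Your plan for (i) and (ii) is correct and essentially coincides with the paper's proof: the paper also integrates the queue dynamics over the successive intervals $[\theta_j^*,\theta_{j+1}^*)$, arrives at the intermediate form
\[
z_i(\theta)=\nu_i\Bigl(\tau_{\ell(\theta)}-\tau_i+\tfrac{u-\bar\nu(\ell(\theta))}{\bar\nu(\ell(\theta))}\bigl(\theta-\theta^*_{\ell(\theta)}\bigr)\Bigr),
\]
and then substitutes an explicit expression for $\theta^*_{\ell(\theta)}$ to obtain the closed form; your inductive breakpoint-matching is just the discretized version of this integration. The argument for (ii) via cancellation of the $-\tau_i$ term is exactly what the paper does.

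There is, however, a real gap in your treatment of (iii). You write that the coefficient of $\tau_i$ is ``$-\nu_i$ plus a non-negative contribution'' and that strict monotonicity holds because ``that extra contribution vanishes only at $\theta=\theta_i^*$.'' But the coefficient of $\tau_i$ in the closed form does \emph{not} depend on $\theta$: on any interval where $\ell(\theta)=\ell$ is fixed it equals the constant
\[
\nu_i\Bigl(\tfrac{u-\bar\nu(\ell)}{\bar\nu(\ell)}\cdot\tfrac{u\,\nu_i}{(u-\bar\nu(i))(u-\bar\nu(i-1))}-1\Bigr),
\]
so nothing in it ``vanishes'' as $\theta\to\theta_i^*$. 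Showing that this constant is strictly negative is a genuine inequality; the paper obtains it via
\[
\tfrac{u-\bar\nu(\ell)}{\bar\nu(\ell)}\cdot\tfrac{u\,\nu_i}{(u-\bar\nu(i))(u-\bar\nu(i-1))}
\;\le\;\tfrac{u\,\nu_i}{\bar\nu(i)\,(u-\bar\nu(i-1))}
\;<\;\tfrac{u\,\nu_i}{u\bigl(\bar\nu(i)-\bar\nu(i-1)\bigr)}=1,
\]
using $\ell\ge i$ for the first step and $\bar\nu(i)<u$ (valid for $i\le k$) for the strict second step. Your sketch does not supply this estimate, and without it the strict-monotonicity claim in (iii) is unproven.
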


\begin{proof}
        By definition, $f_i (\theta) \geq 0$ for all $e \in E$ and $\theta \geq 0$ and $\sum_{e \in E} f_i (\theta) = \arrival$. Hence, $\vec f$ is a feasible flow.

        We compute the queue length at time $\theta$ for some fixed link~$i \in [m]$.
        First, assume $\theta < \theta^*_{k+1}$.
        If, additionally, $\theta < \theta^*_i$, then $z_i (\theta) = 0$ since $f_i (s) = 0$ for all $0 \leq s \leq \theta < \theta^*_i$. Thus, also $z_i(\theta) = 0$ for all $i \in [m]$ with $\theta^*_i \geq \theta$.

        Assume that $\theta \geq \theta^*_i$. We observe, that for all $\theta_i^* \leq s \leq \theta < \theta^*_{k+1}$ we have $f_i (s) \geq \nu_i$. Therefore, we can compute the queue length by computing the integral of~\eqref{eq:queue:dynamics} as follows.
        \begin{align}
            z_i (\theta) &=
            \int_0^{\theta} z_i' (s) \mathrm{d}s
            = \int_0^{\theta^*_i} [0 - \nu_i]^+ \mathrm{d}s
            + \int_{\theta^*_i}^\theta [f_i - \nu_i]^+ \mathrm{d}s \notag\\
            &=
            \sum_{j=i}^{\nLinksEQ-1} \int_{\theta^*_j}^{\theta_{j+1}^*} \Big(\frac{\arrival}{\bar{\nu}(j)} \nu_i - \nu_i \Big) \mathrm{d}s
            +\int_{\theta^*_{\nLinksEQ}}^{\theta} \Big(\frac{\arrival}{\bar{\nu}(\nLinksEQ)} \nu_i - \nu_i \Big)  \mathrm{d}s \notag\\
            &=
            \nu_i \Biggl(
            \sum_{j=i}^{\nLinksEQ-1}
            \frac{\arrival - \bar{\nu} (j)}{\bar{\nu} (j)}
            (\theta^*_{j+1} - \theta^*_j)
            +
            \frac{\arrival - \bar{\nu} (\nLinksEQ)}{\bar{\nu} (\nLinksEQ)}
            \big(\theta - \theta_{\nLinksEQ}^* \big)
            \Biggr) \notag\\
            &\stackrel{\mathclap{\eqref{eq:equilibrium:breakpoints}}}{=}
            \nu_i \Biggl(
                \tau_{\nLinksEQ} - \tau_i +
                \frac{\arrival - \bar{\nu} (\nLinksEQ)}{\bar{\nu} (\nLinksEQ)}
                \big(\theta - \theta_{\nLinksEQ}^* \big)
            \Biggr) \label{eq:prf:equilibrium:zi1}
        \end{align}
        From~\eqref{eq:equilibrium:breakpoints}, we additionally obtain
        for $i' \leq k$
        \begin{align*}
            \theta^*_{i'} &= \sum_{j = 1}^{i' - 1} \frac{\bar{\nu}(j)}{\arrival - \bar{\nu}(j)} (\tau_{j+1} - \tau_j)
            =
            \frac{\bar{\nu}(i'-1)}{\arrival - \bar{\nu}(i' - 1)} \tau_{i'} +
            \sum_{j=1}^{i'-1}
            \biggl(
                \frac{\bar{\nu}(j-1)}{\arrival - \bar{\nu}(j-1)} - \frac{\bar{\nu}(j)}{\arrival - \bar{\nu}(j)}
            \biggr) \, \tau_j
            \\
            &=
            \frac{\bar{\nu}(i')}{\arrival - \bar{\nu}(i')} \tau_{i'} +
            \sum_{j=1}^{i'}
            \frac{\arrival (\bar{\nu} (j-1) - \bar{\nu} (j))}{(\arrival - \bar{\nu}(j-1))(\arrival - \bar{\nu}(j))} \, \tau_j
            =
            \frac{\bar{\nu}(i')}{\arrival - \bar{\nu}(i')} \tau_{i'}
            -
            \sum_{j=1}^{i'}
            \frac{\arrival \, \nu_j \, \tau_j}{(\arrival - \bar{\nu}(j-1))(\arrival - \bar{\nu}(j))}
        \end{align*}
        and compute
        \begin{align*}
            z_i (\theta) &=
            \nu_i \Biggl(
                \frac{\arrival - \bar{\nu} (\nLinksEQ)}{\bar{\nu} (\nLinksEQ)}
                \bigg(\theta +
                \sum_{j=1}^{i'}
                \frac{\arrival \, \nu_j \, \tau_j}{(\arrival - \bar{\nu}(j-1))(\arrival - \bar{\nu}(j))}
                \bigg)
                - \tau_i
            \Biggr).
        \end{align*}

        Now, consider the case $\theta > \theta^*_{k+1}$. By~\eqref{eq:equilibrium:flow}, $f_i (\theta) \leq \nu_i$ if $\theta \geq \theta^*_{k+1}$ and, thus, the queues do not grow after the time $\theta^*_{k+1}$. Therefore, we compute for every $i \leq k$
        \begin{align*}
            z_i (\theta) &=
            \int_0^{\theta} z_i' (s) \mathrm{d}s
            = \int_0^{\theta^*_i} [0 - \nu_i]^+ \mathrm{d}s
            + \int_{\theta^*_i}^\theta [f_i - \nu_i]^+ \mathrm{d}s \\
            &=
            \sum_{j=i}^{k} \int_{\theta^*_j}^{\theta_{j+1}^*} \Big(\frac{\arrival}{\bar{\nu}(j)} \nu_i - \nu_i \Big) \mathrm{d}s
            +
            \int_{\theta^*_{k}}^{\theta} \underbrace{[f_i(\theta) - \nu_i]^+}_{=0}  \mathrm{d}s
            =
            \nu_i \big( \tau_{k} - \tau_i \big),
        \end{align*}
        where, in the last step, we used~\eqref{eq:equilibrium:breakpoints} and the same computations as above.
        Since links $i > k + 1$ are never used, $z_i (\theta) = 0$ for all $i > k + 1$. This concludes the proof of statement~\emph{(i)}.

        Finally, we compute the travel times of the links using the formula
        $
            T_i (\theta) = \theta + \frac{z_i(\theta)}{\nu_i} + \tau_i.
        $.
        For $i \leq k$, we observe by the explicit formulas computed above that $T_i$ is independent of~$i$ and, therefore, all links have the same travel time.
        The link $k+1$ is used only if $\theta \geq \theta^*_{k+1}$. In this case, all links have the same travel time $\tau_{k+1}$.
        All other links $i > k+1$ have travel times $\tau_i \geq \tau_{k+1}$ since the links are ordered by increasing travel times.
        Hence, the flow $\vec{f}$ is also a Nash equilibrium, proving~\emph{(ii)}.

        For the monotonicity of the queue lengths~$z_i(\theta)$, we first observe that, if $\theta \geq \theta_{k+1}^*$, statement~\emph{(iii)} follows immediately. Further, if $\theta^*_j > \theta$, then the queue lengths~$z_i(\theta)$ is independent of the variable~$\tau_j$, and the statement~\emph{(iii)} follows in this case as well. The same holds for all queue lengths~$z_i(\theta)$ with $i > k$.
        Finally, we consider the case $\theta^*_{j} \leq \theta < \theta^*_{k+1}$ and $i \leq k$. If $j \neq i$, the queue lengths $z_i (\theta)$ depends on~$\tau_j$ via the term
        \[
            \underbrace{\nu_i \frac{\arrival - \bar{\nu}(\nLinksEQ)}{\bar{\nu} (\nLinksEQ)} \frac{\arrival \, \nu_j}{(\arrival - \bar{\nu} (j)) (\arrival - \bar{\nu} (j-1))}}_{\geq 0} \tau_j
            .
        \]
        Since the coefficient of $\tau_j$ is non-negative, the function is non-decreasing in~$\tau_j$. If $j = i$, $z_i (\theta)$ depends on~$\tau_i$ via the term
        \begin{align*}
            \nu_i \biggl( \frac{\arrival - \bar{\nu}(\nLinksEQ)}{\bar{\nu} (\nLinksEQ)} \frac{\arrival \, \nu_i}{(\arrival - \bar{\nu} (i)) (\arrival - \bar{\nu} (i-1))} - 1\biggr) \tau_i
            .
        \end{align*}
        The coefficient of~$\tau_i$ is
        \begin{align*}
            \nu_i \biggl( &\frac{\arrival - \bar{\nu}(\nLinksEQ)}{\bar{\nu} (\nLinksEQ)} \frac{\arrival \, \nu_i}{(\arrival - \bar{\nu} (i)) (\arrival - \bar{\nu} (i-1))} - 1\biggr)
            \leq
            \nu_i \biggl( \frac{\arrival \, \nu_i}{\bar{\nu}(i) (\arrival - \bar{\nu} (i-1))} - 1 \biggr) \\
            &=
            \nu_i \biggl( \frac{\arrival \, \nu_i}{\arrival \, \bar{\nu}(i) - \bar{\nu}(i) \bar{\nu} (i-1)} - 1 \biggr)
            <
            \nu_i \biggl( \frac{\arrival \, \nu_i}{\arrival \, (\bar{\nu}(i) - \bar{\nu} (i-1))} - 1 \biggr) = 0,
        \end{align*}
        where we used in the last inequality that $\bar{\nu} (i) \leq \bar{\nu} (k) < \arrival$. Since the coefficient is strictly negative, the queue lengths is strictly decreasing.
    \end{proof}

\section{Proof of \Cref{lem:piecewise-linear}}
\label{app:lem:piecewise-linear}

\begin{proof}
    First, we note that by definition and~\eqref{eq:equilibrium:breakpoints:mu}, the exit times~$\exitTime_{i,s} (\vec{\mu})$ are linear on every polytope~$P$ induced by the hyperplanes~$H_{i,j}$, since on $P$ the permuation~$\pi(\,\cdot\,; \vec{\mu})$ is constant and, by~\eqref{eq:equilibrium:breakpoints:mu}, the entry times~$\theta^*_i$ are linear.

    Every polytope~$Q$ induced by the hyperplanes $H_{i,j}, H_{i,j,s}, H_{i,s,T}$ is a subset of some polytope~$P$. Thus, the exit times $\exitTime_{i,s}$ are still linear.
    Further, the definitions of the hyperplanes $H_{i,j,s}, H_{i,s,T}$ ensure that the ordering $\sigma_{s} (\,\cdot\,; \vec{\mu})$ of the exit times is constant on~$Q$.
    Further, the number $\numContributingLinks$ of links contributing to the throughput are unchanged.
    The same holds for $\bar{\nu} (\numContributingLinks; \vec{\mu})$.
    Therefore, all terms in the explicit formula for $F_s(\vec{\mu})$ in \Cref{lem:flow-value:function} are constants, except for the exit times. Since the latter is linear in $\vec{\mu}$, the function $F_s(\vec{\mu})$ is affine linear on~$Q$ and, thus, piecewise linear on~$P$. This implies statements \emph{(i)} and \emph{(ii)}.

    It remains to show that $F_s$ is convex on every polytope~$P$ induced by the hyperplanes~$H_{i,j}$. We again use that the exit times are linear on~$P$. 
    In order to show convexity, let $\vec{\mu}^{(1)}, \vec{\mu}^{(2)} \in P$ and let
    \[
        \vec{\mu}^{(3)} \coloneqq \vec{\mu}(\xi) = \xi \vec{\mu}^{(1)} + (1 - \xi) \vec{\mu}^{(2)}
    \]
    for some $\xi \in [0, 1]$.
    All three beliefs $\vec{\mu}^{(1)},\vec{\mu}^{(2)}$ and $\vec{\mu} (\xi)$ are contained in $P$ but may lie in different polytopes induced by $H_{i,j}, H_{i,j,s}, H_{i,s,T}$.
    Let $Q_1, Q_2, Q_3$ be the three polytopes induced by~$H_{i,j}, H_{i,j,s}, H_{i,s,T}$ such that
    \[
        \vec{\mu}^{(1)} \in Q_1, \vec{\mu}^{(2)} \in Q_2 \text{ and } \vec{\mu} (\xi) \in Q_3
        .
    \]
    For ease of notation, we denote by $\sigma_{Q_j} (i) \coloneqq \sigma_{s} (i; \vec{\mu}^{(j)})$ the ordering of the exit times on $Q_1, Q_2$, and~$Q_3$, respectively.
    Similarly, we write $\bar{\nu}_{Q_j} \coloneqq \bar{\nu}_{s} (\numContributingLinks[\vec{\mu}^{(j)}], \vec{\mu}^{(j)})$ and $\numContributingLinks*_{Q_j} = \numContributingLinks[\vec{\mu}^{(j)}]$.
    Finally, we define the following index sets
    \[
        I_j \coloneqq \bigl\{ i \in [m] \mid \sigma_{Q_j} (i) \leq \numContributingLinks*_{Q_j} \bigr\}
        , \qquad i=1,2,3
    \]
    containing all indices of links contributing to the throughput in $\vec{\mu}^{(j)}$. We observe
    \begin{align}
        \exitTime_{i, s} (\vec{\mu}^{(j)}) &\leq \exitTime_{\sigma_{Q_{j}}(\numContributingLinks*_{Q_{j}}), s} (\vec{\mu}^{(j)}) \leq T
        &&\text{if}\quad
        i \in I_j
        ,
        \label{eq:prf:convexity:11}\\
        \exitTime_{i, s} (\vec{\mu}^{(j)}) &\geq \exitTime_{\sigma_{Q_{j}}(\numContributingLinks*_{Q_{j}}), s} (\vec{\mu}^{(j)})
        &&\text{if}\quad
        i \notin I_j
        ,
        \label{eq:prf:convexity:12}\\
        \exitTime_{i, s} (\vec{\mu}^{(j)}) &\geq T \geq \exitTime_{\sigma_{Q_{j}}(\numContributingLinks*_{Q_{j}}), s} (\vec{\mu}^{(j)})
        &&\text{if}\quad
        i \notin I_j \text{ and } \bar{\nu}_{Q_j} < \arrival
        \label{eq:prf:convexity:13}
        ,
        \\
        \text{and}\qquad
        \bar{\nu}_{Q_j} - \nu_{\sigma_{Q_{j}}(\numContributingLinks*_{Q_{j_1}})} &= \bar{\nu}_{s} (\numContributingLinks*_{Q_{j}} - 1; \vec{\mu}^{(j)}) < \arrival
        &&\text{if}\quad
        \bar{\nu}_{Q_j} \geq \arrival 
        \label{eq:prf:convexity:3}
        ,
    \end{align}
    where \eqref{eq:prf:convexity:11}-\eqref{eq:prf:convexity:13} follow from the definition of the index~$\sigma_{Q_{j}}(\numContributingLinks*_{Q_{j}})$ as the index of the last link that contributes to the throughput.
    Fact~\eqref{eq:prf:convexity:3} follows, since, by definition, $\sigma_{Q_{j}}(\numContributingLinks*_{Q_{j_1}})$ is the first index such that the summed capacities $\bar{\nu}_{s} (i; \vec{\mu}^{(j)})$ exceed the inflow~$\arrival$.
    Additionally, we observe that $\bar{\nu}_{Q_j}  = \sum_{i \in I_j} \nu_i$ and, thus,
    \begin{equation}\label{eq:prf:convexity:2}
        \bar{\nu}_{Q_{j_1}} - \bar{\nu}_{Q_{j_2}}  = \sum_{i \in I_{j_1} \setminus I_{j_2}} \nu_i - \sum_{i \in I_{j_2} \setminus I_{j_1}} \nu_i
        .
    \end{equation}

    \begin{claim}\label{claim:prf:convexity}
        For $j \in {1, 2}$ we have
        \begin{align*}
            A_j \coloneqq
            \arrival T + T \cdot \big[\bar{\nu}_{Q_3} - \arrival\big]^-
            - \sum_{i \in I_3} \nu_{i} \exitTime_{i, s}(\vec{\mu}^{(j)})
            + \exitTime_{\sigma_{Q_3} (\numContributingLinks*_{Q_3}), s}(\vec{\mu}^{(j)}) \cdot \big[\bar{\nu}_{Q_3} - \arrival\big]^+
            \leq
            F_s (\vec{\mu}^{(j)})
            .
        \end{align*}
    \end{claim}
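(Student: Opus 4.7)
The plan is to establish $A_j \leq F_s(\vec{\mu}^{(j)})$ by rewriting $A_j$ in an integral form and comparing it pointwise with the integral representation of $F_s(\vec{\mu}^{(j)})$. From the computation inside the proof of \Cref{lem:flow-value:function}, at the point $\vec{\mu}^{(j)}$ we have
\[
F_s(\vec{\mu}^{(j)}) = \int_0^T \min\bigl\{S^j(\theta), \arrival\bigr\}\,d\theta, \qquad S^j(\theta) := \sum_{i \in [m]: \exitTime_{i,s}(\vec{\mu}^{(j)}) \leq \theta} \nu_i.
\]
Writing $S^{I_3}(\theta) := \sum_{i \in I_3: \exitTime_{i,s}(\vec{\mu}^{(j)}) \leq \theta}\nu_i$ and using the standard layer-cake identity $\sum_{i \in I_3}\nu_i\exitTime_{i,s}(\vec{\mu}^{(j)}) = T\bar{\nu}_{Q_3} - \int_0^T S^{I_3}(\theta)\,d\theta + \sum_{i \in I_3: \exitTime_{i,s}(\vec{\mu}^{(j)}) > T}\nu_i(\exitTime_{i,s}(\vec{\mu}^{(j)}) - T)$, a direct substitution into the formula for $A_j$ simplifies (using $x = [x]^+ + [x]^-$) to
\[
A_j = -\bigl(T - \exitTime_{\ell_3,s}(\vec{\mu}^{(j)})\bigr)\bigl[\bar{\nu}_{Q_3} - \arrival\bigr]^+ + \int_0^T S^{I_3}(\theta)\,d\theta - \sum_{i \in I_3: \exitTime_{i,s}(\vec{\mu}^{(j)}) > T}\nu_i\bigl(\exitTime_{i,s}(\vec{\mu}^{(j)}) - T\bigr).
\]

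Next, I split $\int_0^T S^{I_3}\,d\theta = \int_0^T \min\{S^{I_3},\arrival\}\,d\theta + \int_0^T [S^{I_3} - \arrival]^+\,d\theta$. Since $I_3 \subseteq [m]$ gives $S^{I_3} \leq S^j$ and hence $\min\{S^{I_3},\arrival\} \leq \min\{S^j,\arrival\}$, the first piece is bounded by $F_s(\vec{\mu}^{(j)})$. The desired $A_j \leq F_s(\vec{\mu}^{(j)})$ thus reduces to the pointwise-excess inequality
\[
\int_0^T [S^{I_3}(\theta) - \arrival]^+\,d\theta \leq \bigl(T - \exitTime_{\ell_3,s}(\vec{\mu}^{(j)})\bigr)\bigl[\bar{\nu}_{Q_3} - \arrival\bigr]^+ + \sum_{i \in I_3: \exitTime_{i,s}(\vec{\mu}^{(j)}) > T}\nu_i\bigl(\exitTime_{i,s}(\vec{\mu}^{(j)}) - T\bigr).
\]

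When $\bar{\nu}_{Q_3} \leq \arrival$, the left-hand side vanishes and the right-hand side is non-negative. When $\bar{\nu}_{Q_3} > \arrival$, I invoke the strict bound~\eqref{eq:prf:convexity:3}, $\bar{\nu}_{I_3 \setminus \{\ell_3\}} < \arrival$. This forces $S^{I_3}(\theta) > \arrival$ only once $\ell_3$ has been activated (i.e., for $\theta \geq \exitTime_{\ell_3,s}(\vec{\mu}^{(j)})$), and when activated the excess is at most $\bar{\nu}_{Q_3} - \arrival$. If $\exitTime_{\ell_3,s}(\vec{\mu}^{(j)}) \leq T$, this yields LHS $\leq (T - \exitTime_{\ell_3,s}(\vec{\mu}^{(j)}))(\bar{\nu}_{Q_3} - \arrival)$, which is the first term of the RHS. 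If instead $\exitTime_{\ell_3,s}(\vec{\mu}^{(j)}) > T$, then $\ell_3$ never activates inside $[0,T]$ so LHS $= 0$; the RHS simplifies to $(\exitTime_{\ell_3,s}(\vec{\mu}^{(j)}) - T)\bigl(\nu_{\ell_3} - (\bar{\nu}_{Q_3} - \arrival)\bigr) + \sum_{i \in I_3 \setminus \{\ell_3\}: \exitTime_{i,s}(\vec{\mu}^{(j)}) > T}\nu_i(\exitTime_{i,s}(\vec{\mu}^{(j)}) - T)$, which is non-negative because~\eqref{eq:prf:convexity:3} also yields $\nu_{\ell_3} > \bar{\nu}_{Q_3} - \arrival$.

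The main obstacle is the saturated case $\bar{\nu}_{Q_3} > \arrival$: the link $\ell_3$ playing the canonical bottleneck role at $\vec{\mu}^{(3)}$ may reorder arbitrarily within $I_3$ at $\vec{\mu}^{(j)}$ and may even have exit time larger than $T$. The saving structural fact is precisely~\eqref{eq:prf:convexity:3}, which decouples the pointwise comparison from this reordering: since $I_3 \setminus \{\ell_3\}$ alone has capacity strictly below $\arrival$, any over-saturation in $S^{I_3}$ is entirely attributable to $\ell_3$ and is accounted for by $\ell_3$'s contributions on either side of $T$.
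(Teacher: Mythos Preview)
Your argument is correct and takes a genuinely different route from the paper's. The paper proves the claim by writing out $A_j - F_s(\vec{\mu}^{(j)})$ via the explicit formula of \Cref{lem:flow-value:function} and then running a four-way case analysis on the signs of $\bar{\nu}_{Q_j}-\arrival$ and $\bar{\nu}_{Q_3}-\arrival$; each case is handled by expanding the difference into sums over $I_j\setminus I_3$ and $I_3\setminus I_j$ and invoking the comparison facts \eqref{eq:prf:convexity:11}--\eqref{eq:prf:convexity:13} together with \eqref{eq:prf:convexity:3}. Your proof instead recasts both $A_j$ and $F_s(\vec{\mu}^{(j)})$ as integrals of step functions over $[0,T]$ and compares them pointwise. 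The key advantages of your approach are that it never refers to the index set $I_j$ or the cross-terms $I_j\setminus I_3$, $I_3\setminus I_j$, and consequently needs only the single structural fact \eqref{eq:prf:convexity:3} (that removing $\ell_3$ drops $\bar{\nu}_{Q_3}$ below $\arrival$) rather than all of \eqref{eq:prf:convexity:11}--\eqref{eq:prf:convexity:3}. The paper's four cases collapse in your argument into the single dichotomy on $\bar{\nu}_{Q_3}$ versus $\arrival$, with the position of $\omega_{\ell_3,s}(\vec{\mu}^{(j)})$ relative to $T$ handled uniformly by the excess inequality. Both approaches ultimately hinge on \eqref{eq:prf:convexity:3}, but yours exposes more transparently why that fact is exactly what is needed: any over-saturation of $S^{I_3}$ must come from $\ell_3$ alone.
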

    \begin{proof}[Proof of~\Cref{claim:prf:convexity}]
        With the formula from~\Cref{lem:flow-value:function}, we compute
        \begin{align*}
            A_j &=
            F_s (\vec{\mu}^{(j)})
            + T \bigl( \big[\bar{\nu}_{Q_3} - \arrival\big]^- - \big[\bar{\nu}_{Q_j} - \arrival\big]^- \bigr)
            + \sum_{i \in I_j \setminus I_3} \nu_{i} \exitTime_{i, s}(\vec{\mu}^{(j)}) - \sum_{i \in I_3 \setminus I_j} \nu_{i} \exitTime_{i, s}(\vec{\mu}^{(j)})
            \\
            &\qquad
            + \exitTime_{\sigma_{Q_3} (\numContributingLinks*_{Q_3}), s}(\vec{\mu}^{(j)})  \cdot  \big[\bar{\nu}_{Q_3} - 
            \arrival\big]^+ 
            - 
            \exitTime_{\sigma_{Q_j} (\numContributingLinks*_{Q_j}), s}(\vec{\mu}^{(j)})  \cdot 
            \big[\bar{\nu}_{Q_j} - \arrival\big]^+
        \end{align*}
        There are four possible cases.

        \noindent\textbf{Case 1:} $\bar{\nu}_{Q_j} < \arrival$ and $\bar{\nu}_{Q_3} < \arrival$. Then,
        \begin{align*}
            A_j &= F_s (\vec{\mu}^{(j)}) +
            T \bigl( \bar{\nu}_{Q_3} - \bar{\nu}_{Q_j} \bigr)
            + \sum_{i \in I_j \setminus I_3} \nu_{i} \exitTime_{i, s}(\vec{\mu}^{(j)}) - \sum_{i \in I_3 \setminus I_j} \nu_{i} \exitTime_{i, s}(\vec{\mu}^{(j)}) \\
            &\stackrel{\mathclap{\eqref{eq:prf:convexity:2}}}{=}
            F_s (\vec{\mu}^{(j)})
            - \sum_{i \in I_j \setminus I_3} \nu_{i} \big(T - \exitTime_{i, s}(\vec{\mu}^{(j)})\big)
            + \sum_{i \in I_3 \setminus I_j} \nu_{i} \big(T - \exitTime_{i, s}(\vec{\mu}^{(j)})\big)
            \;\stackrel{\mathclap{\eqref{eq:prf:convexity:11},\eqref{eq:prf:convexity:13}}}{\leq}\;
            F_s (\vec{\mu}^{(j)})
            .
        \end{align*}

        \noindent\textbf{Case 2:} $\bar{\nu}_{Q_j} < \arrival$ and $\bar{\nu}_{Q_3} \geq \arrival$. Then,
        \begin{align*}
            A_j &= F_s (\vec{\mu}^{(j)}) 
            +
            \exitTime_{\sigma_{Q_3} (\numContributingLinks*_{Q_3}), s}(\vec{\mu}^{(j)}) \big( \bar{\nu}_{Q_3} - \arrival \big)
            - T \big( \bar{\nu}_{Q_j} - \arrival \big)
            + \sum_{i \in I_j \setminus I_3} \nu_{i} \exitTime_{i, s}(\vec{\mu}^{(j)}) - \sum_{i \in I_3 \setminus I_j} \nu_{i} \exitTime_{i, s}(\vec{\mu}^{(j)}) \\
            &\stackrel{\mathclap{\eqref{eq:prf:convexity:2}}}{=}
            F_s (\vec{\mu}^{(j)}) 
            +
            \exitTime_{\sigma_{Q_3} (\numContributingLinks*_{Q_3}), s}(\vec{\mu}^{(j)}) \big( \bar{\nu}_{Q_3} - \arrival \big)
            - T \bigg(\bar{\nu}_{Q_3} + \sum_{i \in I_{j} \setminus I_{3}} \nu_i - \sum_{i \in I_{3} \setminus I_{j}} \nu_i - \arrival \bigg) \\
            &\qquad
            + \sum_{i \in I_j \setminus I_3} \nu_{i} \exitTime_{i, s}(\vec{\mu}^{(j)}) - \sum_{i \in I_3 \setminus I_j} \nu_{i} \exitTime_{i, s}(\vec{\mu}^{(j)})
            \\
            &=
            F_s (\vec{\mu}^{(j)})
            - \sum_{i \in I_j \setminus I_3} \nu_{i} \big(T - \exitTime_{i, s}(\vec{\mu}^{(j)})\big)
            + \sum_{i \in I_3 \setminus I_j} \nu_{i} \big(T - \exitTime_{i, s}(\vec{\mu}^{(j)})\big)
            +
            \big( \exitTime_{\sigma_{Q_3} (\numContributingLinks*_{Q_3}), s}(\vec{\mu}^{(j)}) - T \big) \big( \bar{\nu}_{Q_3} - \arrival \big)
            .
        \end{align*}
        If $\exitTime_{\sigma_{Q_3} (\numContributingLinks*_{Q_3}), s}(\vec{\mu}^{(j)}) - T \leq 0$, we obtain
        \begin{align*}
            A_j &\leq 
            F_s (\vec{\mu}^{(j)})
            - \sum_{i \in I_j \setminus I_3} \nu_{i} \big(T - \exitTime_{i, s}(\vec{\mu}^{(j)})\big)
            + \sum_{i \in I_3 \setminus I_j} \nu_{i} \big(T - \exitTime_{i, s}(\vec{\mu}^{(j)})\big)
            \;\stackrel{\mathclap{\eqref{eq:prf:convexity:11},\eqref{eq:prf:convexity:13}}}{\leq}\;
            F_s (\vec{\mu}^{(j)})
            .
        \end{align*}
        Otherwise, if $\exitTime_{\sigma_{Q_3} (\numContributingLinks*_{Q_3}), s}(\vec{\mu}^{(j)}) - T > 0$, then $\sigma_{Q_3} (\numContributingLinks*_{Q_3}) \notin I_j$ and, thus, $\sigma_{Q_3} (\numContributingLinks*_{Q_3}) \in I_3 \setminus I_j$. Then,
        \begin{align*}
            A_j &=
            F_s (\vec{\mu}^{(j)})
            - \sum_{i \in I_j \setminus I_3} \nu_{i} \big(T - \exitTime_{i, s}(\vec{\mu}^{(j)})\big)
            + \sum_{\mathclap{\substack{i \in I_3 \setminus I_j\\ i \neq \sigma_{Q_3} (\numContributingLinks*_{Q_3})}}} \nu_{i} \big(T - \exitTime_{i, s}(\vec{\mu}^{(j)})\big)
            +
            \nu_{\sigma_{Q_3} (\numContributingLinks*_{Q_3})} \big(T - \exitTime_{\sigma_{Q_3} (\numContributingLinks*_{Q_3}), s}(\vec{\mu}^{(j)})\big)
            \\
            &\qquad
            +
            \big( \exitTime_{\sigma_{Q_3} (\numContributingLinks*_{Q_3}), s}(\vec{\mu}^{(j)}) - T \big) \big( \bar{\nu}_{Q_3} - \arrival \big)
            \\
            &=
            F_s (\vec{\mu}^{(j)})
            - \sum_{i \in I_j \setminus I_3} \nu_{i} \big(T - \exitTime_{i, s}(\vec{\mu}^{(j)})\big)
            + \sum_{\mathclap{\substack{i \in I_3 \setminus I_j\\ i \neq \sigma_{Q_3} (\numContributingLinks*_{Q_3})}}} \nu_{i} \big(T - \exitTime_{i, s}(\vec{\mu}^{(j)})\big)
            \\
            &\qquad
            +
            \big( \exitTime_{\sigma_{Q_3} (\numContributingLinks*_{Q_3}), s}(\vec{\mu}^{(j)}) - T \big) \underbrace{ \big( \bar{\nu}_{Q_3} - \arrival - \nu_{\sigma_{Q_3} (\numContributingLinks*_{Q_3})} \big) }_{\leq 0, \text{ by \eqref{eq:prf:convexity:3}}}
            \;\stackrel{\mathclap{\eqref{eq:prf:convexity:11},\eqref{eq:prf:convexity:13}}}{\leq}
            F_s (\vec{\mu}^{(j)})
        \end{align*}

        \noindent\textbf{Case 3:} $\bar{\nu}_{Q_j} \geq \arrival$ and $\bar{\nu}_{Q_3} < \arrival$. Then,
        \begin{align*}
            A_j &= F_s (\vec{\mu}^{(j)}) +
            T \big( \bar{\nu}_{Q_3} - \arrival \big) - \exitTime_{\sigma_{Q_j} (\numContributingLinks*_{Q_j}), s}(\vec{\mu}^{(j)}) \big( \bar{\nu}_{Q_j} - \arrival \big)
            + \sum_{i \in I_j \setminus I_3} \nu_{i} \exitTime_{i, s}(\vec{\mu}^{(j)}) - \sum_{i \in I_3 \setminus I_j} \nu_{i} \exitTime_{i, s}(\vec{\mu}^{(j)}) \\
            &\stackrel{\mathclap{\eqref{eq:prf:convexity:2}}}{=}
            F_s (\vec{\mu}^{(j)}) +
            T \big( \bar{\nu}_{Q_3} - \arrival \big)
            - \exitTime_{\sigma_{Q_j} (\numContributingLinks*_{Q_j}), s}(\vec{\mu}^{(j)}) \bigg(\bar{\nu}_{Q_3} + \sum_{i \in I_{j} \setminus I_{3}} \nu_i - \sum_{i \in I_{3} \setminus I_{j}} \nu_i - \arrival \bigg) 
            \\
            &\qquad
            + \sum_{i \in I_j \setminus I_3} \nu_{i} \exitTime_{i, s}(\vec{\mu}^{(j)}) - \sum_{i \in I_3 \setminus I_j} \nu_{i} \exitTime_{i, s}(\vec{\mu}^{(j)})
            \\
            &=
            F_s (\vec{\mu}^{(j)})
            - \sum_{i \in I_j \setminus I_3} \nu_{i} \big(\exitTime_{\sigma_{Q_j} (\numContributingLinks*_{Q_j}), s}(\vec{\mu}^{(j)}) - \exitTime_{i, s}(\vec{\mu}^{(j)})\big)
            + \sum_{i \in I_3 \setminus I_j} \nu_{i} \big(\exitTime_{\sigma_{Q_j} (\numContributingLinks*_{Q_j}), s}(\vec{\mu}^{(j)}) - \exitTime_{i, s}(\vec{\mu}^{(j)})\big)
            \\
            &\qquad
            + \big(T - \exitTime_{\sigma_{Q_j} (\numContributingLinks*_{Q_j}), s}(\vec{\mu}^{(j)}) \big) \big( \bar{\nu}_{Q_3} - \arrival \big)
            \\
            \;&\stackrel{\mathclap{\eqref{eq:prf:convexity:11},\eqref{eq:prf:convexity:12}}}{\leq}\;
            F_s (\vec{\mu}^{(j)})
            +
            \big(T - \exitTime_{\sigma_{Q_j} (\numContributingLinks*_{Q_j}), s}(\vec{\mu}^{(j)}) \big) \underbrace{\big( \bar{\nu}_{Q_3} - \arrival \big)}_{< 0} <  F_s (\vec{\mu}^{(j)})
            .
        \end{align*}

        \noindent\textbf{Case 4:} $\bar{\nu}_{Q_j} \geq \arrival$ and $\bar{\nu}_{Q_3} \geq \arrival$. Then,
        \begin{align*}
            A_j &= F_s (\vec{\mu}^{(j)}) +
            \exitTime_{\sigma_{Q_3} (\numContributingLinks*_{Q_3}), s}(\vec{\mu}^{(j)})  \big( \bar{\nu}_{Q_3} - \arrival \big)
            - \exitTime_{\sigma_{Q_j} (\numContributingLinks*_{Q_j}), s}(\vec{\mu}^{(j)}) \big( \bar{\nu}_{Q_j} - \arrival \big)
            \\
            &\qquad
            + \sum_{i \in I_j \setminus I_3} \nu_{i} \exitTime_{i, s}(\vec{\mu}^{(j)}) - \sum_{i \in I_3 \setminus I_j} \nu_{i} \exitTime_{i, s}(\vec{\mu}^{(j)}) \\
            &\stackrel{\mathclap{\eqref{eq:prf:convexity:2}}}{=} F_s (\vec{\mu}^{(j)}) +
            \exitTime_{\sigma_{Q_3} (\numContributingLinks*_{Q_3}), s}(\vec{\mu}^{(j)})  \big( \bar{\nu}_{Q_3} - \arrival \big)
            - \exitTime_{\sigma_{Q_j} (\numContributingLinks*_{Q_j}), s}(\vec{\mu}^{(j)}) \bigg(\bar{\nu}_{Q_3} + \sum_{i \in I_{j} \setminus I_{3}} \nu_i - \sum_{i \in I_{3} \setminus I_{j}} \nu_i - \arrival \bigg)
            \\
            &\qquad
            + \sum_{i \in I_j \setminus I_3} \nu_{i} \exitTime_{i, s}(\vec{\mu}^{(j)}) - \sum_{i \in I_3 \setminus I_j} \nu_{i} \exitTime_{i, s}(\vec{\mu}^{(j)}) \\
            &=
            F_s (\vec{\mu}^{(j)})
            - \sum_{i \in I_j \setminus I_3} \nu_{i} \big(\exitTime_{\sigma_{Q_j} (\numContributingLinks*_{Q_j}), s}(\vec{\mu}^{(j)}) - \exitTime_{i, s}(\vec{\mu}^{(j)})\big)
            + \sum_{i \in I_3 \setminus I_j} \nu_{i} \big(\exitTime_{\sigma_{Q_j} (\numContributingLinks*_{Q_j}), s}(\vec{\mu}^{(j)}) - \exitTime_{i, s}(\vec{\mu}^{(j)})\big)
            \\
            &\qquad
            + \Big( \exitTime_{\sigma_{Q_3} (\numContributingLinks*_{Q_3}), s} (\vec{\mu}^{(j)}) - \exitTime_{\sigma_{Q_j} (\numContributingLinks*_{Q_j}), s}(\vec{\mu}^{(j)}) \Big) \big( \bar{\nu}_{Q_3} - \arrival \big)
        \end{align*}
        If $\exitTime_{\sigma_{Q_3} (\numContributingLinks*_{Q_3}), s} (\vec{\mu}^{(j)}) - \exitTime_{\sigma_{Q_j} (\numContributingLinks*_{Q_j}), s}(\vec{\mu}^{(j)}) \leq 0$, then we obtain
        \begin{align*}
            A_j &\leq 
            F_s (\vec{\mu}^{(j)})
            - \sum_{i \in I_j \setminus I_3} \nu_{i} \big(\exitTime_{\sigma_{Q_j} (\numContributingLinks*_{Q_j}), s}(\vec{\mu}^{(j)}) - \exitTime_{i, s}(\vec{\mu}^{(j)})\big)
            + \sum_{i \in I_3 \setminus I_j} \nu_{i} \big(\exitTime_{\sigma_{Q_j} (\numContributingLinks*_{Q_j}), s}(\vec{\mu}^{(j)}) - \exitTime_{i, s}(\vec{\mu}^{(j)})\big)
            \\
            \;&\stackrel{\mathclap{\eqref{eq:prf:convexity:11},\eqref{eq:prf:convexity:12}}}{\leq}\;
            F_s (\vec{\mu}^{(j)})
            .
        \end{align*}
        If $\exitTime_{\sigma_{Q_3} (\numContributingLinks*_{Q_3}), s} (\vec{\mu}^{(j)}) - \exitTime_{\sigma_{Q_j} (\numContributingLinks*_{Q_j}), s}(\vec{\mu}^{(j)}) > 0$, then $\sigma_{Q_3} (\numContributingLinks*_{Q_3}) \notin I_j$ and, thus, $\sigma_{Q_3} (\numContributingLinks*_{Q_3}) \in I_3 \setminus I_1$.
        Therefore, we get
        \begin{align*}
            A_j &\leq 
            F_s (\vec{\mu}^{(j)})
            - \sum_{i \in I_j \setminus I_3} \nu_{i} \big(\exitTime_{\sigma_{Q_j} (\numContributingLinks*_{Q_j}), s}(\vec{\mu}^{(j)}) - \exitTime_{i, s}(\vec{\mu}^{(j)})\big)
            + \sum_{\mathclap{\substack{i \in (I_3 \setminus I_j) \\ i \neq \sigma_{Q_3} (\numContributingLinks*_{Q_3})}}} \nu_{i} \big(\exitTime_{\sigma_{Q_j} (\numContributingLinks*_{Q_j}), s}(\vec{\mu}^{(j)}) - \exitTime_{i, s}(\vec{\mu}^{(j)})\big) \\
            &\qquad
            +  \nu_{\sigma_{Q_3} (\numContributingLinks*_{Q_3})} \big(\exitTime_{\sigma_{Q_j} (\numContributingLinks*_{Q_j}), s}(\vec{\mu}^{(j)}) - \exitTime_{\sigma_{Q_3} (\numContributingLinks*_{Q_3}), s}(\vec{\mu}^{(j)})\big)
            \\
            &\qquad
            +
            \Big( \exitTime_{\sigma_{Q_3} (\numContributingLinks*_{Q_3}), s} (\vec{\mu}^{(j)}) - \exitTime_{\sigma_{Q_j} (\numContributingLinks*_{Q_j}), s}(\vec{\mu}^{(j)}) \Big) \big( \bar{\nu}_{Q_3} - \arrival \big) \\
            &=
            F_s (\vec{\mu}^{(j)})
            - \sum_{i \in I_j \setminus I_3} \nu_{i} \big(\exitTime_{\sigma_{Q_j} (\numContributingLinks*_{Q_j}), s}(\vec{\mu}^{(j)}) - \exitTime_{i, s}(\vec{\mu}^{(j)})\big)
            + \sum_{\mathclap{\substack{i \in (I_3 \setminus I_j) \\ i \neq \sigma_{Q_3} (\numContributingLinks*_{Q_3})}}} \nu_{i} \big(\exitTime_{\sigma_{Q_j} (\numContributingLinks*_{Q_j}), s}(\vec{\mu}^{(j)}) - \exitTime_{i, s}(\vec{\mu}^{(j)})\big) \\
            &\qquad
            +
            \Big( \exitTime_{\sigma_{Q_3} (\numContributingLinks*_{Q_3}), s} (\vec{\mu}^{(j)}) - \exitTime_{\sigma_{Q_j} (\numContributingLinks*_{Q_j}), s}(\vec{\mu}^{(j)}) \Big) 
            \underbrace{\big( \bar{\nu}_{Q_3} - \arrival - \nu_{\sigma_{Q_3} (\numContributingLinks*_{Q_3})} \big)}_{\leq 0, \text{ by \eqref{eq:prf:convexity:3}}}
            \\
            &\leq 
            F_s (\vec{\mu}^{(j)})
            - \sum_{i \in I_j \setminus I_3} \nu_{i} \big(\exitTime_{\sigma_{Q_j} (\numContributingLinks*_{Q_j}), s}(\vec{\mu}^{(j)}) - \exitTime_{i, s}(\vec{\mu}^{(j)})\big)
            + \sum_{i \in I_3 \setminus I_j} \nu_{i} \big(\exitTime_{\sigma_{Q_j} (\numContributingLinks*_{Q_j}), s}(\vec{\mu}^{(j)}) - \exitTime_{i, s}(\vec{\mu}^{(j)})\big)
            \\
            \;&\stackrel{\mathclap{\eqref{eq:prf:convexity:11},\eqref{eq:prf:convexity:12}}}{\leq}\;
            F_s (\vec{\mu}^{(j)})
            . \qedhere
        \end{align*}
    \end{proof}

    Finally, we can prove convexity. Using the linearity of the exit times on $P$, we compute
    \begin{align*}
        F_s (\vec{\mu}^{(3)}) &=
        \arrival T + T \cdot \big[\bar{\nu}_{Q_3} - \arrival\big]^-
        - \sum_{i \in I_3} \nu_{i} \exitTime_{i, s}(\vec{\mu}^{(3)})
        + \exitTime_{\sigma_{Q_3} (\numContributingLinks*_{Q_3}), s}(\vec{\mu}^{(3)}) \cdot \big[\bar{\nu}_{Q_3} - \arrival\big]^+
        \\
        &=
        \arrival T + T \cdot \big[\bar{\nu}_{Q_3} - \arrival\big]^-
        - \sum_{i \in I_3} \nu_i \bigl( \xi \exitTime_{i, s} (\vec{\mu}^{(1)}) + (1-\xi) \exitTime_{i, s} (\vec{\mu}^{(2)}) \bigr)
        \\
        &\qquad
        +
        \bigl( \xi \exitTime_{\sigma_{Q_3} (\numContributingLinks*_{Q_3}), s} (\vec{\mu}^{(1)}) + (1-\xi) \exitTime_{\sigma_{Q_3} (\numContributingLinks*_{Q_3}), s} (\vec{\mu}^{(2)}) \bigr) \cdot \big[\bar{\nu}_{Q_3} - \arrival\big]^+
        \\
        &=
        \xi \biggl(
        \arrival T + T \cdot \big[\bar{\nu}_{Q_3} - \arrival\big]^-
            - \sum_{i \in I_3} \nu_{i} \exitTime_{i, s}(\vec{\mu}^{(1)})
            +
            \exitTime_{\sigma_{Q_3} (\numContributingLinks*_{Q_3}), s}(\vec{\mu}^{(1)})
            \cdot \big[\bar{\nu}_{Q_3} - \arrival\big]^+
        \biggr)
        \\
        &\qquad +
        (1 - \xi) \biggl(
        \arrival T + T \cdot \big[\bar{\nu}_{Q_3} - \arrival\big]^-
            - \sum_{i \in I_3} \nu_{i} \exitTime_{i, s}(\vec{\mu}^{(2)})
            +
            \exitTime_{\sigma_{Q_3} (\numContributingLinks*_{Q_3}), s}(\vec{\mu}^{(2)})
            \cdot \big[\bar{\nu}_{Q_3} - \arrival\big]^+
        \biggr) \\
        &\stackrel{\mathclap{(*)}}{\leq}
        \xi F_s (\vec{\mu}^{(1)}) + (1- \xi) F_s (\vec{\mu}^{(2)})
        ,
    \end{align*}
    where we used \Cref{claim:prf:convexity} in the inequality marked with $(*)$.
\end{proof}

\bibliographystyle{abbrvnat}
\bibliography{soda}

\end{document}